\documentclass[journal,a4,twocolumn]{IEEEtran}
\usepackage{amsmath,graphicx}
\usepackage{color}
\usepackage{graphicx}
\usepackage{epstopdf}
\usepackage{amsmath}
\usepackage{amssymb}
\usepackage{mathrsfs}
\usepackage{hyperref}
\usepackage{tikz}
\usepackage{bm}
\usepackage[english]{babel}
\usepackage{cite}
\usepackage{rotfloat}
\usepackage{mathtools}
\usepackage[font=normalsize,labelfont=bf]{caption}
\usepackage{amsmath}
\usepackage{makecell}
\usepackage[ruled, vlined, linesnumbered]{algorithm2e}
\usepackage{multirow}
\usepackage{subfigure}
\usepackage{booktabs}
\usepackage{colortbl}
\usepackage{multirow}
\usepackage{hhline}
\usepackage{stfloats}
\usepackage{multicol}
\usepackage{bbm}
\usepackage{cases}
\graphicspath{ {Figures/} }
\newcommand{\T}{{\scriptscriptstyle\mathsf{T}}}
\renewcommand{\H}{{\scriptscriptstyle\mathsf{H}}}

\newsavebox{\foobox}

\definecolor{kugray5}{RGB}{224,224,224}

\usepackage[normalem]{ulem}
\newcommand\rsout{\bgroup\markoverwith
	{\textcolor{red}{\rule[0.5ex]{2pt}{0.8pt}}}\ULon}



\makeatletter
\newcommand{\ALOOP}[1]{\ALC@it\algorithmicloop\ #1%
	\begin{ALC@loop}}
	\newcommand{\ENDALOOP}{\end{ALC@loop}\ALC@it\algorithmicendloop}

\makeatother

\usepackage{etoolbox}
\let\mybibitem\bibitem
\renewcommand{\bibitem}[1]{%
	\ifstrequal{#1}{nature}
	{\color{blue}\mybibitem{#1}}
	{\color{black}\mybibitem{#1}}%
}

\graphicspath{ {Figures/} }

\newtheorem{theorem}{\textbf{Theorem}}
\newtheorem{remark}{\textbf{Remark}}

\newtheorem{corollary}{\textbf{Corollary}}
\newtheorem{proof}{Proof}

\DeclareCaptionLabelSeparator{periodspace}{.\quad}

\renewcommand{\figurename}{Fig.}
\addto\captionsenglish{\renewcommand{\figurename}{Fig.}}
\newcommand\nbthis{\addtocounter{equation}{1}\tag{\theequation}}

\newcommand{\norm}[1]{\left\lVert#1\right\rVert} 
\newcommand{\abs}[1]{\left|#1\right|} 

\newcommand{\tr}[1]{\mathtt{tr}\left(#1\right)} 

\newcommand{\diag}[1]{\mathtt{diag}\left\{#1\right\}} 

\newcommand{\re}[1]{\mathtt{Re}\left(#1\right)}
\newcommand{\im}[1]{\mathtt{Im}\left(#1\right)}
\allowdisplaybreaks

\newcommand{\mean}[1]{\mathbb{E} \left\{#1\right\}}

\usepackage{setspace}

\newcommand{\mR}{{\mathbf{R}}}
\newcommand{\mH}{{\mathbf{H}}} 

\newcommand{\mA}{{\mathbf{A}}}
\newcommand{\mS}{{\mathbf{S}}}
\newcommand{\mW}{{\mathbf{W}}}

\newcommand{\mI}{\textbf{\textbf{I}}}

\newcommand{\mX}{{\mathbf{X}}}
\newcommand{\mY}{{\mathbf{Y}}}

\newcommand{\mF}{{\mathbf{F}}}

\newcommand{\mN}{{\mathbf{N}}}

\newcommand{\mJ}{{\mathbf{T}}}

\newcommand{\setC}{\mathbb{C}} 
\newcommand{\setR}{\mathbb{R}}

\newcommand{\ve}{{\mathbf{e}}} 
\newcommand{\vs}{{\mathbf{s}}}
\newcommand{\vx}{{\mathbf{x}}}
\newcommand{\vy}{{\mathbf{y}}}

\newcommand{\vn}{{\mathbf{n}}}
\newcommand{\vu}{{\mathbf{u}}}
\newcommand{\vz}{{\mathbf{z}}} 
\newcommand{\vh}{{\mathbf{h}}}

\newcommand{\vb}{{\mathbf{b}}}
\newcommand{\vw}{{\mathbf{w}}}
\newcommand{\va}{{\mathbf{a}}}

\newcommand{\vp}{{\mathbf{p}}}

\newcommand{\vj}{{\mathbf{t}}}

\newcommand{\vf}{\mathbf{f}}

\def\bGamma{{\pmb{\Gamma}}}

\def\b0{{\pmb{0}}}

\newcommand{\Nr}{N_\mathtt{r}}
\newcommand{\Nt}{N_\mathtt{t}}

\newcommand{\Nth}{N_\mathtt{ty}}
\newcommand{\Ntv}{N_\mathtt{tz}}

\newcommand{\ah}{\va_\mathtt{y}} 
\newcommand{\av}{\va_\mathtt{z}}

\newcommand{\sigmac}{\sigma_{\mathtt{c}}^2}
\newcommand{\sigmas}{\sigma_{\mathtt{s}}^2}

\newcommand{\noise}{\sigma^2}

\newcommand{\Jtt}{T_{\theta \theta}}
\newcommand{\Jtp}{T_{\theta \phi}}
\newcommand{\Jta}{\vj_{\theta \tilde{\bm{\beta_\mathtt{s}}}}}
\newcommand{\Jpp}{T_{\phi \phi}}
\newcommand{\Jpa}{\vj_{\phi \tilde{\bm{\beta_\mathtt{s}}}}}
\newcommand{\Jaa}{\mJ_{\tilde{\bm{\beta_\mathtt{s}}}\tilde{\bm{\beta_\mathtt{s}}}}}

\newcommand{\xibf}{\bm{\xi}_{\mathtt{bf}}}
\newcommand{\lambdabf}{{\lambda_{\mathtt{bf}}}} 
\newcommand{\zetabf}{{\bm{\zeta}_{\mathtt{bf}}}}
\newcommand{\adotphi}{\dot{\va}_{\phi}}
\newcommand{\adottheta}{\dot{\va}_{\theta}}
\newcommand{\bdotphi}{\dot{\vb}_{\phi}}
\newcommand{\bdottheta}{\dot{\vb}_{\theta}}
\newcommand{\vahat}{\va (\hat{\theta}, \hat{\phi} )}

\usepackage{enumitem}
\usepackage{marginnote}
\begin{document}
        \setlength{\abovedisplayskip}{4.5pt}
	\setlength{\belowdisplayskip}{4.5pt}
	\title{\huge{Massive MIMO ISAC Under Target-Angle Uncertainty:
CRLB Outage Analysis and Robust Resource Allocation}}
	\author{
		Smriti Uniyal, \IEEEmembership{Graduate Student Member, IEEE}, Tianyu Fang, \IEEEmembership{Graduate Student Member, IEEE}, \\Van-Dinh~Nguyen, \IEEEmembership{Senior Member, IEEE}, 
		Hien~Quoc~Ngo, \IEEEmembership{Fellow, IEEE},\\ Markku Juntti, \IEEEmembership{Fellow, IEEE}, and  Nhan~Thanh~Nguyen, \IEEEmembership{Senior Member, IEEE}\vspace{-0.5cm}
	}
	\maketitle
	
\begin{abstract} 
In integrated sensing and communications (ISAC), the same spectral and hardware resources are shared for two functionalities. Most ISAC designs assume perfect target-angle information neglecting angle estimation errors, which introduce steering-vector mismatches, degrade sensing accuracy, and may invalidate deterministic sensing guarantees. This paper investigates monostatic massive multiple-input-multiple-output (MIMO) ISAC systems under imperfect target-angle estimates. We derive closed-form expressions for the Cramér-Rao lower bounds (CRLBs) of target azimuth and elevation estimates in the presence of angle uncertainty. We  characterize the cumulative distribution functions and outage probabilities of the CRLBs under Gaussian, generalized uniform, and von Mises angle-error models. Our analysis reveals that, in the small-error regime, the CRLBs increase quadratically with the angle errors due to transmit steering-vector mismatch. To ensure reliable sensing, we propose a robust power allocation framework that jointly optimizes pilot training and communications/sensing transmission powers to maximize the communications sum rate while satisfying CRLB outage constraints. The resulting nonconvex problem is solved using an alternating-optimization algorithm based on successive convex approximation. Numerical results validate the developed analysis and show that the proposed robust design reduces azimuth and elevation CRLB outage probabilities by up to $60\%$ compared with conventional non-robust schemes. It attains up to $45\%$ higher sum rates than the non-robust design under strict CRLB thresholds.
\end{abstract}
	
\begin{IEEEkeywords}
	CRLB outage probability, integrated sensing and communications, massive MIMO, robust resource allocation, sensing reliability, target-angle uncertainty.
\end{IEEEkeywords}
\IEEEpeerreviewmaketitle
	
\section{Introduction}

\IEEEPARstart{I}{ntegrated} sensing and communications (ISAC) is a key technology for future wireless systems, enabling simultaneous communications and sensing using shared spectral and hardware resources~\cite{liu2021cramer}. Equipped with large antenna arrays, massive multiple-input multiple-output (mMIMO) systems are particularly attractive for ISAC due to their high spatial resolution and beamforming gains, which benefit both communications and target sensing~\cite{liao,nhan_isac_power_allocation}. In particular, angular information, such as angle-of-departure and angle-of-arrival, plays a fundamental role in sensing, localization, and positioning systems, and has been incorporated into standardized frameworks such as 3GPP NR~\cite{chen2026mismatch}.
Consequently, accurate angle information is essential for beamforming design and sensing-performance evaluation in mMIMO-ISAC systems.

Most existing ISAC designs assume perfect target-angle information at the base station (BS). In practice, however, target angles are estimated from noisy radar observations and are subject to array calibration errors, synchronization impairments, and model mismatches~\cite{li2008target,he2010crlb,nguyen2011modeling}. These imperfections induce steering-vector mismatches between the presumed and actual target directions, which can significantly degrade sensing accuracy and distort performance metrics such as the Cramér--Rao lower bound (CRLB)~\cite{jia2024imperfect,chen2026mismatch}. Since ISAC systems jointly allocate resources to communications and sensing tasks, such sensing degradation also affects the communications--sensing tradeoff. This motivates the development of robust resource-allocation strategies that explicitly account for target-angle uncertainty in mMIMO-ISAC systems.

\subsection{Related Works}
The design and optimization of ISAC systems have attracted significant research attention in recent years. A common approach is to characterize communications and sensing performances through the achievable rate and the CRLB, respectively~\cite{geng2024joint,bian2025fundamental,guo2025fundamental,guang2025communication,uniyal2025sum}. Based on these metrics, various communications--sensing tradeoff analyses and resource-allocation frameworks have been developed. For example, Gan \emph{et al.}~\cite{gan2024coverage} proposed a stochastic geometry framework to characterize sensing coverage and communications performance, while Soltani \emph{et al.}~\cite{soltani2025stochastic} studied sensing reliability through the CRLB outage probability and ergodic CRLB. CRLB-constrained beamforming and resource-allocation problems have subsequently been addressed using semidefinite relaxation~\cite{liu2021cramer}, alternating optimization~\cite{xia2025symbiotic}, successive convex approximation (SCA)~\cite{fang2026optimal}, and manifold optimization~\cite{zargari2024riemannian}.

More recently, ISAC has been integrated with mMIMO architectures, where large antenna arrays provide both high sensing resolution and substantial communications beamforming gains~\cite{liao,nhan_isac_power_allocation,nguyen_jsac_energy_efficiency,topal2024multi,meng_mmimo}. Existing studies have investigated power allocation~\cite{liao,nhan_isac_power_allocation,nguyen_jsac_energy_efficiency}, multi-target sensing~\cite{topal2024multi}, and cooperative mMIMO-ISAC deployments~\cite{meng_mmimo}. In particular, closed-form expressions for the achievable rate and sensing CRLB derived in~\cite{nhan_isac_power_allocation} enabled efficient communications-sensing resource-allocation designs under sensing constraints.

The above works generally assume perfect target-angle information at the BS. In practice, however, target-angle estimates are subject to location uncertainty~\cite{lyu2024dualrobust}, steering-vector errors~\cite{li2008target}, hardware impairments~\cite{chen2023modeling}, calibration errors~\cite{wang2021ota}, and beampattern mismatches~\cite{chen2026mismatch,jia2024imperfect}. These imperfections create mismatches between the presumed and actual target directions, thereby degrading sensing and beamforming performance. In radar systems, the effects of phase-synchronization and modeling errors have been quantified through CRLB analysis~\cite{he2010crlb,nguyen2011modeling}, while steering-vector mismatches were shown to cause substantial radar-performance degradation~\cite{jia2024imperfect}. For ISAC systems, robust beamforming designs have been proposed under imperfect angle and channel information using worst-case and outage-constrained formulations~\cite{hsieh2025robust,zhang2026mmse}. More recently, Chen \emph{et al.}~\cite{chen2026mismatch} investigated angle-estimation degradation caused by beampattern mismatch and proposed a calibration framework to mitigate the resulting errors.

Despite the advances, the impact of target-angle uncertainty on sensing performance and resource allocation in mMIMO-ISAC systems remains largely unexplored. In particular, the degradation of sensing CRLBs and their outage behavior under random angle errors have not been characterized. This  limits the development
of robust communications-sensing resource-allocation strategies and motivates
the present work.

\subsection{{Motivations and Contributions}}
  
Most existing CRLB-based ISAC frameworks assume perfect target-angle information and optimize communications-sensing resources using deterministic CRLB constraints \cite{liu2021cramer,bian2025fundamental,guang2025communication,soltani2025stochastic,xia2025symbiotic}. In practice, however, target angles are estimated from noisy radar observations and are inevitably subject to estimation errors. Such uncertainties induce steering-vector mismatches, causing the sensing CRLB to become a random quantity rather than a deterministic performance metric. Consequently, conventional CRLB-constrained designs may fail to provide reliable sensing guarantees, even when the prescribed CRLB requirements are satisfied. This challenge becomes particularly relevant in practical mMIMO systems employing uniform planar arrays (UPAs)\cite{nhan_isac_power_allocation,liao,uniyal2025sum}
, where sensing performance depends on both azimuth and elevation angles. These observations suggest that sensing reliability should be characterized via the statistical behavior and outage probability of the CRLB, and that resource-allocation strategies should explicitly account for target-angle uncertainty.


Motivated by the above considerations, we investigate a monostatic downlink mMIMO-ISAC system with a UPA-equipped BS under target-angle uncertainty. We first characterize the CRLBs and their outage probabilities for both azimuth and elevation angles. Building upon these results, we develop a robust power allocation framework that maximizes the communications rate while satisfying probabilistic sensing-reliability constraints. Leveraging the channel-hardening property of mMIMO systems, the proposed design depends only on large-scale fading coefficients, making it practically implementable. 
Our main contributions are summarized as follows:
\begin{itemize}
\item We derive closed-form expressions for the sensing CRLBs of target azimuth and elevation estimation in a UPA-equipped mMIMO-ISAC system under target-angle uncertainty. The proposed expressions explicitly account for angle errors, are applicable to both maximum-ratio transmission (MRT) and zero-forcing (ZF) precoding, and depend only on large-scale fading coefficients. 

\item We characterize the statistical behavior of the sensing CRLBs under target-angle uncertainty. In particular, closed-form approximations for the CRLB distributions and outage probabilities are derived for Gaussian, generalized-uniform, and von Mises angle-error models, providing a unified framework for sensing reliability analysis in mMIMO-ISAC systems.

\item We provide fundamental insights into the impact of target-angle errors on sensing performance. We show that the CRLBs increase quadratically with the angle errors in the small-error regime due to steering-vector mismatch, while the CRLB outage probabilities vanish as the number of transmit or receive antennas grows large.

\item We formulate a robust power allocation problem that jointly optimizes pilot-training and communications/ sensing transmission powers to maximize the communications sum rate subject to sensing outage and total power constraints. The resulting nonconvex problem is solved using an alternating optimization (AO)-SCA algorithm. 

\item Extensive simulations validate the developed analysis and demonstrate the effectiveness of the proposed design. In particular, under severe target-angle uncertainty, the proposed robust scheme reduces the azimuth and elevation CRLB outage probabilities by up to $60\%$ compared with conventional non-robust designs. {Furthermore, the proposed robust scheme attains up to
$45\%$ higher sum rates than the non-robust
design at strict CRLB thresholds and less stringent outage
thresholds.}

	\end{itemize}

Notably, the closely related works in \cite{nhan_isac_power_allocation,uniyal2026} provide closed-form expressions for the CRLBs under perfect target-angle information. In contrast, we present closed-form expressions for the CRLBs under target-angle uncertainty and obtain corresponding approximations in the small-error regime. We also provide unified expressions for the outage probability of the CRLBs under different angle error distributions. Unlike \cite{nhan_isac_power_allocation,uniyal2026}, we devise a robust power allocation scheme to jointly optimize the pilot and signal transmission/probing powers. Thus, the existing results for the CRLBs and the optimization framework in \cite{nhan_isac_power_allocation,uniyal2026} are not directly applicable to our design. 

	\subsection{{Paper Organization and Notation}}
	The rest of the paper is organized as follows. In Section \ref{sec_system_model}, we present the ISAC signal models and transmit precoder. Section \ref{sec_perf_analysis}, presents the closed-form expressions for ISAC performance metrics. Section \ref{sec_opt} details the proposed robust power allocation method. Simulation results are provided in Section \ref{section:numerical_results}, while Section \ref{sec:conclusion} concludes the paper. 

\emph{Notations:}	
	Vectors and matrices are represented by lowercase
and uppercase bold letters, respectively. We use $\mathbb{E}\{\cdot\}$ for expectation operator, while
$\mathtt{tr}(\cdot)$ and $\mathtt{vec}(\cdot)$ denote the trace and
vectorization operators, respectively; $(\cdot)^*$, $(\cdot)^{\T}$, and $(\cdot)^{\H}$ denote the conjugate,
transpose, and conjugate-transpose operators, respectively. The notation $\dot{y}_{x}$ denotes the partial derivative of $y$
with respect to (w.r.t.) $x$, i.e., $\dot{y}_{x}\triangleq \partial y/\partial x$,
whereas $\odot$ and $\otimes$ denote the Hadamard and Kronecker products, respectively. We use $|\cdot|$ and $\|\cdot\|$ for the complex modulus and Euclidean norm, respectively, and $\mathcal{CN}(0,\sigma^2)$ for the zero-mean complex Gaussian distribution with variance $\sigma^2$.

\begin{figure}[t]
	\centering
	\includegraphics[scale=0.25]{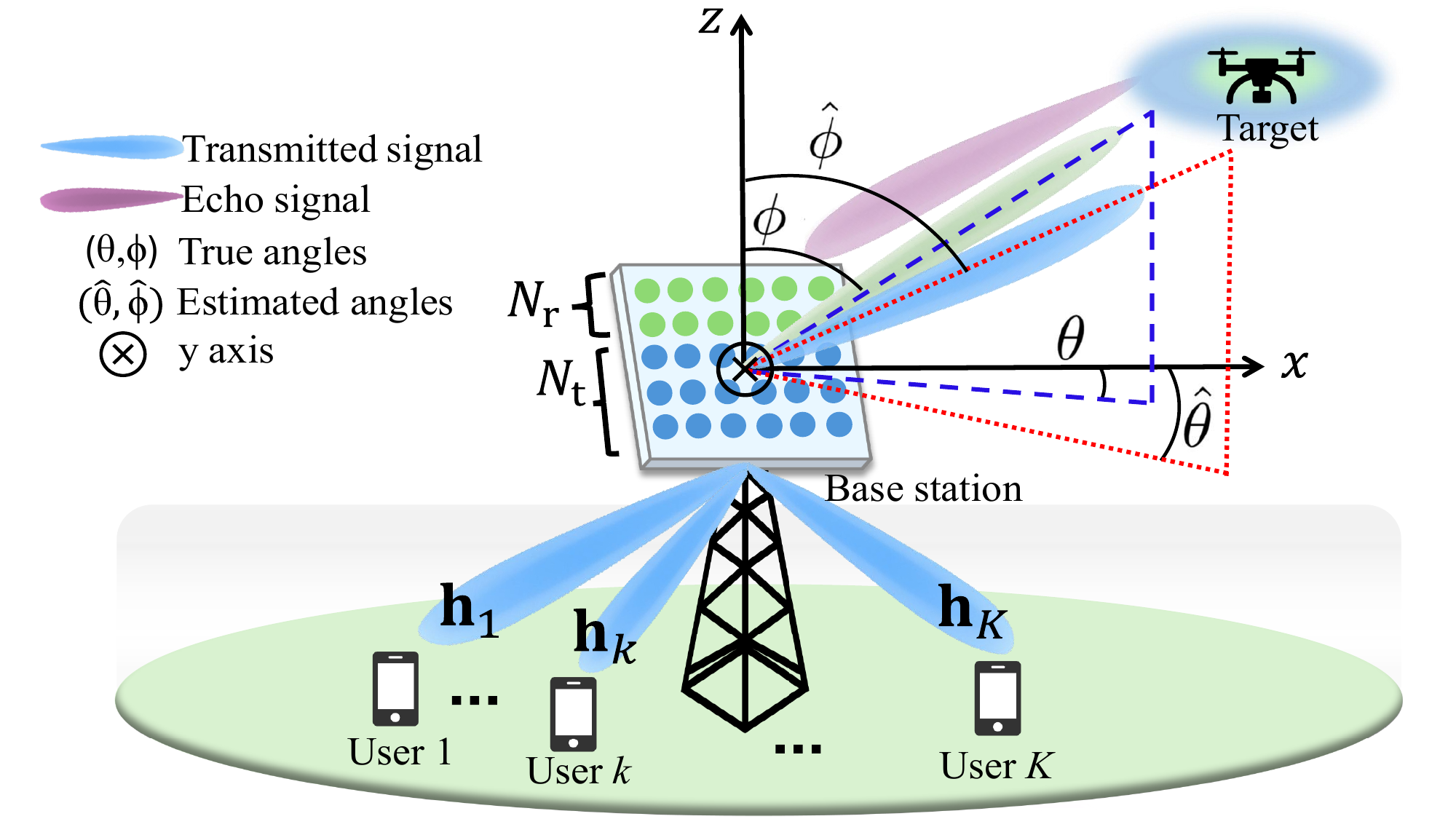}  
	\caption{\small  Monostatic mMIMO ISAC system under target-angle error.}
	\label{fig:schematic}
\end{figure}
    
\section{Signal Model}\label{sec_system_model}
We consider a monostatic downlink mMIMO-ISAC system consisting of a BS, $K$ single-antenna communications users, and a sensing target, as illustrated in Fig.~\ref{fig:schematic}. The BS is equipped with a large UPA comprising $N_{\mathtt t}$ transmit antennas and $N_{\mathtt r}$ receive antennas. 
The BS simultaneously supports downlink communications and target sensing through a shared transmit waveform. The signal reflected by the target is received by the $N_{\mathtt r}$ receive antennas for sensing processing.
	
\subsection{Communications Model}
\subsubsection{Uplink Pilot Training}
We consider a time-division duplex (TDD) protocol, where, for each coherence interval, uplink pilot training is first performed for channel estimation, followed by downlink communications and sensing transmission. Let $\tau_{\mathtt c}$ and $\tau_{\mathtt p}$ denote the lengths of the coherence interval and pilot sequence, respectively, with $\tau_{\mathtt p}<\tau_{\mathtt c}$. During the uplink training phase, all $K$ users simultaneously transmit pilot signals to the BS. Specifically, user $k$ transmits the pilot sequence $\sqrt{\tau_{\mathtt p}\vartheta_k}\vp_k\in\setC^{\tau_{\mathtt p}\times 1}$, where $\norm{\vp_k}^2=1$ and $\vartheta_k$ denotes the pilot transmit power. The total pilot-training power is therefore given by
        \begin{align*}
            P_{\mathtt{p}}(\bm{\vartheta}) = \sum_{k=1}^K \vartheta_k \norm{\vp_k}^2 = \bm{1}_K^\T \bm{\vartheta}, \nbthis \label{eq_pow_pilot}
        \end{align*}
        where $\bm{\vartheta} \triangleq [\vartheta_1, \ldots, \vartheta_K]^\T$, and the last equality follows from the fact that $\norm{\vp_k}^2=1, \forall k$.
        
Let $\vh_k\in\mathbb{C}^{N_{\mathtt t}\times 1}$ denote the channel between the BS and user $k$, modeled as
	\begin{align}
		\vh_k = \beta_k^{1/2} \bar{\vh}_k,\label{eq_channel_model}
	\end{align} 
where $\beta_k$ is the large-scale fading coefficient and $\bar{\vh}_k\sim\mathcal{CN}(\mathbf{0},\mI_{N_{\mathtt t}})$ represents the small-scale Rayleigh fading component. The pilot signal received at the BS is expressed as
	\begin{align*}
		\mY_{\mathtt{p}} = \sqrt{\tau_{\mathtt{p}}} \sum_{k=1}^{K} \sqrt{\vartheta_k} \vh_k \vp_k^\H + \mN_{\mathtt{p}},
	\end{align*}
    where $\mN_{\mathtt{p}}\in \mathbb{C}^{\Nt \times \tau_\mathtt{p}}$ is the additive white Gaussian noise (AWGN) matrix at the BS, with entries distributed as $\mathcal{CN}(0,\sigma^{2})$. 
	To estimate $\vh_k$, $\mY_{\mathtt{p}}$ is projected to $\vp_k$, yielding
	\begin{align*}
		\breve{\vy}_k = \mY_{\mathtt{p}} \vp_k = \sqrt{\tau_{\mathtt{p}} \vartheta_k} \vh_k + \sqrt{\tau_{\mathtt{p}} } \sum_{j\neq k} \sqrt{\vartheta_j} \vh_j \vp_j^\H \vp_k + \breve{\vn}_{\mathtt{p}},
	\end{align*}
	where $\breve{\vn}_{\mathtt{p}} = \mN_{\mathtt{p}} \vp_k$. 
    Let $\hat{\vh}_k$ denote the channel estimate of $\vh_k$ using the minimum mean square error (MMSE). By \cite{nhan_isac_power_allocation,Ngo2013}, $\hat{\vh}_k$ is 
	\begin{align*}
		\hat{\vh}_k &= \frac{\sqrt{\tau_{\mathtt{p}} \vartheta_k} \beta_k}{\tau_{\mathtt{p}} \sum_{j=1}^{K} \vartheta_j \beta_j  \abs{\vp_j^\H \vp_k}^2 + \noise} \breve{\vy}_k ,\nbthis \label{eq_est_channel}
	\end{align*}
	where $\hat{\vh}_k \sim \mathcal{CN}(0, \xi_k(\bm{\vartheta}) \mI_{\Nt})$,  with 
	\begin{align}
	\xi_k(\bm{\vartheta}) \triangleq \frac{\tau_{\mathtt{p}} \vartheta_k \beta_k^2}{\tau_{\mathtt{p}} \sum_{j=1}^{K} \vartheta_j \beta_j  \abs{\vp_j^\H \vp_k}^2 + \noise}. \label{eq_xi_k}
	\end{align}
From \eqref{eq_est_channel}, we can model $\vh_k$ as
$\vh_k=\hat{\vh}_k+\ve_k$, where $\ve_k$ is the channel estimation error. It follows that $\ve_k$ is independent of $\hat{\vh}_k$ and is distributed as
$\ve_k \sim \mathcal{CN}(\mathbf{0}, \epsilon_k(\bm{\vartheta}) \mI_{\Nt})$, where
	\begin{align*}
\!\!\epsilon_k(\bm{\vartheta}) &\!\triangleq\!\beta_k \!-\! \xi_k(\bm{\vartheta}) \!= \!\frac{\beta_k \left(\tau_{\mathtt{p}}  \sum_{j\neq k} \vartheta_j \beta_j  \abs{\vp_j^\H \vp_k}^2 + \noise\right)}{\tau_{\mathtt{p}} \sum_{j=1}^{K} \vartheta_j \beta_j  \abs{\vp_j^\H \vp_k}^2 + \noise}. \nbthis \label{eq_epsilon_k}
	\end{align*}
    
\subsubsection{Communications Signal Model}
Let $\vs_m\triangleq [s_{1m},\ldots,s_{Km}]^\T\in\mathbb{C}^{K\times 1}$ denote the transmit symbol vector at time slot $m$, with $\mean{\vs_m \vs_m^{\H}}=\mI_K$. Defining $\mS\triangleq[\vs_1,\ldots,\vs_M]\in\mathbb{C}^{K\times M}$ over a frame of length $M$, with $M < \tau_\mathtt{c}$, we have
$\mS\mS^\H \approx M\mI_K$
for sufficiently large $M$ by the law of large numbers~\cite{liu2021cramer}.

The BS adopts a linear dual-functional precoder
$\mW = \mF \bm{\Gamma} + \vu \bm{\eta}^\T
        \in \mathbb{C}^{\Nt \times K}$,
where $\mF \triangleq [\vf_1,\ldots,\vf_K] \in \mathbb{C}^{\Nt \times K}$
is the communications precoding matrix, and
$\bm{\Gamma} \triangleq \diag{\sqrt{\gamma_1}, \ldots, \sqrt{\gamma_K}}
\in \setC^{K \times K}$ is the corresponding power allocation matrix for the communications users.
In particular, $\vf_k$ and $\gamma_k$ denote the transmit beamformer and power allocated
to user $k$, respectively. Moreover, $\vu \in \mathbb{C}^{\Nt \times 1}$ and
$\bm{\eta} = [\sqrt{\eta_1},\ldots,\sqrt{\eta_K}]^\T
\in \setC^{K \times 1}$ represent the sensing beamformer and the sensing power allocation
coefficients associated with each data stream, respectively.
The dual-functional beamformer vector of user $k$ corresponds to the $k$-th column of $\mW$, \textit{i.e.}
$\vw_k = \sqrt{\gamma_k}\vf_k + \sqrt{\eta_k}\vu$.
The signal vector transmitted at the $m$-th time slot is
$\vx_m = \mW\vs_m = \sum_{k=1}^{K}\vw_k s_{km}$, and the transmit waveform 
is $\mX \triangleq [\vx_1,\ldots,\vx_M] \in \setC^{\Nt \times M}$,
or equivalently, $\mX = \mW\mS$. As a result, the signal matrix received by the communications users is written as
	\begin{align*}
		\mY_{\mathtt{c}} = \mH^\H \mX + \mN_{\mathtt{c}} = \mH^\H \mW \mS + \mN_{\mathtt{c}} \  \in \mathbb{C}^{K\times M},\nbthis \label{eq_comm_model}
	\end{align*}
where $\mH \triangleq [\vh_1, \ldots,\vh_K] \in \mathbb{C}^{\Nt \times K}$, and $\mN_{\mathtt{c}}$ denotes the AWGN
matrix with independent entries distributed as
$\mathcal{CN}(0,\sigma_{\mathtt{c}}^{2})$, and $\sigma_{\mathtt{c}}^{2}$
is the noise variance at the communications users.
The received signal at user $k$ during time slot $m$ is 
\begin{align*}
	{y_{\mathtt{c}}}_{km} = \vh_k^\H \vw_k s_{km}
	+ \vh_k^\H \sum\nolimits_{j \neq k} \vw_j s_{jm}
	+ {n_{\mathtt{c}}}_{km}. \nbthis \label{eq_y_kl}
\end{align*}

\subsection{Radar Model}
The BS simultaneously serves the communications users and probes the sensing target. The target direction is characterized by the azimuth angle $\theta\in[-\pi,\pi]$ and elevation angle $\phi\in[-\pi/2,\pi/2]$, while the corresponding echo signal is received by the BS. The UPA is centered at the origin and deployed on the $\mathtt{y}$--$\mathtt{z}$ plane, as shown in  Fig.~\ref{fig:schematic}. The transmit steering vector is given by
$\va(\theta,\phi)=\ah(\theta,\phi)\otimes\av(\phi)$,
where
\begin{align}
\ah &=
\left[
e^{-j\pi\frac{\Nth-1}{2}\upsilon_{\mathtt{y}}},
e^{-j\pi\frac{\Nth-3}{2}\upsilon_{\mathtt{y}}},
\ldots,
e^{j\pi\frac{\Nth-1}{2}\upsilon_{\mathtt{y}}}
\right]^{\T}, \label{eq_ath}\\
\av &=
\left[
e^{-j\pi\frac{\Ntv-1}{2}\upsilon_{\mathtt{z}}},
e^{-j\pi\frac{\Ntv-3}{2}\upsilon_{\mathtt{z}}},
\ldots,
e^{j\pi\frac{\Ntv-1}{2}\upsilon_{\mathtt{z}}}
\right]^{\T}. \label{eq_atv}
\end{align}
Herein, $\upsilon_{\mathtt{y}}=\sin(\theta)\sin(\phi)$ and
$\upsilon_{\mathtt{z}}=\cos(\phi)$, while $\Nth$ and $\Ntv$
are the numbers of transmit antennas along the $\mathtt{y}$-axis and
$\mathtt{z}$-axis of the UPA, respectively, with $\Nt=\Nth\Ntv$.
The receive steering vector $\vb(\theta,\phi)$ corresponding to the $\Nr$ receive antennas at the BS is modeled analogously. Based on this, the round-trip channel between the BS and the target is given as
\begin{align}
    \mA(\theta,\phi)=\vb(\theta,\phi)\va^{\H}(\theta,\phi).
    \label{eq_A}
\end{align}
Then, the echo signal received at the BS from the target is
\begin{align*}
    \mY_{\mathtt{s}}
    = \beta_\mathtt{s}\mA(\theta,\phi)\mX + \mN_{\mathtt{s}}
    \in \setC^{\Nr \times M},
    \nbthis \label{eq_radar_model}
\end{align*}
where $\beta_{\mathtt{s}}$ denotes the complex reflection coefficient of the
target, and $\mN_{\mathtt{s}}\in\setC^{\Nr\times M}$ is the sensing AWGN
matrix with independent entries distributed as $\mathcal{CN}(0,\sigmas)$. Here, $\sigmas$ is the noise variance at the sensing receiver.
In the subsequent analysis, we omit $(\theta,\phi)$ for notational convenience.

  
\subsection{Dual-Functional Transmit Precoder}
From $\mX = \mW \mS$, the echo signal in \eqref{eq_radar_model} is rewritten as
	\begin{align*}
		\!\!\mY_{\mathtt{s}} = \beta_\mathtt{s} \vb \va^\H  \left(\mF \bm{\Gamma} + \vu \bm{\eta}^\T\right) \!\mS \!+ \!\mN_{\mathtt{s}} = \beta_\mathtt{s} \vb \va^\H \vu \bm{\eta}^\T  \mS + \breve{\mN}_{\mathtt{s}}, \nbthis \label{eq_radar_2}
	\end{align*}
	where $\breve{\mN}_{\mathtt{s}} \triangleq \beta_\mathtt{s} \vb \va^\H \mF \bm{\Gamma} \mS + \mN_{\mathtt{s}}$. With perfect target-direction knowledge, the sensing beamformer is ideally choosen as $\vu=\va(\theta,\phi)$.
    However, since the target direction is precisely the quantity to be estimated, the BS does not have its perfect knowledge. We assume that, prior to the sensing and communications phase, coarse target-angle estimates are obtained during an initial radar search phase \cite{liu2021cramer,nguyen_jsac_energy_efficiency}. Based on these coarse estimates, the BS refines the target angles. To this end, the BS is assumed to know noisy estimates of the target angles. More precisely, the
    sensing beamformer is designed using the estimated angles, $\vu=\va(\hat{\theta},\hat{\phi})$, with
$\hat{\theta}\triangleq\theta+\varepsilon_{\theta}$ and
$\hat{\phi}\triangleq\phi+\varepsilon_{\phi}$. Here,
$\varepsilon_{\theta}$ and $\varepsilon_{\phi}$ denote the azimuth and
elevation angle errors, respectively. We further define the angle-error vector as
$\boldsymbol{\varepsilon}\triangleq[\varepsilon_{\theta},\varepsilon_{\phi}]^{\T}$,
whose statistical properties will be specified in
Section~\ref{section:angleerror_outage}.
For communications, widely-used linear precoders MRT and ZF are deployed, expressed as
	\begin{align*} 
		\mF=\mF_\mathtt{bf} = 
		\begin{cases}
			\hat{\mH}, &\text{for}\ \mathtt{bf=MRT}, \\
			\hat{\mH} (\hat{\mH}^\H \hat{\mH})^{-1},&\text{for}\ \mathtt{bf=ZF},
		\end{cases} \nbthis \label{eq_W_linear}
	\end{align*}
	where $\hat{\mH} = [\hat{\vh}_1, \ldots,\hat{\vh}_K] \in \mathbb{C}^{\Nt \times K}$. As a result, the linear dual-function transmit precoder is
		$\mW_\mathtt{bf} =  \mF_\mathtt{bf}  \bGamma + {\vahat} \bm{\eta}^\T$, with $\mathtt{bf} \in \{\mathtt{MRT}, \mathtt{ZF}\}$. The average transmit power associated with this precoder is given as \cite{nhan_isac_power_allocation}
		\begin{align*}
			P &= \Nt \xibf^\T \bm{\gamma} + \Nt \rho, \nbthis \label{eq_power}
		\end{align*}
		where $\rho = \norm{\bm{\eta}}^2$, and
        \begin{align}
				\!\!\!\!\xibf \!=\! 
				\begin{cases} \begin{array}{ll}
					\!\!\!\!\![\xi_1({\bm{\vartheta}}),\ldots,\xi_K({\bm{\vartheta}})]^\T, &\!\!\!\! \text{for}\ \mathtt{bf=MRT},\\
					\!\!\!\!\!\frac{1}{\Nt(\Nt-K)}\left[\frac{1}{\xi_1({\bm{\vartheta}})},\ldots,\frac{1}{\xi_K({\bm{\vartheta}})}\right]^\T\!\!,& \!\!\!\!\text{for}\ \mathtt{bf=ZF}. \end{array}
				\end{cases}  \nonumber
			\end{align}

\section{Communications and Sensing Performance Analysis}\label{sec_perf_analysis}
We next derive closed-form expressions for the communications rate, sensing CRLBs, and CRLB outage probabilities, which serve as the basis for the subsequent resource-allocation.

\subsection{Communications Sum Rate}
From \eqref{eq_y_kl} and following \cite{nhan_isac_power_allocation}, the achievable rate of user $k$ with MRT or ZF precoder can be expressed in closed form as 
\begin{align*}
			\!\!\!\!\mathcal{R}_k^\mathtt{bf}({\bm{\vartheta}}, \bm{\gamma},\rho) &\!= \!{\tau_\mathtt{0}} \!\log_2 \!\!\left(\!1 \!+\! \frac{ \lambdabf_k({\bm{\vartheta}}) \gamma_k }{\Nt \beta_k \rho  + \Nt {\zetabf({\bm{\vartheta}})^\T_k} \bm{\gamma} + \sigmac}\!\right)\!, \nbthis \label{eq_SE_theo}
		\end{align*}
		where  ${\tau}_\mathtt{0} \triangleq \left(\tau_{\mathtt{c}} - \tau_{\mathtt{p}}\right)/\tau_{\mathtt{c}}$, $\bm{\gamma} \triangleq [\gamma_1,\ldots,\gamma_K]^\T$, and
		\begin{align*}
			\lambdabf_k({\bm{\vartheta}}) &= 
			\begin{cases}
				\begin{array}{ll} \Nt^2 \xi_k^2({\bm{\vartheta}}), &  \text{for}\ \mathtt{bf=MRT}, \\
				1,& \text{for}\ \mathtt{bf=ZF},\end{array}
			\end{cases} 
            \\
			\!\!\zetabf_k({\bm{\vartheta}}) \!&=\! 
			\begin{cases}
				\begin{array}{ll} 
                    \!\!\!\!\!\beta_k \left[\xi_1({\bm{\vartheta}}), \ldots, \xi_K({\bm{\vartheta}})\right]^\T, & \hspace{-0.1cm}\text{for}\ \mathtt{bf=MRT}, \\
				\!\!\!\!\!\frac{\epsilon_k({\bm{\vartheta}})}{\Nt(\Nt-K)} \left[ \frac{1}{\xi_1({\bm{\vartheta}})}, \ldots, \frac{1}{\xi_K({\bm{\vartheta}})} \right]^\T\!\!\!, & \hspace{-0.1cm}\text{for}\ \mathtt{bf=ZF}. \end{array}
			\end{cases} 
		\end{align*}
Unlike~\cite{nhan_isac_power_allocation}, the achievable rate in~\eqref{eq_SE_theo} explicitly depends on the pilot-power allocation vector  $\boldsymbol{\vartheta}$ through $\xi_k(\boldsymbol{\vartheta})$, in addition to the communications and sensing power allocation variables $\boldsymbol{\gamma}$ and $\rho$, respectively.
   
\subsection{Sensing CRLBs Under Target-Angle Uncertainty}
To characterize the sensing performance, we derive CRLBs for the target azimuth and elevation angles. To this end, we first derive the Fisher information matrix (FIM) associated with the unknown parameter vector as
\begin{align*}
    \vy_{\mathtt{s}}
    = \mathtt{vec}(\mY_{\mathtt{s}})
    =  \mathbf{v}_{\mathtt{s}}+\vn_{\mathtt{s}},
    \nbthis \label{eq_radar_model_2}
\end{align*}
where
$\mathbf{v}_{\mathtt{s}}\triangleq\beta_\mathtt{s}\mathtt{vec}(\mA\mX)
\in\setC^{\Nr M\times 1}$ and
$\vn_{\mathtt{s}}\triangleq\mathtt{vec}(\mN_{\mathtt{s}})
\in\setC^{\Nr M\times 1}$, with
$\vn_{\mathtt{s}}\sim\mathcal{CN}(\bm{0},\sigmas\mI_{\Nr M})$.
Let 
$\bm{\omega}\triangleq[\theta,\phi,\tilde{\bm{\beta_\mathtt{s}}}^{\T}]^{\T}
\in\setR^{4\times 1}$ denote the unknown parameter vector, where
$\tilde{\bm{\beta_\mathtt{s}}}=[\re{\beta_\mathtt{s}},\im{\beta_\mathtt{s}}]^{\T}$.
Following \cite{nhan_isac_power_allocation}, the FIM associated with estimating $\bm{\omega}$ from
\eqref{eq_radar_model_2} is given by
	\begin{align}
		\mJ_{\bm{\omega}} = 
		\begin{bmatrix}
			\Jtt  & \Jtp & \Jta\\
			\Jtp & \Jpp & \Jpa \\
			\Jta^\T & \Jpa^\T &\Jaa
		\end{bmatrix},
        \label{full_fim}
	\end{align}
where the entries of $\mJ_{\bm{\omega}}$ are given by
	\begin{align*}
	T_{\psi\psi} &= \chi_\mathtt{s} \abs{\beta_\mathtt{s}}^2 \tr{\dot{\mA}_{\psi} \mR_x \dot{\mA}^\H_{\psi}} \in \setR^{1 \times 1}, \nbthis \label{eq_def_Jtt} \\
		\Jtp &= \chi_\mathtt{s} \abs{\beta_\mathtt{s}}^2 \tr{\dot{\mA}_{\phi} \mR_x \dot{\mA}^\H_{\theta}} \in \setR^{1 \times 1}, \nbthis \label{eq_def_Jtp}\\
		\mathbf{t}_{\psi\tilde{\boldsymbol{\beta}}_\mathtt{s}} &= \chi_\mathtt{s} \re{\beta_\mathtt{s}^* \tr{\mA \mR_x \dot{\mA}^\H_{\psi}}[1,j]} \in \setR^{1 \times 2}, \nbthis \label{eq_def_Jta}\\
		\Jaa &= \chi_\mathtt{s} \tr{\mA \mR_x \mA^\H} \mI_2  \in \setR^{2 \times 2}. \nbthis \label{eq_def_Jaa}
	\end{align*}
Here, $\psi\in\{\theta,\phi\}$, $\dot{\mA}_{\theta} = \bdottheta \va^\H + \vb \adottheta^\H,\ 
		\dot{\mA}_{\phi} = \bdotphi \va^\H + \vb \adotphi^\H$,
	\begin{align*}
            &\mR_x \triangleq \frac{1}{M} \mean{\mX \mX^\H}  =
\mean{\mW_\mathtt{bf} \mW_\mathtt{bf}^\H} \triangleq \mR_\mathtt{bf}, \nbthis\label{eq_approx_cov}
	\end{align*}
	and $\chi_\mathtt{s} \triangleq \frac{2M}{\sigmas}$. From~\eqref{full_fim}, the equivalent FIM for $(\theta,\phi)$ is obtained
via the Schur complement w.r.t. $\tilde{\boldsymbol{\beta}}_{\mathtt{s}}$ as 
 \begin{align}
	\!\!\mJ_{\theta,\phi}
&=
\begin{bmatrix}
\!\Jtt\!-\!\Jta\Jaa^{-1}\Jta^\T
\!
&
\Jtp\!-\!\Jta\Jaa^{-1}\Jpa^\T
\!
\\
\Jtp\!-\!\Jpa\Jaa^{-1}\Jta^\T
\!
&
\Jpp\!-\!\Jpa\Jaa^{-1}\Jpa^\T
\!
\end{bmatrix}\nonumber \\
&\triangleq\begin{bmatrix}
 \widetilde{T}_{\theta\theta}
&
\widetilde{T}_{\theta\phi}
\\
\widetilde{T}_{\theta\phi}
&
 \widetilde{T}_{\phi\phi}
\end{bmatrix}.
\label{eq_FIM}
\end{align} 
 Based on \eqref{eq_FIM}, the CRLBs for estimating the target azimuth and elevation angles are obtained as follows.
\begin{theorem}
\label{thm:deterministic_crlb}
The closed-form CRLBs for the target azimuth and elevation angles, $\psi \in\{\theta,\phi\}$, under target-angle uncertainty are given in \eqref{eq_closed_crlb_theta} and
\eqref{eq_closed_crlb_phi}, where
\begin{figure*}[!t]
\begin{align}
\label{eq_closed_crlb_theta}
{\mathtt{CRLB}}_{\theta}^{\mathtt{bf}}
(\boldsymbol{\vartheta},\boldsymbol{\gamma},\rho,\boldsymbol{\varepsilon})
&\!=\!\!
\left(
\boldsymbol{\xi}_{\mathtt{bf}}^{\T}(\boldsymbol{\vartheta})
\boldsymbol{\gamma}c_{\theta}
+\rho\hat{c}_{\theta}
\!-\!
\frac{
\rho^{2}
|\hat{c}_{\theta\tilde{\boldsymbol{\beta}}_{\mathtt{s}}}|^{2}
}
{
\boldsymbol{\xi}_{\mathtt{bf}}^{\T}(\boldsymbol{\vartheta})\boldsymbol{\gamma}
c_{\tilde{\boldsymbol{\beta}}_{\mathtt{s}}\tilde{\boldsymbol{\beta}}_{\mathtt{s}}}
\!+\!\rho
\hat{c}_{\tilde{\boldsymbol{\beta}}_{\mathtt{s}}\tilde{\boldsymbol{\beta}}_{\mathtt{s}}}
}
\!-\!
\frac{
\left(
\boldsymbol{\xi}_{\mathtt{bf}}^{\T}(\boldsymbol{\vartheta})
\boldsymbol{\gamma}c_{\theta\phi}
+\rho\hat{c}_{\theta\phi}
-
\frac{
\rho^{2}
\hat{c}_{\theta\tilde{\boldsymbol{\beta}}_{\mathtt{s}}}
\hat{c}_{\phi\tilde{\boldsymbol{\beta}}_{\mathtt{s}}}^{*}
}
{
\boldsymbol{\xi}_{\mathtt{bf}}^{\T}(\boldsymbol{\vartheta})\boldsymbol{\gamma}
c_{\tilde{\boldsymbol{\beta}}_{\mathtt{s}}\tilde{\boldsymbol{\beta}}_{\mathtt{s}}}
+\rho
\hat{c}_{\tilde{\boldsymbol{\beta}}_{\mathtt{s}}\tilde{\boldsymbol{\beta}}_{\mathtt{s}}}
}
\right)^{2}
}
{
\boldsymbol{\xi}_{\mathtt{bf}}^{\T}(\boldsymbol{\vartheta})
\boldsymbol{\gamma}c_{\phi}
+\rho\hat{c}_{\phi}
-
\frac{
\rho^{2}
|\hat{c}_{\phi\tilde{\boldsymbol{\beta}}_{\mathtt{s}}}|^{2}
}
{
\boldsymbol{\xi}_{\mathtt{bf}}^{\T}(\boldsymbol{\vartheta})\boldsymbol{\gamma}
c_{\tilde{\boldsymbol{\beta}}_{\mathtt{s}}\tilde{\boldsymbol{\beta}}_{\mathtt{s}}}
+\rho
\hat{c}_{\tilde{\boldsymbol{\beta}}_{\mathtt{s}}\tilde{\boldsymbol{\beta}}_{\mathtt{s}}}
}
}
\right)^{-1}
\\
\label{eq_closed_crlb_phi}
{\mathtt{CRLB}}_{\phi}^{\mathtt{bf}}
(\boldsymbol{\vartheta},\boldsymbol{\gamma},\rho,\boldsymbol{\varepsilon})
&\!=\!\!
\left(
\boldsymbol{\xi}_{\mathtt{bf}}^{\T}(\boldsymbol{\vartheta})
\boldsymbol{\gamma}c_{\phi}
\!+\!\rho\hat{c}_{\phi}
\!-\!
\frac{
\rho^{2}
|\hat{c}_{\phi\tilde{\boldsymbol{\beta}}_{\mathtt{s}}}|^{2}
}
{
\boldsymbol{\xi}_{\mathtt{bf}}^{\T}(\boldsymbol{\vartheta})\boldsymbol{\gamma}
c_{\tilde{\boldsymbol{\beta}}_{\mathtt{s}}\tilde{\boldsymbol{\beta}}_{\mathtt{s}}}
+\rho
\hat{c}_{\tilde{\boldsymbol{\beta}}_{\mathtt{s}}\tilde{\boldsymbol{\beta}}_{\mathtt{s}}}
}
-
\frac{
\left(
\boldsymbol{\xi}_{\mathtt{bf}}^{\T}(\boldsymbol{\vartheta})
\boldsymbol{\gamma}c_{\theta\phi}
+\rho\hat{c}_{\theta\phi}
-
\frac{
\rho^{2}
\hat{c}_{\theta\tilde{\boldsymbol{\beta}}_{\mathtt{s}}}
\hat{c}_{\phi\tilde{\boldsymbol{\beta}}_{\mathtt{s}}}^{*}
}
{
\boldsymbol{\xi}_{\mathtt{bf}}^{\T}(\boldsymbol{\vartheta})\boldsymbol{\gamma}
c_{\tilde{\boldsymbol{\beta}}_{\mathtt{s}}\tilde{\boldsymbol{\beta}}_{\mathtt{s}}}
\!+\!\rho
\hat{c}_{\tilde{\boldsymbol{\beta}}_{\mathtt{s}}\tilde{\boldsymbol{\beta}}_{\mathtt{s}}}
}
\right)^{2}
}
{
\boldsymbol{\xi}_{\mathtt{bf}}^{\T}(\boldsymbol{\vartheta})
\boldsymbol{\gamma}c_{\theta}
\!+\!\rho\hat{c}_{\theta}
\!-\!
\frac{
\rho^{2}
|\hat{c}_{\theta\tilde{\boldsymbol{\beta}}_{\mathtt{s}}}|^{2}
}
{
\boldsymbol{\xi}_{\mathtt{bf}}^{\T}(\boldsymbol{\vartheta})\boldsymbol{\gamma}
c_{\tilde{\boldsymbol{\beta}}_{\mathtt{s}}\tilde{\boldsymbol{\beta}}_{\mathtt{s}}}
\!+\!\rho
\hat{c}_{\tilde{\boldsymbol{\beta}}_{\mathtt{s}}\tilde{\boldsymbol{\beta}}_{\mathtt{s}}}
}
}
\right)^{-1}
\end{align}
\hrule
\end{figure*}
\begin{align}
\label{chat1_thm}
c_{\psi} &\triangleq \chi_\mathtt{s}|\beta_\mathtt{s}|^{2}\Big(\|\dot{\mathbf{a}}_{\psi}\|^{2}N_\mathtt{r}+\|\dot{\mathbf{b}}_{\psi}\|^{2}N_\mathtt{t}\Big), \\
c_{\theta\phi} &\triangleq \chi_\mathtt{s}|\beta_\mathtt{s}|^{2}\Big(\dot{\mathbf{a}}_{\theta}^{\H}\dot{\mathbf{a}}_{\phi}N_\mathtt{r}+\dot{\mathbf{b}}_{\theta}^{\H}\dot{\mathbf{b}}_{\phi}N_\mathtt{t}\Big), \\
c_{\tilde{\boldsymbol{\beta_\mathtt{s}}}\tilde{\boldsymbol{\beta_\mathtt{s}}}} &\triangleq \chi_\mathtt{s}N_\mathtt{t}N_\mathtt{r},
\label{cbeta}\\
\hat{c}_{\psi} &\triangleq \chi_\mathtt{s}|\beta_\mathtt{s}|^{2}\Big(|g_{0}|^{2}\|\dot{\mathbf{b}}_{\psi}\|^{2}+|g_{\psi}|^{2}N_\mathtt{r}\Big), 
\label{chattheta}\\
\label{chatthetaphi}
\hat{c}_{\theta\phi} &\triangleq \chi_\mathtt{s}|\beta_\mathtt{s}|^{2}\Big(|g_{0}|^{2}\dot{\mathbf{b}}_{\theta}^{\H}\dot{\mathbf{b}}_{\phi}+g_{\theta}g_{\phi}N_\mathtt{r}\Big), \\
\hat{c}_{\tilde{\boldsymbol{\beta_\mathtt{s}}}\tilde{\boldsymbol{\beta_\mathtt{s}}}} &\triangleq \chi_\mathtt{s} |g_{0}|^{2}N_\mathtt{r}, \\
\hat{c}_{\psi\tilde{\boldsymbol{\beta_\mathtt{s}}}} &\triangleq \chi_\mathtt{s} \beta_\mathtt{s}^{*}\,(g_{0}g_{\psi})N_\mathtt{r},\
\label{chat2_thm}
\end{align}
with 
\begin{align}
\|\dot{\mathbf{a}}_{\theta}\|^{2}
&= \frac{N_{\mathtt{t}}\left(N_{\mathtt{ty}}^{2}-1\right)}{12}
   \pi^{2}\cos^{2}(\theta)\sin^{2}(\phi),
\label{eq:norm_adottheta}
\\
\|\dot{\mathbf{a}}_{\phi}\|^{2}
&= \frac{N_{\mathtt{t}}}{12}\pi^{2}
   \big(
   \left(N_{\mathtt{ty}}^{2}-1\right)\sin^{2}(\theta)\cos^{2}(\phi)
   \nonumber \\ &\quad+\left(N_{\mathtt{tz}}^{2}-1\big)\sin^{2}(\phi)
   \right),
\label{eq:norm_adotphi}
\\
\|\dot{\mathbf{b}}_{\theta}\|^{2}
&= \frac{N_{\mathtt{r}}\left(N_{\mathtt{ry}}^{2}-1\right)}{12}
   \pi^{2}\cos^{2}(\theta)\sin^{2}(\phi),
\label{eq:norm_bdottheta}
\\
\|\dot{\mathbf{b}}_{\phi}\|^{2}
&= \frac{N_{\mathtt{r}}}{12}\pi^{2}
   \big(
   \left(N_{\mathtt{ry}}^{2}-1\right)\sin^{2}(\theta)\cos^{2}(\phi)
   \nonumber \\& \quad+\left(N_{\mathtt{rz}}^{2}-1\right)\sin^{2}(\phi)
   \big),
\label{eq:norm_bdotphi}
\\
\dot{\mathbf{a}}_{\theta}^{\H}\dot{\mathbf{a}}_{\phi}
&= \frac{N_{\mathtt{t}}\left(N_{\mathtt{ty}}^{2}-1\right)}{48}
   \pi^{2}\sin(2\theta)\sin(2\phi),
\label{eq:inner_adot}
\\
\dot{\mathbf{b}}_{\theta}^{\H}\dot{\mathbf{b}}_{\phi}
&= \frac{N_{\mathtt{r}}\left(N_{\mathtt{ry}}^{2}-1\right)}{48}
   \pi^{2}\sin(2\theta)\sin(2\phi),
\label{eq:inner_bdot}
\end{align}
 and 
\begin{align}
\label{g0}
g_{0} &\triangleq \mathbf{u}^\H \mathbf{a}
= D_{N_{\mathtt{t}\mathtt{y}}}(\Delta_{\mathtt{y}})
  D_{N_{\mathtt{t}\mathtt{z}}}(\Delta_{\mathtt{z}}), \\
  \label{gtheta}
g_{\theta} & \triangleq \mathbf{u}^\H \dot{\mathbf{a}}_\theta
= j\pi\cos(\theta)\sin(\phi)
  Q_{N_{\mathtt{t}\mathtt{y}}}(\Delta_{\mathtt{y}})
  D_{N_{\mathtt{t}\mathtt{z}}}(\Delta_{\mathtt{z}}), \\
g_{\phi} &\triangleq \mathbf{u}^\H \dot{\mathbf{a}}_\phi = j\pi\sin(\theta)\cos(\phi)
  Q_{N_{\mathtt{t}\mathtt{y}}}(\Delta_{\mathtt{y}})
  D_{N_{\mathtt{t}\mathtt{z}}}(\Delta_{\mathtt{z}})\nonumber\\
&\hspace{1.5cm}
 - j\pi\sin(\phi)
  D_{N_{\mathtt{t}\mathtt{y}}}(\Delta_{\mathtt{y}})
  Q_{N_{\mathtt{t}\mathtt{z}}}(\Delta_{\mathtt{z}}),
  \label{gphi}
\end{align}

\begin{align}
D_{N_{\mathtt{t}\mathtt{x}}}(\Delta_{\mathtt{x}})
&\triangleq
\frac{
\sin\left(\frac{N_{\mathtt{t}\mathtt{x}}\pi\Delta_{\mathtt{x}}}{2}\right)}
{
\sin\left(\frac{\pi\Delta_{\mathtt{x}}}{2}\right)
}, \label{Sn}\\
Q_{N_{\mathtt{t}\mathtt{x}}}(\Delta_{\mathtt{x}})
&\triangleq
\frac{-j}{2\sin^2\left(\frac{\pi\Delta_{\mathtt{x}}}{2}\right)}
\Bigg(
N_{\mathtt{t}\mathtt{x}}
\cos\left(\frac{N_{\mathtt{t}\mathtt{x}}\pi\Delta_{\mathtt{x}}}{2}\right)
\sin\left(\frac{\pi\Delta_{\mathtt{x}}}{2}\right) \nonumber\\
&\hspace{0.5cm}
-
\sin\left(\frac{N_{\mathtt{t}\mathtt{x}}\pi\Delta_{\mathtt{x}}}{2}\right)
\cos\left(\frac{\pi\Delta_{\mathtt{x}}}{2}\right)
\Bigg),
\label{Qn}
\end{align}
for $\mathtt{x}\in \{\mathtt{y},\mathtt{z}\}$ and
\begin{align}
\Delta_{\mathtt{x}} \triangleq
\begin{cases}
\sin(\theta)\sin(\phi)-\sin(\hat{\theta})\sin(\hat{\phi}),
& \mathtt{x}=\mathtt{y},\\[1mm]
\cos(\phi)-\cos(\hat{\phi}),
& \mathtt{x}=\mathtt{z}.
\label{deltax}
\end{cases}
\end{align}
\end{theorem}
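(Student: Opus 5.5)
The plan is to compute every entry of the FIM \eqref{full_fim} in closed form by substituting the transmit covariance $\mR_\mathtt{bf}$ into \eqref{eq_def_Jtt}--\eqref{eq_def_Jaa}. Then the Schur complement \eqref{eq_FIM} is formed, and the $2\times 2$ matrix $\mJ_{\theta,\phi}$ is inverted. The first step is to write
\begin{align*}
\mR_\mathtt{bf}=\mean{\mF_\mathtt{bf}\bGamma^2\mF_\mathtt{bf}^\H}+\rho\,\vu\vu^\H .
\end{align*}
Cross terms vanish because $\hat{\vh}_k$ is zero mean and independent of $\vu=\vahat$, which is a deterministic function of the angle error. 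Following \cite{nhan_isac_power_allocation}, the communication part equals $\xibf^\T\bm{\gamma}\,\mI_{\Nt}$ for both MRT and ZF, since $\hat{\mH}$ has i.i.d.\ columns with covariance $\xi_k\mI$ and the ZF mean uses $\mean{(\hat{\mH}^\H\hat{\mH})^{-1}}$. Hence $\mR_\mathtt{bf}=\xibf^\T\bm{\gamma}\,\mI_{\Nt}+\rho\,\vu\vu^\H$, and each FIM entry splits into an identity-covariance term (the $c$ coefficients) and a rank-one term (the $\hat c$ coefficients).

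Next I would evaluate the traces using $\dot{\mA}_\psi=\dot{\vb}_\psi\va^\H+\vb\dot{\va}_\psi^\H$ and the centro-symmetric array structure. Symmetric indexing gives $\va^\H\dot{\va}_\psi=0$ and $\vb^\H\dot{\vb}_\psi=0$, and also $\|\va\|^2=\Nt$, $\|\vb\|^2=\Nr$. For the identity part, $\tr{\dot{\mA}_\psi\dot{\mA}_\psi^\H}=\|\dot{\vb}_\psi\|^2\Nt+\Nr\|\dot{\va}_\psi\|^2$ and $\tr{\mA\dot{\mA}_\psi^\H}=0$, which gives $c_\psi$, $c_{\theta\phi}$ and $c_{\tilde{\bm{\beta}}_\mathtt{s}\tilde{\bm{\beta}}_\mathtt{s}}$. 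For the rank-one part, $\tr{\dot{\mA}_\psi\vu\vu^\H\dot{\mA}_\psi^\H}=\|\dot{\vb}_\psi\va^\H\vu+\vb\dot{\va}_\psi^\H\vu\|^2=|g_0|^2\|\dot{\vb}_\psi\|^2+\Nr|g_\psi|^2$, where the cross term again vanishes by $\vb^\H\dot{\vb}_\psi=0$. Similarly $\tr{\mA\vu\vu^\H\mA^\H}=\Nr|g_0|^2$ and $\tr{\mA\vu\vu^\H\dot{\mA}_\psi^\H}=\Nr\,g_0^*g_\psi$ up to conjugation. These produce the $\hat c$ terms. The norms \eqref{eq:norm_adottheta}--\eqref{eq:inner_bdot} follow from the Kronecker structure together with $\sum_n \big(\tfrac{N-2n+1}{2}\big)^2 = N(N^2-1)/12$.

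The inner products $g_0,g_\theta,g_\phi$ are computed next. Writing $\vu^\H\va$ as a Kronecker product gives two geometric sums $\sum_n e^{j\pi\frac{N-2n+1}{2}\Delta_\mathtt{x}}$, which equal the Dirichlet kernel $D_N(\Delta_\mathtt{x})$. The derivative vectors $\dot{\va}_\psi$ carry an extra index weight times $j\pi\,\partial\upsilon/\partial\psi$. The weighted sums are the derivative of $D_N$ with respect to $\pi\Delta/2$, up to the factor $-j$, and this gives $Q_N$ in \eqref{Qn}. The chain-rule factors $\cos\theta\sin\phi$, $\sin\theta\cos\phi$ and $-\sin\phi$ then yield \eqref{gtheta}--\eqref{gphi}.

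The last step is the Schur complement. Because $\Jaa$ is a scalar multiple of $\mI_2$, each product $\vj_{\psi\tilde{\bm{\beta}}_\mathtt{s}}\Jaa^{-1}\vj_{\psi'\tilde{\bm{\beta}}_\mathtt{s}}^\T$ reduces to $\mathrm{Re}$ of a complex product divided by $\xibf^\T\bm{\gamma}c_{\tilde{\bm{\beta}}_\mathtt{s}\tilde{\bm{\beta}}_\mathtt{s}}+\rho\hat c_{\tilde{\bm{\beta}}_\mathtt{s}\tilde{\bm{\beta}}_\mathtt{s}}$. Only the $\rho$ part contributes to it, which gives the $\rho^2$ terms. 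The closed forms then follow from $[\mJ_{\theta,\phi}^{-1}]_{11}=(\widetilde T_{\theta\theta}-\widetilde T_{\theta\phi}^2/\widetilde T_{\phi\phi})^{-1}$ and its analogue for $\phi$. I expect the main obstacle to be the bookkeeping of conjugations and real parts in the cross terms: showing that $g_\theta g_\phi$ and $\hat c_{\theta\tilde{\bm{\beta}}_\mathtt{s}}\hat c^*_{\phi\tilde{\bm{\beta}}_\mathtt{s}}$ are real, or that taking the real part reproduces the stated form. The fact that $g_0$ is real and $g_\psi$ is purely imaginary, via real $D$ and the $j$ in $Q$, should settle this.
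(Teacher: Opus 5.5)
Your plan follows the paper's proof essentially step by step. You use $\mathbf{R}_{\mathtt{bf}}=\boldsymbol{\xi}_{\mathtt{bf}}^{\mathsf{T}}\boldsymbol{\gamma}\,\mathbf{I}_{N_{\mathtt{t}}}+\rho\,\mathbf{u}\mathbf{u}^{\mathsf{H}}$ and the relations $\mathbf{a}^{\mathsf{H}}\dot{\mathbf{a}}_{\psi}=\mathbf{b}^{\mathsf{H}}\dot{\mathbf{b}}_{\psi}=0$ to split each FIM entry into a $c$-part and a $\hat{c}$-part, compute $g_0,g_\theta,g_\phi$ as Dirichlet-type sums, and finish with the Schur complement and the $2\times 2$ inversion. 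Your trace evaluations are correct. Obtaining $Q_N$ by differentiating $D_N$ also justifies a closed form the paper only states.

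The step that is wrong is the last one. You propose to settle the conjugation bookkeeping by asserting that $g_\psi$ is purely imaginary, but it is real. Indeed $Q_N(\Delta)=\sum_m m\,e^{j\pi m\Delta}=j\sum_m m\sin(\pi m\Delta)$ is purely imaginary. Equivalently, $Q_N=-\tfrac{j}{2}\,\mathrm{d}D_N/\mathrm{d}x$ with $x=\pi\Delta/2$ and $D_N$ real, which is what your own derivative argument gives. The extra factor $j$ from differentiating the steering vector then makes $g_\theta=j\pi\cos\theta\sin\phi\,Q_{N_{\mathtt{ty}}}(\Delta_{\mathtt{y}})D_{N_{\mathtt{tz}}}(\Delta_{\mathtt{z}})$ real, and likewise both terms of $g_\phi$.

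This is not cosmetic. The rank-one cross trace is $(\dot{\mathbf{A}}_\theta\mathbf{u})^{\mathsf{H}}(\dot{\mathbf{A}}_\phi\mathbf{u})=|g_0|^2\dot{\mathbf{b}}_\theta^{\mathsf{H}}\dot{\mathbf{b}}_\phi+N_{\mathtt{r}}\,g_\theta g_\phi^{*}$. If $g_\theta=ja$ and $g_\phi=jb$ with $a,b$ real, the last term is $+N_{\mathtt{r}}ab$. The stated $\hat{c}_{\theta\phi}$, however, contains $N_{\mathtt{r}}g_\theta g_\phi=-N_{\mathtt{r}}ab$. Under your claimed property, the check would expose a sign discrepancy rather than confirm the theorem.

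With the correct fact $g_0,g_\theta,g_\phi\in\mathbb{R}$, everything closes:
\begin{itemize}
\item $g_\theta g_\phi^{*}=g_\theta g_\phi$ reproduces $\hat{c}_{\theta\phi}$ and makes $T_{\theta\phi}$ real, as the paper asserts.
\item $\mathrm{tr}(\mathbf{A}\mathbf{u}\mathbf{u}^{\mathsf{H}}\dot{\mathbf{A}}_\psi^{\mathsf{H}})=N_{\mathtt{r}}g_0^{*}g_\psi=N_{\mathtt{r}}g_0g_\psi$ matches $\hat{c}_{\psi\tilde{\boldsymbol{\beta}}_{\mathtt{s}}}$.
\item Since $\mathrm{Re}(z_1[1,j])\,\mathrm{Re}(z_2[1,j])^{\mathsf{T}}=\mathrm{Re}(z_1z_2^{*})$ for any complex $z_1,z_2$, the Schur-complement cross term equals $\rho^2\,\mathrm{Re}(\hat{c}_{\theta\tilde{\boldsymbol{\beta}}_{\mathtt{s}}}\hat{c}^{*}_{\phi\tilde{\boldsymbol{\beta}}_{\mathtt{s}}})/T_{\tilde{\boldsymbol{\beta}}_{\mathtt{s}}\tilde{\boldsymbol{\beta}}_{\mathtt{s}}}$. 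The real part can be dropped because $\hat{c}_{\theta\tilde{\boldsymbol{\beta}}_{\mathtt{s}}}\hat{c}^{*}_{\phi\tilde{\boldsymbol{\beta}}_{\mathtt{s}}}=\chi_{\mathtt{s}}^2|\beta_{\mathtt{s}}|^2N_{\mathtt{r}}^2g_0^2g_\theta g_\phi$.
\end{itemize}
The paper's appendix uses these identities without comment, so this reality argument is exactly the piece you need to supply, with the right conclusion.

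A smaller point concerns ZF. The cross terms in $\mathbf{R}_{\mathtt{bf}}$ vanish because $\hat{\mathbf{H}}(\hat{\mathbf{H}}^{\mathsf{H}}\hat{\mathbf{H}})^{-1}$ is odd in $\hat{\mathbf{H}}$ and $\hat{\mathbf{H}}$ is symmetrically distributed. That each $\hat{\mathbf{h}}_k$ has zero mean is not by itself enough.
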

\begin{proof}
 See Appendix~\ref{app:crlb_error}. \hfill$\blacksquare$ 
\end{proof}
Based on \eqref{eq_closed_crlb_theta} and \eqref{eq_closed_crlb_phi}, we now
investigate the scaling laws of the CRLBs w.r.t.
$N_{\mathtt{t}}$ and $N_{\mathtt{r}}$. For analytical tractability, we consider equal power allocation between communications and sensing in~\eqref{eq_power}, such that
$N_{\mathtt{t}}\boldsymbol{\xi}_{\mathtt{bf}}^{\T}(\boldsymbol{\vartheta})
\boldsymbol{\gamma}=N_{\mathtt{t}}\rho=P/2$. The scaling laws are summarized in the following corollaries.
\begin{corollary}
\label{rem:asymptotic_Nt}
For fixed nonzero target-angle errors and square UPAs, \textit{i.e.}
$N_{\mathtt{t}\mathtt{y}}=N_{\mathtt{t}\mathtt{z}}=\sqrt{N_{\mathtt{t}}}$
and
$N_{\mathtt{r}\mathtt{y}}=N_{\mathtt{r}\mathtt{z}}=\sqrt{N_{\mathtt{r}}}$,
the CRLBs scale as $1/N_{\mathtt{t}}$ for $N_{\mathtt{t}}\gg1$ with fixed
$N_{\mathtt{r}}$, and as $1/N_{\mathtt{r}}^{2}$ for
$N_{\mathtt{r}}\gg1$ with fixed $N_{\mathtt{t}}$. {This implies that deploying large antenna arrays at the BS can enable highly accurate sensing  even in the presence of target-angle errors.}
\end{corollary}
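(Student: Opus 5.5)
The plan is to read off the orders of growth of every constant in \eqref{chat1_thm}--\eqref{chat2_thm} from Theorem~\ref{thm:deterministic_crlb}. These are substituted into \eqref{eq_closed_crlb_theta}--\eqref{eq_closed_crlb_phi} with $N_{\mathtt t}\boldsymbol{\xi}_{\mathtt{bf}}^{\T}\boldsymbol{\gamma}=N_{\mathtt t}\rho=P/2$, i.e.\ $\boldsymbol{\xi}_{\mathtt{bf}}^{\T}\boldsymbol{\gamma}=\rho=P/(2N_{\mathtt t})$. We then keep only the dominant terms of the Schur-complemented FIM \eqref{eq_FIM}. In both regimes the CRLB is $1/(\widetilde T_{\psi\psi}-\widetilde T_{\theta\phi}^2/\widetilde T_{\bar\psi\bar\psi})$, so it suffices to show two things: all entries of $\mJ_{\theta,\phi}$ scale as a common factor $s$, and the normalized $2\times2$ limit matrix is nonsingular.

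\emph{Step 1 (array quantities).} For square UPAs, \eqref{eq:norm_adottheta}--\eqref{eq:inner_bdot} give $\|\dot{\va}_\psi\|^2,\dot{\va}_\theta^\H\dot{\va}_\phi=\Theta(N_{\mathtt t}^2)$ and $\|\dot{\vb}_\psi\|^2,\dot{\vb}_\theta^\H\dot{\vb}_\phi=\Theta(N_{\mathtt r}^2)$. For fixed nonzero errors, $\Delta_{\mathtt x}$ in \eqref{deltax} is a fixed nonzero constant (generically). Hence by \eqref{Sn}, $|D_{N}(\Delta)|\le 1/|\sin(\pi\Delta/2)|=O(1)$, and by \eqref{Qn}, $|Q_N(\Delta)|=O(N)$. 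With $N=\sqrt{N_{\mathtt t}}$, \eqref{g0}--\eqref{gphi} therefore yield $|g_0|=O(1)$ and $|g_\psi|=O(\sqrt{N_{\mathtt t}})$. These are independent of $N_{\mathtt r}$.

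\emph{Step 2 ($N_{\mathtt t}\to\infty$, fixed $N_{\mathtt r}$).} The communications part satisfies $\boldsymbol{\xi}_{\mathtt{bf}}^{\T}\boldsymbol{\gamma}\,c_\psi,\ \boldsymbol{\xi}_{\mathtt{bf}}^{\T}\boldsymbol{\gamma}\,c_{\theta\phi}=\Theta(N_{\mathtt t})$, driven by the $\|\dot{\va}\|^2N_{\mathtt r}$ terms. The sensing part satisfies $\rho\hat c_\psi,\rho\hat c_{\theta\phi}=O(1)$. For the coupling terms, the numerators are $\rho^2|\hat c_{\psi\tilde{\boldsymbol\beta}_{\mathtt s}}|^2=O(N_{\mathtt t}^{-2}\cdot N_{\mathtt t})$. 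The denominators are $\boldsymbol{\xi}^{\T}_{\mathtt{bf}}\boldsymbol{\gamma}c_{\tilde{\boldsymbol\beta}_{\mathtt s}\tilde{\boldsymbol\beta}_{\mathtt s}}+\rho\hat c_{\tilde{\boldsymbol\beta}_{\mathtt s}\tilde{\boldsymbol\beta}_{\mathtt s}}=\Theta(1)$, so these corrections are $O(1/N_{\mathtt t})$. Thus $\mJ_{\theta,\phi}/N_{\mathtt t}$ converges to $\tfrac{P}{2}\chi_{\mathtt s}|\beta_{\mathtt s}|^2N_{\mathtt r}N_{\mathtt t}^{-2}\,\mathbf G_{\mathtt a}$, where $\mathbf G_{\mathtt a}$ is the Gram matrix of $(\dot{\va}_\theta,\dot{\va}_\phi)$. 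The key check is that $\mathbf G_{\mathtt a}$ is nonsingular. Using $N_{\mathtt{ty}}^2-1\approx N_{\mathtt{tz}}^2-1\approx N_{\mathtt t}$,
\begin{align*}
\det\mathbf G_{\mathtt a}\approx\Big(\tfrac{\pi^2N_{\mathtt t}^2}{12}\Big)^2\cos^2\theta\sin^4\phi ,
\end{align*}
because the cross term cancels the $\cos^2\theta\sin^2\theta\sin^2\phi\cos^2\phi$ part exactly. This is positive for $\cos\theta\sin\phi\neq0$. Hence both CRLBs behave as $\Theta(1/N_{\mathtt t})$.

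\emph{Step 3 ($N_{\mathtt r}\to\infty$, fixed $N_{\mathtt t}$).} Now $g_0,g_\psi$ are constants. The quantities $c_\psi,c_{\theta\phi}$ and $\hat c_\psi,\hat c_{\theta\phi}$ are $\Theta(N_{\mathtt r}^2)$ through $\|\dot{\vb}\|^2$. The coupling corrections are $O(N_{\mathtt r}^2)/\Theta(N_{\mathtt r})=O(N_{\mathtt r})$, so they are negligible. The leading FIM is $\chi_{\mathtt s}|\beta_{\mathtt s}|^2\big(\tfrac{P}{2}+\rho|g_0|^2\big)\mathbf G_{\mathtt b}$, with $\mathbf G_{\mathtt b}$ the Gram matrix of $(\dot{\vb}_\theta,\dot{\vb}_\phi)$. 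The same determinant computation gives a value proportional to $N_{\mathtt r}^4\cos^2\theta\sin^4\phi>0$. Hence both CRLBs are $\Theta(1/N_{\mathtt r}^2)$.

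The main obstacle I expect is the bookkeeping in Step 2. One must show that the oscillating, possibly growing terms $|g_\psi|^2=O(N_{\mathtt t})$ and the Schur-complement coupling terms never reach order $N_{\mathtt t}$. One must also show that no cancellation occurs in the leading $2\times2$ determinant. The latter is handled by the explicit $\cos^2\theta\sin^4\phi$ computation, under the generic assumption $\Delta_{\mathtt x}\neq0$ and $\cos\theta\sin\phi\neq0$.
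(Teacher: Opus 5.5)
Your proposal is correct and follows essentially the same route as the paper's proof: you bound the order of each constant in \eqref{chat1_thm}--\eqref{chat2_thm} under $\boldsymbol{\xi}_{\mathtt{bf}}^{\T}\boldsymbol{\gamma}=\rho=P/(2N_{\mathtt t})$ and substitute into \eqref{eq_closed_crlb_theta}--\eqref{eq_closed_crlb_phi}. Your added Gram-determinant computation ($\det\propto\cos^2\theta\sin^4\phi$) rules out cancellation of the leading terms in $\widetilde T_{\psi\psi}-\widetilde T_{\theta\phi}^2/\widetilde T_{\psi_{\mathtt c}\psi_{\mathtt c}}$; the paper only asserts this step, stating the denominator is $\mathcal{O}(N_{\mathtt t})$, which as written is just an upper bound, and your genericity conditions make its implicit assumptions explicit.
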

\begin{proof}
    See Appendix \ref{app:asymptotic_Nt_Nr}.\hfill$\blacksquare$ 
\end{proof}
\begin{corollary}
\label{rem:asymptotic_crlb}
For fixed $N_{\mathtt{t}}$ and $N_{\mathtt{r}}$, and for small
target-angle errors, the CRLBs in \eqref{eq_closed_crlb_theta} and \eqref{eq_closed_crlb_phi} can be approximated as
\begin{align} \mathtt{CRLB}_{\psi}^{\mathtt{bf}}(\boldsymbol{\varepsilon}) &\approx \mathtt{CRLB}_{\psi}^{\mathtt{bf}}(\mathbf{0}) + \left(\mathtt{CRLB}_{\psi}^{\mathtt{bf}}(\mathbf{0})\right)^2 \rho\chi_{\mathtt{s}}N_\mathtt{t}|\beta_{\mathtt{s}}|^2 \nonumber\\ &\,\,\,\times \| \dot{\mathbf{b}}_{\psi} - r_{\psi} \dot{\mathbf{b}}_{\psi_{\mathtt{c}}} \|^2\!\! \underbrace{ \| \mathbf{a}(\theta,\phi) - \mathbf{a}(\hat{\theta},\hat{\phi}) \|^2}_{\text{transmit steering-vector mismatch}}, \label{eq:crlb_small_error_mismatch} \end{align}
where $\psi\in\{\theta,\phi\}$, $\psi_{\mathtt{c}}=\phi$ for
$\psi=\theta$, and $\psi_{\mathtt{c}}=\theta$ for $\psi=\phi$.
Furthermore, $\mathtt{CRLB}_{\psi}^{\mathtt{bf}}(\mathbf{0})$ denotes the
CRLB without target-angle errors, $\|\mathbf{a}(\theta,\phi)-\mathbf{a}(\hat{\theta},\hat{\phi})\|^2
\approx
\varepsilon_\theta^2\|\dot{\mathbf{a}}_{\theta}\|^2
+\varepsilon_\phi^2\|\dot{\mathbf{a}}_{\phi}\|^2
+2\varepsilon_\theta\varepsilon_\phi
\dot{\mathbf{a}}_{\theta}^{\H}\dot{\mathbf{a}}_{\phi}$, and
$r_{\psi}\triangleq r_{\mathtt{num}}/r_{\mathtt{den}_{\psi_{\mathtt{c}}}}$. Here,
$r_{\mathtt{num}}\triangleq
\boldsymbol{\xi}_{\mathtt{bf}}^{\T}(\boldsymbol{\vartheta})
\boldsymbol{\gamma}c_{\theta\phi}
+\rho\chi_{\mathtt{s}}|\beta_{\mathtt{s}}|^{2}N_{\mathtt{t}}^{2}
\dot{\mathbf{b}}_{\theta}^{\H}\dot{\mathbf{b}}_{\phi}$ and 
$r_{\mathtt{den}_{\psi_{\mathtt{c}}}}\triangleq
\boldsymbol{\xi}_{\mathtt{bf}}^{\T}(\boldsymbol{\vartheta})
\boldsymbol{\gamma}c_{\psi_\mathtt{c}}
+\rho\chi_{\mathtt{s}}|\beta_{\mathtt{s}}|^{2}N_{\mathtt{t}}^{2}
\|\dot{\mathbf{b}}_{\psi_\mathtt{c}}\|^{2}$.
From \eqref{eq:crlb_small_error_mismatch}, we note that the target-angle errors increase
the CRLB mainly through the transmit steering-vector mismatch.
Since this mismatch is quadratic in
$(\varepsilon_{\theta},\varepsilon_{\phi})$, the CRLB increases
quadratically with the target-angle errors.
\end{corollary}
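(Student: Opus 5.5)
We will prove Corollary~\ref{rem:asymptotic_crlb} by a second-order perturbation of the equivalent FIM in \eqref{eq_FIM} around $\boldsymbol{\varepsilon}=\mathbf{0}$. We then push the perturbation through the $2\times 2$ inverse. The starting point is the structure in Theorem~\ref{thm:deterministic_crlb}: every entry is affine in the communications part $\boldsymbol{\xi}_{\mathtt{bf}}^{\T}\boldsymbol{\gamma}c_{(\cdot)}$, which does not depend on $\boldsymbol{\varepsilon}$. The angle errors enter only through $g_0=\mathbf{u}^\H\mathbf{a}$, $g_\theta=\mathbf{u}^\H\dot{\mathbf{a}}_\theta$ and $g_\phi=\mathbf{u}^\H\dot{\mathbf{a}}_\phi$, with $\mathbf{u}=\mathbf{a}(\hat\theta,\hat\phi)$.

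First, write $\mathbf{u}=\mathbf{a}-\mathbf{d}$ with $\mathbf{d}\triangleq\mathbf{a}(\theta,\phi)-\mathbf{a}(\hat\theta,\hat\phi)$. Then
\begin{align*}
g_0 &= N_{\mathtt t}-\mathbf{d}^\H\mathbf{a}, & g_\psi &= \mathbf{a}^\H\dot{\mathbf{a}}_\psi-\mathbf{d}^\H\dot{\mathbf{a}}_\psi .
\end{align*}
The centered array gives $\mathbf{a}^\H\dot{\mathbf{a}}_\psi=0$, and $\|\mathbf{u}\|^2=\|\mathbf{a}\|^2=N_{\mathtt t}$ gives $\mathtt{Re}(\mathbf{d}^\H\mathbf{a})=\tfrac12\|\mathbf{d}\|^2$. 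Hence, to second order,
\begin{align*}
|g_0|^2 &\approx N_{\mathtt t}^2-N_{\mathtt t}\|\mathbf{d}\|^2+|\mathbf{d}^\H\mathbf{a}|^2, & g_\psi &= -\mathbf{d}^\H\dot{\mathbf{a}}_\psi=\mathcal{O}(\|\boldsymbol{\varepsilon}\|).
\end{align*}
At first order $\mathbf{d}\approx-(\varepsilon_\theta\dot{\mathbf{a}}_\theta+\varepsilon_\phi\dot{\mathbf{a}}_\phi)$, which is orthogonal to $\mathbf{a}$. So $|\mathbf{d}^\H\mathbf{a}|^2=\mathcal{O}(\|\boldsymbol{\varepsilon}\|^4)$ and $|g_0|^2\approx N_{\mathtt t}^2-N_{\mathtt t}\|\mathbf{d}\|^2$. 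The stated expansion of $\|\mathbf{d}\|^2$ then follows directly from this linearization.

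Second, substitute these expressions into \eqref{chattheta}--\eqref{chat2_thm} and collect orders. The terms $|g_\psi|^2N_{\mathtt r}$, $g_\theta g_\phi N_{\mathtt r}$ and the correction $\rho^2|\hat c_{\psi\tilde{\boldsymbol\beta}_{\mathtt s}}|^2/(\cdots)$ are proportional to $|g_0|^2|g_\psi|^2$ over a denominator of order $N_{\mathtt t}N_{\mathtt r}$. These two kinds of contribution are of the same order $\mathcal{O}(\|\boldsymbol\varepsilon\|^2)$ and partially cancel, as in the perfect-CSI case where the $\tilde{\boldsymbol\beta}_{\mathtt s}$ coupling vanishes. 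I would show that their net contribution is subleading relative to the $|g_0|^2\|\dot{\mathbf{b}}_\psi\|^2$ term. The argument is that this term scales with the receive aperture $N_{\mathtt r}^{3}$-type factor $\|\dot{\mathbf b}_\psi\|^2$, while the others carry only $N_{\mathtt r}$; this is also what the corollary implicitly asserts. This is the step I expect to be the main obstacle, because making "mainly" precise requires bounding those terms rather than cancelling them exactly. I would first try to show that the $\varepsilon$-dependent parts of $\hat c_\psi$ and of the Schur correction combine to $\rho N_{\mathtt r}|\beta_{\mathtt s}|^2\chi_{\mathtt s}\,|\mathbf{d}^\H\dot{\mathbf a}_\psi|^2(1-\cdot)$, and then drop it. Under that approximation the FIM becomes $\mathbf{J}(\boldsymbol\varepsilon)\approx\mathbf{J}(\mathbf 0)-\delta\,\mathbf{B}$, where
\begin{align*}
\delta &\triangleq \rho\chi_{\mathtt s}|\beta_{\mathtt s}|^2N_{\mathtt t}\|\mathbf d\|^2, & \mathbf{B} &\triangleq \begin{bmatrix}\|\dot{\mathbf b}_\theta\|^2 & \dot{\mathbf b}_\theta^\H\dot{\mathbf b}_\phi\\ \dot{\mathbf b}_\phi^\H\dot{\mathbf b}_\theta & \|\dot{\mathbf b}_\phi\|^2\end{bmatrix}.
\end{align*}
Here $\mathbf{J}(\mathbf 0)$ has entries $r_{\mathtt{den}}$ on the diagonal and $r_{\mathtt{num}}$ off the diagonal, since $|g_0|^2=N_{\mathtt t}^2$ at $\boldsymbol\varepsilon=\mathbf 0$.

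Finally, apply the first-order inverse perturbation $\mathbf{J}^{-1}(\boldsymbol\varepsilon)\approx\mathbf J^{-1}(\mathbf 0)+\delta\,\mathbf J^{-1}(\mathbf 0)\mathbf B\mathbf J^{-1}(\mathbf 0)$. Write $[\mathbf J^{-1}(\mathbf 0)]_{\psi\psi}=\mathtt{CRLB}_\psi^{\mathtt{bf}}(\mathbf 0)$ and $[\mathbf J^{-1}(\mathbf 0)]_{\psi\psi_{\mathtt c}}=-\tfrac{r_{\mathtt{num}}}{r_{\mathtt{den}_{\psi_{\mathtt c}}}}\mathtt{CRLB}_\psi^{\mathtt{bf}}(\mathbf 0)=-r_\psi\mathtt{CRLB}_\psi^{\mathtt{bf}}(\mathbf 0)$, using the $2\times2$ cofactor formula. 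The $\psi$-th row of $\mathbf J^{-1}(\mathbf 0)$ is then $\mathtt{CRLB}_\psi^{\mathtt{bf}}(\mathbf 0)\,\mathbf e^{\T}$ with $\mathbf e=[1,-r_\psi]$ ordered as $(\psi,\psi_{\mathtt c})$. Since $\mathbf{B}$ is the Gram matrix of $(\dot{\mathbf b}_\psi,\dot{\mathbf b}_{\psi_{\mathtt c}})$, the quadratic form $\mathbf e^{\T}\mathbf B\mathbf e$ equals $\|\dot{\mathbf b}_\psi-r_\psi\dot{\mathbf b}_{\psi_{\mathtt c}}\|^2$, using that $r_\psi$ is real. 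Substituting gives exactly \eqref{eq:crlb_small_error_mismatch}. Quadratic growth then follows because $\|\mathbf d\|^2$ is a positive semidefinite quadratic form in $\boldsymbol\varepsilon$.
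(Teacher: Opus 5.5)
Your proposal is correct and has the same skeleton as the paper's proof in Appendix~\ref{app:asymptotic_crlb}. That skeleton has three steps: a second-order expansion giving $|g_0|^2\approx N_{\mathtt t}^2-N_{\mathtt t}\|\mathbf d\|^2$ and $g_\psi=\mathcal{O}(\|\boldsymbol{\varepsilon}\|)$; omission of the $g_\psi$-terms that survive the Schur-complement cancellation; and a first-order inversion.

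Two sub-steps are carried out differently.

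\emph{The expansion of $|g_0|^2$.} The paper Taylor-expands $\Delta_{\mathtt y},\Delta_{\mathtt z}$ and the Dirichlet-type sums via \eqref{dnapprox}--\eqref{qnapprox}, and recognizes $\delta_{\mathbf a}(\boldsymbol{\varepsilon})$ only after rearranging. You instead use $2\,\mathtt{Re}(\mathbf d^{\H}\mathbf a)=\|\mathbf d\|^2$ (from $\|\mathbf u\|=\|\mathbf a\|$) together with $\mathbf a^{\H}\dot{\mathbf a}_\psi=0$. Your route is shorter and is exact up to the explicit nonnegative remainder $|\mathbf d^{\H}\mathbf a|^2=\mathcal{O}(\|\boldsymbol{\varepsilon}\|^4)$. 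It also uses nothing about the UPA beyond constant modulus and phase centering, and it yields the steering-vector mismatch directly.

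\emph{The inversion.} The paper expands the scalar $I_\psi=T_\psi-S^2/T_{\psi_{\mathtt c}}$ term by term and then applies $(x-y)^{-1}\approx 1/x+y/x^2$. Your perturbation $\mathbf J^{-1}(\mathbf 0)+\delta\,\mathbf J^{-1}(\mathbf 0)\mathbf B\mathbf J^{-1}(\mathbf 0)$ treats both angles at once. It identifies $\|\dot{\mathbf b}_\psi-r_\psi\dot{\mathbf b}_{\psi_{\mathtt c}}\|^2$ as the Gram form $\mathbf B$ evaluated at the normalized $\psi$-th row of $\mathbf J^{-1}(\mathbf 0)$. This makes its nonnegativity, and hence the monotonicity used in Remark~\ref{rem:tradeoff}, immediate.

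\textbf{The step you flagged as the main obstacle.} The paper does no more than you here: it asserts that the residual is of relative order $\mathcal{O}(N_{\mathtt r}^{-1})$ and omits it. Your stated aperture reason would not suffice on its own, however.

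Take $\varepsilon_\phi=0$ and work to leading order. The ratio of $N_{\mathtt r}|g_\theta|^2$ to $N_{\mathtt t}\|\mathbf d\|^2\|\dot{\mathbf b}_\theta\|^2$ is $N_{\mathtt r}\|\dot{\mathbf a}_\theta\|^2/(N_{\mathtt t}\|\dot{\mathbf b}_\theta\|^2)=(N_{\mathtt{ty}}^2-1)/(N_{\mathtt{ry}}^2-1)$. For $N_{\mathtt{ty}}=11$ and $N_{\mathtt{ry}}=5$ this equals $5$. The transmit aperture inside $|g_\theta|^2$ offsets the receive aperture in $\|\dot{\mathbf b}_\theta\|^2$, which is $\mathcal{O}(N_{\mathtt r}^2)$, not $N_{\mathtt r}^3$, for square arrays.

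What makes the term negligible is the Schur cancellation, and it can be done exactly. Let $s^{\mathtt{bf}}=\boldsymbol{\xi}_{\mathtt{bf}}^{\T}\boldsymbol{\gamma}$ and $\mathbf g=[g_\theta,g_\phi]^{\T}$. All terms of \eqref{eq_FIM} involving $g_\theta,g_\phi$ combine into $\rho\chi_{\mathtt s}|\beta_{\mathtt s}|^2N_{\mathtt r}\frac{s^{\mathtt{bf}}N_{\mathtt t}}{s^{\mathtt{bf}}N_{\mathtt t}+\rho|g_0|^2}\,\mathtt{Re}(\mathbf g\mathbf g^{\H})$. The prefactor is $\approx s^{\mathtt{bf}}/(s^{\mathtt{bf}}+\rho N_{\mathtt t})$, which produces the paper's $\mathcal{O}(N_{\mathtt r}^{-1})$ for square arrays with $s^{\mathtt{bf}}\approx\rho$.

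You can go further. This term is positive semidefinite, $|\mathbf d^{\H}\mathbf a|^2\geq0$, and $\mathbf B\succeq\mathbf 0$, so $\mathbf J(\boldsymbol{\varepsilon})\succeq\mathbf J(\mathbf 0)-\delta\mathbf B$ holds exactly. Hence $[(\mathbf J(\mathbf 0)-\delta\mathbf B)^{-1}]_{\psi\psi}$ is a rigorous upper bound on $\mathtt{CRLB}_\psi^{\mathtt{bf}}(\boldsymbol{\varepsilon})$ whenever $\mathbf J(\mathbf 0)-\delta\mathbf B\succ\mathbf 0$. The approximation \eqref{eq:crlb_small_error_mismatch} is exactly its first-order expansion in $\delta$. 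This makes ``mainly'' precise without having to bound the discarded terms.
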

\begin{proof}
    See Appendix \ref{app:asymptotic_crlb}.\hfill$\blacksquare$ 
\end{proof}

\begin{remark}
\label{rem:tradeoff}
Let us define the feasible sensing-power sets for a given CRLB
threshold $\mathtt{CRLB}_{\psi}^{0}$ as
$\mathcal{F}_{\psi}(\boldsymbol{\varepsilon})
\triangleq
\{\rho\geq0:
\mathtt{CRLB}_{\psi}^{\mathtt{bf}}(\boldsymbol{\varepsilon},\rho)
\leq \mathtt{CRLB}_{\psi}^{0}\}$ and
$\mathcal{F}_{\psi}(\mathbf{0})
\triangleq
\{\rho\geq0:
\mathtt{CRLB}_{\psi}^{\mathtt{bf}}(\mathbf{0},\rho)
\leq \mathtt{CRLB}_{\psi}^{0}\}$, where $\psi\in\{\theta,\phi\}$. From
\eqref{eq:crlb_small_error_mismatch}, for a given sensing power $\rho$, we
have
$\mathtt{CRLB}_{\psi}^{\mathtt{bf}}(\boldsymbol{\varepsilon},\rho)
\geq
\mathtt{CRLB}_{\psi}^{\mathtt{bf}}(\mathbf{0},\rho)$. Hence, if
$\rho\in\mathcal{F}_{\psi}(\boldsymbol{\varepsilon})$,
$\mathtt{CRLB}_{\psi}^{\mathtt{bf}}(\mathbf{0},\rho)
\leq
\mathtt{CRLB}_{\psi}^{\mathtt{bf}}(\boldsymbol{\varepsilon},\rho)
\leq \mathtt{CRLB}_{\psi}^{0}$, which yields
$\mathcal{F}_{\psi}(\boldsymbol{\varepsilon})
\subseteq
\mathcal{F}_{\psi}(\mathbf{0})$. Therefore,
$\rho_{\min_\psi}(\boldsymbol{\varepsilon})
\geq
\rho_{\min_\psi}(\mathbf{0})$, where
$\rho_{\min_\psi}(\boldsymbol{\varepsilon})
\triangleq \inf \mathcal{F}_{\psi}(\boldsymbol{\varepsilon})$. Thus, the minimum sensing power requirement to satisfy the CRLB threshold increases under target-angle errors. From \eqref{eq_power},
for a fixed downlink transmit power, the power available for communications satisfies
$\boldsymbol{\xi}_{\mathtt{bf}}^{\T}(\boldsymbol{\vartheta})
\boldsymbol{\gamma}
\leq
P/N_{\mathtt{t}}
-\rho_{\min_\psi}(\boldsymbol{\varepsilon})
\leq
P/N_{\mathtt{t}}
-\rho_{\min_\psi}(\mathbf{0})$. Hence, target-angle errors reduce the power
available for communications while increasing the sensing power $\rho$. Moreover, from
\eqref{eq_SE_theo}, the reduced communications power decreases the
term $\lambdabf_k({\bm{\vartheta}})\gamma_k$
in the numerator, while an increase in $\rho$ makes the term
$N_{\mathtt{t}}\beta_k\rho$ in the denominator to grow; this decreases the
achievable rate. 
We can conclude that the target-angle errors affect the rate indirectly through the communications and sensing power allocation tradeoff, even though they do not explicitly appear in the achievable-rate expression in~\eqref{eq_SE_theo}.
\end{remark}

The CRLB expressions in \eqref{eq_closed_crlb_theta} and \eqref{eq_closed_crlb_phi} hold for both deterministic and random target-angle errors. Since $\boldsymbol{\varepsilon}$ is modeled as a random vector, the corresponding CRLBs are also random. This motivates the use of the CRLB outage probability, which characterizes the sensing reliability in the presence of target-angle uncertainty.

\subsection{Angle-Error Models and CRLB Outage Analysis}
\label{section:angleerror_outage}
To characterize sensing reliability under target-angle uncertainty, we model the azimuth and elevation angle errors, denoted by $\varepsilon_{\theta}$ and $\varepsilon_{\phi}$, respectively, as independent random variables~\cite{yu20073},\cite{aubry2024sensor}. {We consider the following error models}: 
\begin{itemize}
\item \emph{Gaussian distribution:} The target-angle error is modeled as
$\varepsilon_{\psi}\sim\mathcal{N}(0,\sigma_{\psi}^{2})$,
$\psi\in\{\theta,\phi\}$, where $\sigma_{\psi}$ denotes the standard deviation of the angle error \cite{yu20073},\cite{aubry2024sensor}. 

\item \emph{Generalized uniform distribution:} The angle error is assumed to be
uniformly distributed as
$\varepsilon_{\psi}\sim\mathcal{U}\bigl(-\frac{\pi}{U_{\psi}},\frac{\pi}{U_{\psi}}\bigr)$,
where $\frac{\pi}{U_{\psi}}$ is the maximum absolute error \cite{uniyal2026outage}. A larger value of
$U_{\psi}$ implies a narrower error interval.

\item \emph{von Mises distribution:} The target-angle error follows the von Mises distribution, i.e.,
$\varepsilon_{\psi}\sim\mathcal{VM}(0,\kappa_{\psi})$, where $\kappa_{\psi}$ is the concentration
parameter \cite{uniyal2026outage}. Larger $\kappa_{\psi}$ values indicate a higher concentration of angle error around zero.
\end{itemize}
Since the CRLBs in \eqref{eq_closed_crlb_theta} and \eqref{eq_closed_crlb_phi} depend on the random angle-error vector $\boldsymbol{\varepsilon}$, they become random quantities under the above models. To quantify the resulting sensing reliability, we define the CRLB outage probability as
\begin{align}
\mathsf{P}_{\mathtt{out}_\psi}^{\mathtt{bf}}
(\boldsymbol{\vartheta},\boldsymbol{\gamma},\rho)
&\triangleq
\mathsf{P}
\left\{
\mathtt{CRLB}_{\psi}^{\mathtt{bf}}
(\boldsymbol{\vartheta},\boldsymbol{\gamma},\rho,\boldsymbol{\varepsilon})
>
\mathtt{CRLB}_{\psi}^{0}
\right\}
\nonumber\\
&=
1-
\mathsf{P}
\left\{
\mathtt{CRLB}_{\psi}^{\mathtt{bf}}
(\boldsymbol{\vartheta},\boldsymbol{\gamma},\rho,\boldsymbol{\varepsilon})
\leq
\mathtt{CRLB}_{\psi}^{0}
\right\}
\nonumber\\
&=
1-
F_{\mathtt{CRLB}_{\psi}}^{\mathtt{bf}}
\left(
\mathtt{CRLB}_{\psi}^{0}
\right),
\label{eq:outage_definition}
\end{align}
where $\psi\in\{\theta,\phi\}$, $F_{\mathtt{CRLB}_{\psi}}^\mathtt{bf}(\cdot)$ is the cumulative distribution function (CDF) of $\mathtt{CRLB}_{\psi}^\mathtt{bf}$, and $\mathtt{CRLB}_{\psi}^{0}$ is the predefined CRLB threshold. To facilitate the derivation of $\mathsf{P}_{\mathtt{out}_\psi}^\mathtt{bf}$, we next provide a tractable approximation for $F_{\mathtt{CRLB}_{\psi}}^\mathtt{bf}(\cdot)$.
\begin{theorem}
\label{thm:random_crlb}
Under Gaussian, generalized-uniform, and von Mises target-angle error models, the CDF of the random CRLB admits the following unified approximation:
\begin{align}
F^\mathtt{bf}_{{\mathtt{CRLB}}_{\psi}}(x)
\!\approx\!
\sum_{i=1}^{G_\theta}\sum_{j=1}^{G_\phi}
\frac{w_{\theta_i}\,w_{\phi_j}}{1\!+\!\exp\!\left(
\!-\varrho\!\left(
\frac{1}{{\mathtt{CRLB}}_\psi^\mathtt{bf}(\bm{\vartheta},\bm{\gamma},\rho,\mathbf{z}_{ij})}\!-\!\frac{1}{x}
\right)\right)},
\label{eq:cdf_unified_1overcrlb}
\end{align}
where $x>0$,
$\psi \in \{\theta,\phi\}$,
$G_\theta$ and $G_\phi$ are the order of the Gaussian-quadrature rule. Furthermore,
$\varrho>0$ is the sharpness parameter of the sigmoid approximation \cite[(11)]{shi_sigmoid} and
$\vz_{ij}\triangleq
[z_{\theta_i},
z_{\phi_j}] ^\T$ and $\{w_{\theta_i},w_{\phi_j}\}$
are the nodes and weights of the Gauss-quadrature rule, respectively, given by
\begin{align}
(z_{\psi_q},w_{\psi_q})\!\triangleq
\begin{cases}
\big(\frac{\pi}{U_\psi} z_q^{\mathtt{GL}},\frac{1}{2}w_q^{\mathtt{GL}}\big),
& \hspace{-0.3cm}\varepsilon_\psi\sim\mathcal{U}(-\frac{\pi}{U_\psi},\frac{\pi}{U_\psi}),\\
\big(\sqrt{2}\sigma_\psi z_q^{\mathtt{GH}},\frac{1}{\sqrt{\pi}}w_q^{\mathtt{GH}}\big),
& \hspace{-0.3cm}\varepsilon_\psi\sim\mathcal{N}(0,\sigma_\psi^2),\\
\Big(\pi z_q^{\mathtt{GL}},\frac{1}{2 I_0(\kappa_\psi)} w_q^{\mathtt{GL}} \\ \hspace{0.1cm}\times 
\!\exp\!\big(\kappa_\psi\cos(\pi z_q^{\mathtt{GL}})\big)\!\Big),
& \hspace{-0.2cm}\varepsilon_\psi\sim\mathcal{VM}(0,\kappa_\psi),
\end{cases}
\label{eq:mapped_theta_nodes_weights}
\end{align}
with $q\in\{i,j\}$, $z_q^{\mathtt{GL}}$ and $z_q^{\mathtt{GH}}$ being the zeros of the Legendre polynomial and Hermite polynomial, respectively, while
$w_q^{\mathtt{GL}}$ and $w_q^{\mathtt{GH}}$ are the respective Gauss--Legendre and Gauss--Hermite weights \cite[(25.4.29)]{abram_gauss_quadrature}, \cite[Proposition 2]{Marco_di_renzo}.
\end{theorem}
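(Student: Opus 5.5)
The plan is to write the CDF as an expectation of an indicator over the two independent angle errors, smooth the indicator with a sigmoid, and then evaluate the resulting two-dimensional integral by a tensor-product Gauss quadrature adapted to each error density.

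\emph{Step 1 (expectation form).} Since $\mathtt{CRLB}_\psi^{\mathtt{bf}}>0$ and $x>0$, the event $\{\mathtt{CRLB}_\psi^{\mathtt{bf}}\le x\}$ equals $\{1/\mathtt{CRLB}_\psi^{\mathtt{bf}}-1/x\ge 0\}$. Working with the reciprocal is natural because, by Theorem~\ref{thm:deterministic_crlb}, $1/\mathtt{CRLB}_\psi^{\mathtt{bf}}$ is the Schur-complement Fisher information. This information is a smooth, bounded function of $\boldsymbol{\varepsilon}$ through $g_0,g_\theta,g_\phi$ and $\Delta_{\mathtt{x}}$. Hence
\begin{align*}
F^{\mathtt{bf}}_{\mathtt{CRLB}_\psi}(x)=\int\!\!\int \mathbbm{1}\Big\{\tfrac{1}{\mathtt{CRLB}_\psi^{\mathtt{bf}}(\boldsymbol{\vartheta},\boldsymbol{\gamma},\rho,\boldsymbol{\varepsilon})}-\tfrac{1}{x}\ge 0\Big\} f_{\varepsilon_\theta}(\varepsilon_\theta)f_{\varepsilon_\phi}(\varepsilon_\phi)\,d\varepsilon_\theta d\varepsilon_\phi ,
\end{align*}
where the joint density factorizes because the errors are independent.

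\emph{Step 2 (sigmoid smoothing).} Replace the unit step by the logistic function $\mathbbm{1}\{t\ge0\}\approx 1/(1+e^{-\varrho t})$ with sharpness $\varrho$, following \cite[(11)]{shi_sigmoid}. The integrand then becomes smooth in $\boldsymbol{\varepsilon}$, which is what makes polynomial-exact quadrature accurate.

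\emph{Step 3 (per-model quadrature).} Treat each one-dimensional integral separately, with the other variable held fixed.
\begin{itemize}
\item Uniform errors: substitute $\varepsilon_\psi=(\pi/U_\psi)z$ on $[-1,1]$. The density $U_\psi/(2\pi)$ times the Jacobian $\pi/U_\psi$ gives the factor $1/2$, so Gauss--Legendre applies with nodes $(\pi/U_\psi)z_q^{\mathtt{GL}}$ and weights $w_q^{\mathtt{GL}}/2$.
\item Gaussian errors: substitute $\varepsilon_\psi=\sqrt{2}\sigma_\psi z$, which turns the density into $e^{-z^2}/\sqrt{\pi}$. Gauss--Hermite then gives nodes $\sqrt2\sigma_\psi z_q^{\mathtt{GH}}$ and weights $w_q^{\mathtt{GH}}/\sqrt{\pi}$.
\item von Mises errors: the support is $[-\pi,\pi]$ with density $e^{\kappa_\psi\cos\varepsilon}/(2\pi I_0(\kappa_\psi))$. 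Substitute $\varepsilon=\pi z$ (Jacobian $\pi$) and absorb the non-polynomial factor $e^{\kappa_\psi\cos(\pi z)}/(2I_0(\kappa_\psi))$ into the weights of a Gauss--Legendre rule, as in \cite[Proposition 2]{Marco_di_renzo}.
\end{itemize}
Applying the rules along $\theta$ and $\phi$ in turn, which Fubini permits for the bounded smooth integrand, yields the double sum at the grid points $\mathbf{z}_{ij}$. This is exactly \eqref{eq:cdf_unified_1overcrlb} with the nodes and weights in \eqref{eq:mapped_theta_nodes_weights}.

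\emph{Main obstacle.} The difficulty is justifying accuracy rather than the algebra. The sigmoid error is non-negligible where $1/\mathtt{CRLB}_\psi^{\mathtt{bf}}$ is close to $1/x$. In the von Mises case with large $\kappa_\psi$, the weighted integrand is sharply peaked, so the Legendre rule converges slowly. I would argue informally that the approximation error vanishes as $\varrho\to\infty$ and $G_\theta,G_\phi\to\infty$. The tools are dominated convergence for the sigmoid limit, which holds provided the level set $\{1/\mathtt{CRLB}=1/x\}$ has measure zero, true since the map is real-analytic and nonconstant, and standard Gauss-quadrature convergence for continuous integrands. Numerical validation is deferred to Section~\ref{section:numerical_results}, which is why the statement is phrased as an approximation.
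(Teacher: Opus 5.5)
Your proposal is correct and follows the paper's proof in Appendix~\ref{app:random_crlb} essentially step for step: you write the CDF as the expectation of $\mathbbm{1}\{1/\mathtt{CRLB}_\psi^{\mathtt{bf}}-1/x\geq 0\}$ over independent errors, replace the indicator by the logistic sigmoid, and use the substitutions $\varepsilon_\psi=\sqrt{2}\sigma_\psi u$, $(\pi/U_\psi)u$, and $\pi u$ followed by Gauss--Hermite or Gauss--Legendre quadrature, which give the same factors $1/\sqrt{\pi}$, $1/2$, and $e^{\kappa_\psi\cos(\pi u)}/(2I_0(\kappa_\psi))$ as the paper. Your closing discussion of accuracy (limits $\varrho\to\infty$ and $G_\theta,G_\phi\to\infty$ via dominated convergence and a measure-zero level set) goes beyond the paper, which presents the result only as an approximation and validates it numerically in Section~\ref{section:numerical_results}.
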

\begin{proof}
 See Appendix~\ref{app:random_crlb}. \hfill$\blacksquare$
\end{proof}
By Theorem~\ref{thm:random_crlb}, the CRLB outage probability can be readily obtained by substituting \eqref{eq:cdf_unified_1overcrlb} into \eqref{eq:outage_definition}.
\begin{corollary}
\label{rem:impact}
As $\sigma_{\psi}\to0$, $U_{\psi}\to\infty$, or $\kappa_{\psi}\to\infty$,
the CRLB outage probability in \eqref{eq:outage_definition} approaches zero
for a sufficiently large sharpness parameter $\varrho$ provided that
$\mathtt{CRLB}_{\psi}^{\mathtt{bf}}(\mathbf{0})<\mathtt{CRLB}_{\psi}^{0}$, where $\mathtt{CRLB}_{\psi}^{\mathtt{bf}}(\mathbf{0})$ denotes the CRLB in the absence of target-angle errors.
\end{corollary}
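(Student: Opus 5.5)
We will work directly with the unified approximation of Theorem~\ref{thm:random_crlb}. Substituting \eqref{eq:cdf_unified_1overcrlb} into \eqref{eq:outage_definition} gives
\begin{align*}
\mathsf{P}_{\mathtt{out}_\psi}^{\mathtt{bf}}\approx 1-\sum_{i=1}^{G_\theta}\sum_{j=1}^{G_\phi} w_{\theta_i}w_{\phi_j}\, S_\varrho\!\left(\frac{1}{\mathtt{CRLB}_\psi^{\mathtt{bf}}(\mathbf{z}_{ij})}-\frac{1}{\mathtt{CRLB}_\psi^{0}}\right),
\end{align*}
where $S_\varrho(t)\triangleq 1/(1+e^{-\varrho t})$. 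The plan is to show that, in each limit, every node $\mathbf{z}_{ij}$ that carries non-negligible weight converges to $\mathbf{0}$. We will also use that the weights are normalized, i.e., $\sum_i w_{\theta_i}\to1$ and $\sum_j w_{\phi_j}\to 1$. It then follows that the sum tends to $S_\varrho(1/\mathtt{CRLB}_\psi^{\mathtt{bf}}(\mathbf{0})-1/\mathtt{CRLB}_\psi^0)$. Under the hypothesis $\mathtt{CRLB}_{\psi}^{\mathtt{bf}}(\mathbf{0})<\mathtt{CRLB}_{\psi}^{0}$, the argument $t_0\triangleq 1/\mathtt{CRLB}_\psi^{\mathtt{bf}}(\mathbf{0})-1/\mathtt{CRLB}_\psi^0$ is strictly positive, so $S_\varrho(t_0)=1/(1+e^{-\varrho t_0})\to1$ as $\varrho\to\infty$, and the outage tends to zero.

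The argument proceeds model by model using \eqref{eq:mapped_theta_nodes_weights}. For the Gaussian model, the nodes are $\sqrt2\sigma_\psi z_q^{\mathtt{GH}}\to0$ as $\sigma_\psi\to0$, and the weights $w_q^{\mathtt{GH}}/\sqrt\pi$ are fixed and sum to one because Gauss--Hermite integrates constants exactly. For the uniform model, the nodes are $(\pi/U_\psi)z_q^{\mathtt{GL}}\to0$ as $U_\psi\to\infty$, and the weights $w_q^{\mathtt{GL}}/2$ sum to one.

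In both of these cases we then need continuity of $\boldsymbol{\varepsilon}\mapsto\mathtt{CRLB}_\psi^{\mathtt{bf}}(\boldsymbol{\varepsilon})$ at $\mathbf{0}$. We will get it from Theorem~\ref{thm:deterministic_crlb}. The quantities $D_{N}(\Delta)$ and $Q_N(\Delta)$ in \eqref{Sn}--\eqref{Qn} have removable singularities at $\Delta=0$ with limits $N$ and $0$; equivalently, they are finite sums of exponentials $\sum_n e^{j\pi c_n\Delta}$ and are therefore analytic. The $\Delta_{\mathtt x}$ in \eqref{deltax} are continuous in $\boldsymbol{\varepsilon}$. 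The CRLB is a rational function of $g_0,g_\theta,g_\phi$ whose denominator, the determinant of the equivalent FIM \eqref{eq_FIM}, is nonzero at $\mathbf{0}$ under the standing assumption that $\mathtt{CRLB}_\psi^{\mathtt{bf}}(\mathbf{0})$ is finite. Alternatively, Corollary~\ref{rem:asymptotic_crlb} directly gives $\mathtt{CRLB}_\psi^{\mathtt{bf}}(\boldsymbol{\varepsilon})\to\mathtt{CRLB}_\psi^{\mathtt{bf}}(\mathbf{0})$.

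The von Mises case is the main obstacle, because the quadrature nodes $\pi z_q^{\mathtt{GL}}$ do not shrink; only the weights concentrate. The weight at node $q$ is $w_q^{\mathtt{GL}}e^{\kappa\cos(\pi z_q^{\mathtt{GL}})}/(2I_0(\kappa))$. Using $I_0(\kappa)\sim e^\kappa/\sqrt{2\pi\kappa}$, each weight behaves like $\sqrt{\kappa}\,e^{-\kappa(1-\cos\pi z_q)}$, which vanishes unless $z_q=0$ (an odd-order rule has $z=0$ as a node). So, for fixed quadrature order, the mass does not stay normalized. I would therefore argue at the level of the exact CDF instead: the Gauss--Legendre sum is a consistent approximation of $\mathbb{E}[S_\varrho(\cdot)]$ for large $G$. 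Since $\mathcal{VM}(0,\kappa_\psi)$ converges in distribution to $\delta_0$ as $\kappa_\psi\to\infty$, and the integrand $S_\varrho(1/\mathtt{CRLB}(\boldsymbol{\varepsilon})-1/\mathtt{CRLB}^0)$ is bounded and continuous at $\mathbf{0}$ by the continuity argument above, weak convergence yields the same limit $S_\varrho(t_0)$.

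The same weak-convergence argument covers the Gaussian and uniform models uniformly, so I would present it as the main proof and the node-based computation as a remark. Finally, sending $\varrho\to\infty$ makes the limit $1-S_\varrho(t_0)\to0$. The order of limits (first the angle-error parameter, then $\varrho$) should be stated explicitly.
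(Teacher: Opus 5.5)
Your proposal is correct, and it reaches the result by a different route from the paper's. The difference matters for the von Mises case.

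\textbf{The paper's route.} The paper argues entirely on the fixed-order quadrature formula \eqref{eq:cdf_unified_1overcrlb}. For the Gaussian and uniform models, the nodes in \eqref{eq:mapped_theta_nodes_weights} shrink to $\mathbf{0}$, so $\mathtt{CRLB}_{\psi}^{\mathtt{bf}}(\mathbf{z}_{ij})\to\mathtt{CRLB}_{\psi}^{\mathtt{bf}}(\mathbf{0})$; continuity is taken for granted. For the von Mises model, it only asserts that the weights ``concentrate around zero error.'' It then lets $\varrho$ grow and uses $\sum_i w_{\psi_i}\approx1$.

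\textbf{Your route.} You establish continuity of the CRLB at $\mathbf{0}$ from the finite-exponential-sum form of $D_N$ and $Q_N$. You then pass to the limit inside the sigmoid-smoothed integral \eqref{eq:app_sigmoid_approx} by weak convergence of the error law to $\delta_{\mathbf{0}}$. This handles all three models uniformly.

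\textbf{Comparison.} The paper's route has the merit of speaking directly about the computable expression used in the optimizer. For the Gaussian and uniform models it is sound, since the weights sum to one exactly, as you note. Your route, however, is the one that is actually valid for the von Mises model. Your estimate that each fixed-$G$ weight behaves like $\sqrt{\kappa_\psi}\,e^{-\kappa_\psi(1-\cos(\pi z_q^{\mathtt{GL}}))}$ shows that the paper's normalization step breaks down as $\kappa_\psi\to\infty$. For an even order such as the paper's $G_\psi\in\{60,80\}$, every weight vanishes, so the fixed-$G$ formula would drive the outage to one. The conclusion holds only if $G$ grows with $\kappa_\psi$, roughly $G\gg\sqrt{\kappa_\psi}$, which is exactly your order-of-limits caveat.

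\textbf{A stronger statement.} Your framework also gives a slightly stronger result for free. Apply the same argument to the indicator instead of the sigmoid. Continuity yields a $\delta>0$ with $\mathsf{P}\{\mathtt{CRLB}_{\psi}^{\mathtt{bf}}(\boldsymbol{\varepsilon})>\mathtt{CRLB}_{\psi}^{0}\}\le\mathsf{P}\{\|\boldsymbol{\varepsilon}\|\ge\delta\}\to0$. So the exact outage vanishes, and the ``sufficiently large $\varrho$'' requirement is merely an artifact of the sigmoid surrogate.
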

\begin{proof} 
As $\sigma_{\psi}\to0$ or $U_{\psi}\to\infty$, the quadrature nodes in \eqref{eq:mapped_theta_nodes_weights} satisfy
$\mathbf{z}_{ij}\to\mathbf{0}$. For the von Mises distribution, the corresponding quadrature weights become increasingly concentrated around zero error as$\kappa_{\psi}\to\infty$. Consequently, for all three angle-error models, the term $\mathtt{CRLB}_{\psi}^{\mathtt{bf}}$ in \eqref{eq:cdf_unified_1overcrlb} satisfies
$\mathtt{CRLB}_{\psi}^{\mathtt{bf}}(\mathbf{z}_{ij})
\to\mathtt{CRLB}_{\psi}^{\mathtt{bf}}(\mathbf{0})$ and we obtain
$F_{\mathtt{CRLB}_{\psi}}^{\mathtt{bf}}(\mathtt{CRLB}_{\psi}^{0})
\to
\sum_{i=1}^{G_{\theta}}\sum_{j=1}^{G_{\phi}}w_{\theta_i}w_{\phi_j}
\bigl(1+\exp(-\varrho d_{\psi}^{(0)})\bigr)^{-1}$, where
$d_{\psi}^{(0)}
\triangleq
1/\mathtt{CRLB}_{\psi}^{\mathtt{bf}}(\mathbf{0})
-1/\mathtt{CRLB}_{\psi}^{0}$. If
$\mathtt{CRLB}_{\psi}^{\mathtt{bf}}(\mathbf{0})
<\mathtt{CRLB}_{\psi}^{0}$, then
$d_{\psi}^{(0)}>0$, and thus
$\exp(-\varrho d_{\psi}^{(0)})\to0$ for a sufficiently large
$\varrho$. Hence,
$F_{\mathtt{CRLB}_{\psi}}^{\mathtt{bf}}(\mathtt{CRLB}_{\psi}^{0})
\to
\sum_{i=1}^{G_{\theta}}\sum_{j=1}^{G_{\phi}}w_{\theta_i}w_{\phi_j}$.
Consequently, from \eqref{eq:outage_definition}, we have
$\mathsf{P}_{\mathtt{out}_{\psi}}^{\mathtt{bf}}
\to
1-\sum_{i=1}^{G_{\theta}}\sum_{j=1}^{G_{\phi}}w_{\theta_i}w_{\phi_j}
\approx0$. Here, the approximation follows since the azimuth and
elevation angle errors are independently distributed and
$\sum_i w_{\psi_i}\approx1$. For example, in the Gaussian case, using
the Gauss--Hermite rule~\cite[(25.4.46)]{abram_gauss_quadrature} yields
$\sum_i w_{\psi_i}\approx(1/\sqrt{\pi})
\int_{-\infty}^{\infty}e^{-x^2}dx=1$. This result also holds for the other considered distributions.
\hfill$\blacksquare$
\end{proof}
\begin{remark}
\label{rem:crlb antennas}
From Corollary~\ref{rem:asymptotic_Nt}, under the equal-power allocation and
square UPAs, we have $\mathtt{CRLB}_{\psi}^{\mathtt{bf}}\to0$ for sufficiently large
$N_{\mathtt{t}}$ or  $N_{\mathtt{r}}$. Substituting this result into
\eqref{eq:cdf_unified_1overcrlb} and \eqref{eq:outage_definition} shows that
$\mathsf{P}_{\mathtt{out}_\psi}^{\mathtt{bf}}
\to 1-\sum_{i=1}^{G_{\theta}}\sum_{j=1}^{G_{\phi}}w_{\theta_i}w_{\phi_j}
\approx0$. 
\end{remark}

Although derived under the equal-power allocation assumption, this asymptotic behavior also holds under more general power allocation schemes, as verified in Section~\ref{section:numerical_results}. However, equal power allocation is generally suboptimal for balancing communications and sensing performance. Furthermore, as noted in Remark~\ref{rem:tradeoff}, target-angle errors do not explicitly affect the achievable rate and only degrade the sensing performance under a fixed power allocation. Their impact on the communications--sensing tradeoff therefore manifests through power allocation, motivating the robust design developed next.

\section{Reliability-Constrained Robust Power Allocation}\label{sec_opt}
\subsection{Problem Formulation}
We now develop a robust communications-centric power allocation scheme under target-angle uncertainty. The pilot-training, communications-transmission, and sensing-transmission powers are jointly optimized to maximize the communications sum rate while satisfying sensing outage and total-power constraints.

From the results in Section~\ref{sec_perf_analysis}, the achievable rate, CRLBs, and CRLB outage probabilities depend on the sensing-power allocation vector $\{\eta_k\}_{k=1}^{K}$ only through the total sensing power $\rho$.
Hence, the optimization of $\{\gamma_k,\eta_k\}_{k=1}^{K}$ is reduced to optimizing
$\{\gamma_k\}_{k=1}^{K}$ and $\rho$. The resulting optimization problem is formulated as
\begin{subequations}\label{prob:robust_power_allocation}
\begin{IEEEeqnarray}{rcl}
& \underset{\substack{\boldsymbol{\vartheta},\,\boldsymbol{\gamma},\,\rho}}
{\mathrm{max}}
& \quad
\sum_{k=1}^{K}
\mathcal{R}_{k}^{\mathtt{bf}}
(\boldsymbol{\vartheta},\boldsymbol{\gamma},\rho)
\label{prob:P0_obj}
\\
& \mathrm{s.t.}
& \quad
\mathsf{P}_{\mathtt{out}_\theta}^{\mathtt{bf}}
(\boldsymbol{\vartheta},\boldsymbol{\gamma},\rho)
\leq
{\mathsf{P}^{\mathtt{0}}_{\theta}},
\label{prob:P0_crlb_theta}
\\
& &
\quad
\mathsf{P}_{\mathtt{out}_\phi}^{\mathtt{bf}}
(\boldsymbol{\vartheta},\boldsymbol{\gamma},\rho)
\leq
{\mathsf{P}^{\mathtt{0}}_{\phi}},
\label{prob:P0_crlb_phi}
\\
& &
\quad
\bm{1}_K^{\T}\boldsymbol{\vartheta}
+
N_{\mathtt{t}}
\left(
\boldsymbol{\xi}_{\mathtt{bf}}^{\T}(\boldsymbol{\vartheta})
\boldsymbol{\gamma}
+
\rho
\right)
\leq
P_{\mathtt{t}}^{\max},
\label{prob:P0_power}
\end{IEEEeqnarray}
\end{subequations}  
where $\mathsf{P}^{0}_{\theta}$ and $\mathsf{P}^{0}_{\phi}$ in
\eqref{prob:P0_crlb_theta} and \eqref{prob:P0_crlb_phi} denote the outage thresholds for the azimuth and elevation CRLBs, respectively, and $P_{\mathtt t}^{\max}$ in
\eqref{prob:P0_power} is the total power budget. Specifically, constraint \eqref{prob:P0_power} captures the coupling between uplink pilot training and downlink communications/sensing transmission, since increasing the pilot-training power improves the channel-estimation quality and hence the downlink communications performance, while reducing the power available for subsequent communications/sensing transmission.

Problem~\eqref{prob:robust_power_allocation} is challenging to solve due to its nonconvexity. Specifically, the objective function \eqref{prob:P0_obj} is nonconcave in $(\boldsymbol{\vartheta},\boldsymbol{\gamma},\rho)$, the outage constraints \eqref{prob:P0_crlb_theta} and~\eqref{prob:P0_crlb_phi} are nonconvex, and the power constraint  \eqref{prob:P0_power} is nonaffine due to the fractional dependence of $\boldsymbol{\xi}_{\mathtt{bf}}(\boldsymbol{\vartheta})$ on $\boldsymbol{\vartheta}$. 
To address these difficulties, we develop an alternating-optimization (AO) framework based on the SCA \cite{sun2016majorization}
, which decomposes the original problem into tractable power allocation subproblems for pilot training and signal transmission/probing and solves them iteratively.

\subsection{Pilot-Power Allocation via SCA}\label{subsec:nu}
For fixed $(\bar{\boldsymbol{\gamma}},\bar{\rho})$, we first optimize the pilot-power allocation vector $\boldsymbol{\vartheta}$. Since both the objective function and constraints in \eqref{prob:robust_power_allocation} are nonconvex w.r.t. $\boldsymbol{\vartheta}$, we employ an SCA framework that replaces them with tractable quadratic surrogate functions at each iteration.

\subsubsection{Convexity of the Objective Function~\eqref{prob:P0_obj}}
Let $x^{[t]}$ be the feasible point of $x$ at the $t$-th iteration of an iterative algorithm presented next. By \cite[Lemma~12]{sun2016majorization}, we obtain the concave quadratic lower-bound surrogate around the feasible point $\boldsymbol{\vartheta}^{[t]}$ as
\begin{align}
\mathcal{R}_{k}^{\mathtt{bf}}(\boldsymbol{\vartheta})
&\geq
\widetilde{\mathcal{R}}_{k,\boldsymbol{\vartheta}}^{\mathtt{bf}}
\bigl(\boldsymbol{\vartheta};\boldsymbol{\vartheta}^{[t]}\bigr)
\notag\\
&\triangleq
\mathcal{R}_{k}^{\mathtt{bf}}(\boldsymbol{\vartheta}^{[t]})
+
\nabla_{\boldsymbol{\vartheta}}
\mathcal{R}_{k}^{\mathtt{bf}}(\boldsymbol{\vartheta}^{[t]})^{\T}
\bigl(\boldsymbol{\vartheta}-\boldsymbol{\vartheta}^{[t]}\bigr)
\notag\\
&\quad
-\frac{L_{k,\boldsymbol{\vartheta}}^{\mathtt{bf},[t]}}{2}
\bigl\|\boldsymbol{\vartheta}-\boldsymbol{\vartheta}^{[t]}\bigr\|^{2},
\label{eq:rate_lb_nu}
\end{align}
where
$\nabla_{\boldsymbol{\vartheta}}
\mathcal{R}_{k}^{\mathtt{bf}}(\cdot)$
is the gradient of the rate function
$\mathcal{R}_{k}^{\mathtt{bf}}$ w.r.t. $\boldsymbol{\vartheta}$, and
$L_{k,\boldsymbol{\vartheta}}^{\mathtt{bf},[t]}>0$ is a constant chosen to satisfy
$\mathcal{R}_{k}^{\mathtt{bf}}(\boldsymbol{\vartheta})
\geq
\widetilde{\mathcal{R}}_{k,\boldsymbol{\vartheta}}^{\mathtt{bf}}
(\boldsymbol{\vartheta};\boldsymbol{\vartheta}^{[t]})$.

\subsubsection{Convexity of Constraints \eqref{prob:P0_crlb_theta}--\eqref{prob:P0_power}}

We convexify \eqref{prob:P0_crlb_theta}--\eqref{prob:P0_power} using quadratic upper-bound surrogates
\cite{sun2016majorization}. Specifically, for $\psi\in\{\theta,\phi\}$, we construct an upper-bound
surrogate for \eqref{prob:P0_crlb_theta} and \eqref{prob:P0_crlb_phi}  around feasible
point $\boldsymbol{\vartheta}^{[t]}$ as
\begin{align}
\mathsf{P}_{\mathtt{out}_\psi}^{\mathtt{bf}}(\boldsymbol{\vartheta})
&\leq
\widetilde{\mathsf{P}}_{\mathtt{out}_{\psi,\boldsymbol{\vartheta}}}^{\mathtt{bf}}
\bigl(\boldsymbol{\vartheta};\boldsymbol{\vartheta}^{[t]}\bigr)
\notag\\
&\triangleq
\mathsf{P}_{\mathtt{out}_\psi}^{\mathtt{bf}}
(\boldsymbol{\vartheta}^{[t]})
+
\nabla_{\boldsymbol{\vartheta}}
\mathsf{P}_{\mathtt{out}_\psi}^{\mathtt{bf}}
(\boldsymbol{\vartheta}^{[t]})^{\T}
\bigl(\boldsymbol{\vartheta}-\boldsymbol{\vartheta}^{[t]}\bigr)
\notag\\
&\quad
+\frac{\ell_{\psi,\boldsymbol{\vartheta}}^{\mathtt{bf},[t]}}{2}
\bigl\|\boldsymbol{\vartheta}-\boldsymbol{\vartheta}^{[t]}\bigr\|^{2} \leq \mathsf{P}^{\mathtt{0}}_{\psi}, \quad \psi\in\{\theta,\phi\}.
\label{eq:pout_ub_nu}
\end{align}
For~\eqref{prob:P0_power}, we first define
$P_{\mathtt{tot}}^{\mathtt{bf}}(\boldsymbol{\vartheta})
\triangleq
\bm{1}_K^{\T}\boldsymbol{\vartheta}
+
N_{\mathtt{t}}
(
\boldsymbol{\xi}_{\mathtt{bf}}^{\T}(\boldsymbol{\vartheta})
\boldsymbol{\gamma}
+
\rho
)$ and similarly to \eqref{eq:pout_ub_nu},  \eqref{prob:P0_power} is iteratively updated as
\begin{align}
{P_{\mathtt{tot}}^{\mathtt{bf}}(\boldsymbol{\vartheta})}
&\leq
\widetilde{P}_{\mathtt{tot}_{\boldsymbol{\vartheta}}}
\bigl(\boldsymbol{\vartheta};\boldsymbol{\vartheta}^{[t]}\bigr)
\notag\\
&\triangleq
P_{\mathtt{tot}}^{\mathtt{bf}}(\boldsymbol{\vartheta}^{[t]})
+
\nabla_{\boldsymbol{\vartheta}}
P_{\mathtt{tot}}^{\mathtt{bf}}(\boldsymbol{\vartheta}^{[t]})^{\T}
\bigl(\boldsymbol{\vartheta}-\boldsymbol{\vartheta}^{[t]}\bigr)
\notag\\
&\quad
+\frac{\ell_{\mathtt{p},\boldsymbol{\vartheta}}^{\mathtt{bf},[t]}}{2}
\bigl\|\boldsymbol{\vartheta}-\boldsymbol{\vartheta}^{[t]}\bigr\|^{2} \leq P_{\mathtt{t}}^{\max},
\label{eq:ptot_ub_nu}
\end{align}
 where
$\nabla_{\boldsymbol{\vartheta}}
\mathsf{P}_{\mathtt{out}_\psi}^{\mathtt{bf}}
(\cdot)$
and
$\nabla_{\boldsymbol{\vartheta}}
P_{\mathtt{tot}}^{\mathtt{bf}}(\cdot)$
denote the gradients of the scalar functions
$\mathsf{P}_{\mathtt{out}_\psi}^{\mathtt{bf}}$
and $P_{\mathtt{tot}}^{\mathtt{bf}}$, respectively, w.r.t.
 $\boldsymbol{\vartheta}$. Moreover,
$\ell_{\psi,\boldsymbol{\vartheta}}^{\mathtt{bf},[t]}>0$ and
$\ell_{\mathtt{p},\boldsymbol{\vartheta}}^{\mathtt{bf},[t]}>0$ are chosen to satisfy
$\mathsf{P}_{\mathtt{out}_\psi}^{\mathtt{bf}}(\boldsymbol{\vartheta})
\leq
\widetilde{\mathsf{P}}_{\mathtt{out}_{\psi,\boldsymbol{\vartheta}}}^{\mathtt{bf}}
(\boldsymbol{\vartheta};\boldsymbol{\vartheta}^{[t]})$,
for $\psi\in\{\theta,\phi\}$, and
$P_{\mathtt{tot}}^{\mathtt{bf}}(\boldsymbol{\vartheta})
\leq
\widetilde{P}_{\mathtt{tot}_{\boldsymbol{\vartheta}}}^{\mathtt{bf}}
(\boldsymbol{\vartheta};\boldsymbol{\vartheta}^{[t]})$, respectively.

Summing up, the convex approximation of the pilot-power allocation
subproblem in \eqref{prob:robust_power_allocation} solved  at iteration $t$ is given as
\begin{align}
\mathcal{P}_{1}:\quad
\underset{\boldsymbol{\vartheta}}{\mathrm{max}}
\sum_{k=1}^{K}
\widetilde{\mathcal{R}}_{k,\boldsymbol{\vartheta}}^{\mathtt{bf}}
\bigl(\boldsymbol{\vartheta};\boldsymbol{\vartheta}^{[t]}\bigr)
\ \mathrm{s.t.}\ 
\eqref{eq:pout_ub_nu},\;
\eqref{eq:ptot_ub_nu}.
\label{prob:P1_nu}
\end{align}

\subsection{Communications and Sensing Power Allocation}
\label{subsec:gamma_rho}

After obtaining the optimal $\boldsymbol{\vartheta}^\star$, we
aim to optimize the communications and sensing power coefficients, \textit{i.e.},
$(\boldsymbol{\gamma},\rho)$.

\subsubsection{Convexity of the Objective Function~\eqref{prob:P0_obj}}

We first rewrite \eqref{eq_SE_theo} as
$
\mathcal{R}_{k}^{\mathtt{bf}}(\boldsymbol{\gamma},\rho)
=
{\tau_\mathtt{0}}\log_{2}
\left(
1+
\frac{\mathcal{S}_{k}^{\mathtt{bf}}(\boldsymbol{\gamma})}
{\mathcal{I}_{k}^{\mathtt{bf}}(\boldsymbol{\gamma},\rho)}
\right)$,
where
$\mathcal{S}_{k}^{\mathtt{bf}}(\boldsymbol{\gamma})
\triangleq
\lambda_{\mathtt{bf},k}(\boldsymbol{\vartheta})\gamma_{k}$
and
$\mathcal{I}_{k}^{\mathtt{bf}}(\boldsymbol{\gamma},\rho)
\triangleq
N_{\mathtt{t}}\beta_{k}\rho
+
N_{\mathtt{t}}
\boldsymbol{\zeta}_{\mathtt{bf},k}^{\T}(\boldsymbol{\vartheta})
\boldsymbol{\gamma}
+
\sigma_{\mathtt{c}}^{2}$.
We note that both
$\mathcal{S}_{k}^{\mathtt{bf}}(\boldsymbol{\gamma})$ and
$\mathcal{I}_{k}^{\mathtt{bf}}(\boldsymbol{\gamma},\rho)$ are affine in
$(\boldsymbol{\gamma},\rho)$. Thus, by following the lower-bound
transformation in~\cite[Eq. (22)]{nguyen2022short}, we obtain a concave
surrogate around the feasible point
$(\boldsymbol{\gamma}^{[t]},\rho^{[t]})$ as
\begin{align}
\!\!\!\!\mathcal{R}_{k}^{\mathtt{bf}}(\boldsymbol{\gamma},\rho)
&\geq
\widetilde{\mathcal{R}}_{k,\boldsymbol{\gamma},\rho}^{\mathtt{bf}}
\bigl(\boldsymbol{\gamma},\rho;
\boldsymbol{\gamma}^{[t]},\rho^{[t]}\bigr)
\notag\\
&\triangleq
\frac{{\tau_\mathtt{0}}}{\ln 2}
\left(\!
q_{\mathtt{u},k}^{\mathtt{bf},[t]}
-
\frac{q_{\mathtt{z},k}^{\mathtt{bf},[t]}}
{\mathcal{S}_{k}^{\mathtt{bf}}(\boldsymbol{\gamma})}
-
q_{\mathtt{w},k}^{\mathtt{bf},[t]}
\mathcal{I}_{k}^{\mathtt{bf}}(\boldsymbol{\gamma},\rho)
\!\right),
\label{eq:rate_lb_gamma_rho}
\end{align}
where
\begin{align}
q_{\mathtt{u},k}^{\mathtt{bf},[t]}
&\triangleq
\ln\left(
1+
\frac{\mathcal{S}_{k}^{\mathtt{bf},[t]}}
{\mathcal{I}_{k}^{\mathtt{bf},[t]}}
\right)
+
\frac{2\mathcal{S}_{k}^{\mathtt{bf},[t]}}
{\mathcal{S}_{k}^{\mathtt{bf},[t]}
+
\mathcal{I}_{k}^{\mathtt{bf},[t]}},
\label{eq:qu_gamma_rho}\\
q_{\mathtt{z},k}^{\mathtt{bf},[t]}
&\triangleq
\frac{\bigl(\mathcal{S}_{k}^{\mathtt{bf},[t]}\bigr)^{2}}
{\mathcal{S}_{k}^{\mathtt{bf},[t]}
+
\mathcal{I}_{k}^{\mathtt{bf},[t]}},
\label{eq:qv_gamma_rho}\\
q_{\mathtt{w},k}^{\mathtt{bf},[t]}
&\triangleq
\frac{\mathcal{S}_{k}^{\mathtt{bf},[t]}}
{
\bigl(
\mathcal{S}_{k}^{\mathtt{bf},[t]}
+
\mathcal{I}_{k}^{\mathtt{bf},[t]}
\bigr)
\mathcal{I}_{k}^{\mathtt{bf},[t]}},
\label{eq:qw_gamma_rho}
\end{align}
with $\mathcal{S}_{k}^{\mathtt{bf},[t]}
\triangleq
\mathcal{S}_{k}^{\mathtt{bf}}(\boldsymbol{\gamma}^{[t]})$ and 
$\mathcal{I}_{k}^{\mathtt{bf},[t]}
\triangleq
\mathcal{I}_{k}^{\mathtt{bf}}(\boldsymbol{\gamma}^{[t]},\rho^{[t]})$.

\subsubsection{Convexity of Constraints \eqref{prob:P0_crlb_theta} and \eqref{prob:P0_crlb_phi}}

Similar to~\eqref{eq:pout_ub_nu}, for $\psi\in\{\theta,\phi\}$, we construct
a quadratic upper-bound surrogate for 
\eqref{prob:P0_crlb_theta} and \eqref{prob:P0_crlb_phi} around the feasible point
$(\boldsymbol{\gamma}^{[t]},\rho^{[t]})$ as follows
\begin{align}
\mathsf{P}_{\mathtt{out}_\psi}^{\mathtt{bf}}
(\boldsymbol{\gamma},\rho)
&\leq
\widetilde{\mathsf{P}}_{\mathtt{out}_{\psi,\boldsymbol{\gamma},\rho}}^{\mathtt{bf}}
\bigl(
\boldsymbol{\gamma},\rho;
\boldsymbol{\gamma}^{[t]},\rho^{[t]}
\bigr)
\notag\\
&\triangleq
\mathsf{P}_{\mathtt{out}_\psi}^{\mathtt{bf}}
\bigl(
\boldsymbol{\gamma}^{[t]},\rho^{[t]}
\bigr)
\notag\\
&+
\nabla_{\boldsymbol{\gamma}}
\mathsf{P}_{\mathtt{out}_\psi}^{\mathtt{bf}}
\bigl(
\boldsymbol{\gamma}^{[t]},\rho^{[t]}
\bigr)^{\T}
\bigl(\boldsymbol{\gamma}-\boldsymbol{\gamma}^{[t]}\bigr)
\notag\\
&+
\frac{\partial
\mathsf{P}_{\mathtt{out}_\psi}^{\mathtt{bf}}
\bigl(
\boldsymbol{\gamma}^{[t]},\rho^{[t]}
\bigr)}
{\partial \rho}
\bigl(\rho-\rho^{[t]}\bigr)
\notag\\
& +\!\!
\frac{\ell_{\psi,\boldsymbol{\gamma},\rho}^{\mathtt{bf},[t]}}{2}
\!\!\left(\!
\bigl\|\boldsymbol{\gamma}-\boldsymbol{\gamma}^{[t]}\bigr\|^{2}
+
\bigl|\rho-\rho^{[t]}\bigr|^{2}
\right) \!\leq \mathsf{P}^{\mathtt{0}}_{\psi},
\label{eq:pout_ub_gamma_rho}
\end{align}
where 
$\nabla_{\boldsymbol{\gamma}}
\mathsf{P}_{\mathtt{out}_\psi}^{\mathtt{bf}}
(\cdot)$
and
$\partial
\mathsf{P}_{\mathtt{out}_\psi}^{\mathtt{bf}}
(\cdot)/\partial \rho$
denote the gradient and derivative of the scalar function
$\mathsf{P}_{\mathtt{out}_\psi}^{\mathtt{bf}}$
w.r.t.  $\boldsymbol{\gamma}$ and 
$\rho$, respectively, and
$\ell_{\psi,\boldsymbol{\gamma},\rho}^{\mathtt{bf},[t]}>0$
is chosen to satisfy
$\mathsf{P}_{\mathtt{out}_\psi}^{\mathtt{bf}}(\boldsymbol{\gamma},\rho)
\leq
\widetilde{\mathsf{P}}_{\mathtt{out}_{\psi,\boldsymbol{\gamma},\rho}}^{\mathtt{bf}}
(\boldsymbol{\gamma},\rho;
\boldsymbol{\gamma}^{[t]},\rho^{[t]})$.

From the above analysis, the convex approximation of the communications and sensing allocation subproblem in \eqref{prob:robust_power_allocation} solved at iteration $t$ is given as
\begin{align}
\label{prob:p2}
\mathcal{P}_{2}:\quad
\underset{\boldsymbol{\gamma},\rho}{\mathrm{max}}
\sum_{k=1}^{K}
\widetilde{\mathcal{R}}_{k,\boldsymbol{\gamma},\rho}^{\mathtt{bf}}
\bigl(
\boldsymbol{\gamma},\rho;
\boldsymbol{\gamma}^{[t]},\rho^{[t]}
\bigr)
\ \mathrm{s.t.} \
\eqref{prob:P0_power},\; \eqref{eq:pout_ub_gamma_rho}.
\end{align}
The proposed solution requires the closed-form gradients of
$\mathcal{R}_{k}^{\mathtt{bf}}$,
$\mathsf{P}_{\mathtt{out}_\psi}^{\mathtt{bf}}$, and
$P_{\mathtt{tot}}^{\mathtt{bf}}$ w.r.t. the corresponding
optimization variables in
\eqref{eq:rate_lb_nu}--\eqref{eq:ptot_ub_nu}, and \eqref{eq:pout_ub_gamma_rho}. Owing to the derived closed-form expressions in
\eqref{eq_SE_theo}, \eqref{eq:cdf_unified_1overcrlb}, and
\eqref{prob:P0_power}, these gradients can be directly computed by applying
the chain rule, as summarized in the following theorem.
\begin{theorem}
\label{thm:gradients}
For $\mathtt{bf}\in\{\mathtt{MRT},\mathtt{ZF}\}$ and
$k,n\in\{1,\ldots,K\}$, the entries of
$\nabla_{\boldsymbol{\vartheta}}\mathcal{R}_{k}^{\mathtt{bf}}$ and
$\nabla_{\boldsymbol{\vartheta}}P_{\mathtt{tot}}^{\mathtt{bf}}$ are given by
\begin{align}
\frac{\partial \mathcal{R}_{k}^{\mathtt{bf}}}{\partial \vartheta_n}
&=
\frac{{\tau_\mathtt{0}}}{\ln 2}
\frac{
\mathcal{I}_{k}^{\mathtt{bf}}U_{k,n}^{\mathtt{bf}}
-
\mathcal{S}_{k}^{\mathtt{bf}}V_{k,n}^{\mathtt{bf}}
}{
\mathcal{I}_{k}^{\mathtt{bf}}
\left(
\mathcal{S}_{k}^{\mathtt{bf}}
+
\mathcal{I}_{k}^{\mathtt{bf}}
\right)
},
\label{eq:grad_Rk_vartheta}
\\
\frac{\partial P_{\mathtt{tot}}^{\mathtt{bf}}}{\partial \vartheta_n}
&=
1
+
N_{\mathtt{t}}\Omega_n^{\mathtt{bf}},
\label{eq:grad_Ptot_vartheta}
\end{align}
where
{$\Omega_n^{\mathtt{bf}}
\triangleq
\partial s^{\mathtt{bf}}/\partial\vartheta_n$, with
$s^{\mathtt{bf}}
\triangleq
\boldsymbol{\xi}_{\mathtt{bf}}^{\T}(\boldsymbol{\vartheta})
\boldsymbol{\gamma}$ and}
\begin{align*}
U_{k,n}^{\mathtt{MRT}}
&=
2N_{\mathtt{t}}^{2}\xi_k\gamma_k\mathcal{D}_{k,n}, \hspace{0.2cm}
U_{k,n}^{\mathtt{ZF}}
=
0,
\\
V_{k,n}^{\mathtt{MRT}}
&=
N_{\mathtt{t}}\beta_k\Omega_n^{\mathtt{MRT}},
\hspace{0.78cm}
V_{k,n}^{\mathtt{ZF}}
=
N_{\mathtt{t}}\sum_{j=1}^{K}\gamma_j\Xi_{j,k,n},
\\
\Omega_n^{\mathtt{MRT}}
&=
\sum_{j=1}^{K}\gamma_j\mathcal{D}_{j,n},
\hspace{0.78cm}
\Omega_n^{\mathtt{ZF}}
=
-\frac{1}{N_{\mathtt{t}}\left(N_{\mathtt{t}}-K\right)}
\sum_{j=1}^{K}
\frac{\gamma_j\mathcal{D}_{j,n}}{\xi_j^2}.
\end{align*}
Herein, $\mathcal{D}_{k,n}$ and $\Xi_{j,k,n}$ are calculated as
\begin{align*}
&\!\!\mathcal{D}_{k,n}
\!=\!
\frac{
\!\!\tau_{\mathtt{p}}\beta_k^2\!\!
\left(
\!\delta_{kn}\!\!
\left(
\!\tau_{\mathtt{p}}\!\sum_{\ell=1}^{K}
\!\!\vartheta_{\ell}\beta_{\ell}\abs{\vp_\ell^\H \vp_k}^2
\!\!+\!
\sigma^2\!
\right)
\!\!-\!
\tau_{\mathtt{p}}\vartheta_k\beta_n \!\abs{\vp_n^\H \vp_k}^2\!
\right)
}{
\left(
\tau_{\mathtt{p}}\sum_{\ell=1}^{K}
\vartheta_{\ell}\beta_{\ell}\abs{\vp_\ell^\H \vp_k}^2
+
\sigma^2
\right)^2
},
\\
&\!\!\!\Xi_{j,k,n}
\!=\!
-\frac{1}{N_{\mathtt{t}}\left(N_{\mathtt{t}}-K\right)}
\left(
\frac{\mathcal{D}_{k,n}}{\xi_j}
+
\frac{\epsilon_k\mathcal{D}_{j,n}}{\xi_j^2}
\right),
\end{align*}
where $\delta_{kn}=1$ if $k=n$ and $\delta_{kn}=0$ otherwise. Furthermore, the entries of
$\nabla_{\boldsymbol{\vartheta}}
\mathsf{P}_{\mathtt{out}_\psi}^{\mathtt{bf}}$ and
$\nabla_{\boldsymbol{\gamma}}
\mathsf{P}_{\mathtt{out}_\psi}^{\mathtt{bf}}$, and the derivative
$\partial\mathsf{P}_{\mathtt{out}_\psi}^{\mathtt{bf}}/\partial\rho$,
are computed as
\begin{align}
\frac{\partial
\mathsf{P}_{\mathtt{out}_\psi}^{\mathtt{bf}}}
{\partial \vartheta_n}
&\!=\!
-\!\!\!\sum_{i=1}^{G_{\theta}}\sum_{j=1}^{G_{\phi}}
\!\!w_{\theta_i}w_{\phi_j}
\varrho
\varsigma_{\psi,ij}^{\mathtt{bf}}
\left(
1-\varsigma_{\psi,ij}^{\mathtt{bf}}
\right)
\Lambda_{\psi,ij}^{\mathtt{s}}
\Omega_n^{\mathtt{bf}},
\label{eq:grad_Pout_vartheta}
\\
\frac{\partial
\mathsf{P}_{\mathtt{out}_\psi}^{\mathtt{bf}}}
{\partial \gamma_n}
&\!=\!
-\!\!\!\sum_{i=1}^{G_{\theta}}\sum_{j=1}^{G_{\phi}}
\!\!w_{\theta_i}w_{\phi_j}
\varrho
\varsigma_{\psi,ij}^{\mathtt{bf}}
\!\left(
1\!-\!\varsigma_{\psi,ij}^{\mathtt{bf}}
\right)
\!\Lambda_{\psi,ij}^{\mathtt{s}}
[\boldsymbol{\xi}_{\mathtt{bf}}]_n,
\label{eq:grad_Pout_gamma}
\\
\frac{\partial
\mathsf{P}_{\mathtt{out}_\psi}^{\mathtt{bf}}}
{\partial \rho}
&\!=\!
-\sum_{i=1}^{G_{\theta}}\sum_{j=1}^{G_{\phi}}
w_{\theta_i}w_{\phi_j}
\varrho
\varsigma_{\psi,ij}^{\mathtt{bf}}
\left(
1-\varsigma_{\psi,ij}^{\mathtt{bf}}
\right)
\Lambda_{\psi,ij}^{\mathtt{r}},
\label{eq:grad_Pout_rho}
\end{align}
where we have defined
\begin{align*}
\varsigma_{\psi,ij}^{\mathtt{bf}}
&\triangleq
\frac{1}{
1+\exp\left(
-\varrho
\left(
\frac{1}
{\mathtt{CRLB}_{\psi}^{\mathtt{bf}}
(s^{\mathtt{bf}},\rho,\boldsymbol{z}_{ij})}
-
\frac{1}{\mathtt{CRLB}_{\psi}^{0}}
\right)
\right)
},
\end{align*}
\begin{align*}
\Lambda_{\theta,ij}^{\mathtt{x}}
&\triangleq
\tilde{c}_{\theta\theta,ij}^{\mathtt{x}}
-
\frac{
2\tilde{T}_{\theta\phi,ij}\,\tilde{c}_{\theta\phi,ij}^{\mathtt{x}}\,\tilde{T}_{\phi\phi,ij}
-
\tilde{T}_{\theta\phi,ij}^{2}\,\tilde{c}_{\phi\phi,ij}^{\mathtt{x}}
}{
\tilde{T}_{\phi\phi,ij}^{2}
},
\\
\Lambda_{\phi,ij}^{\mathtt{x}}
&\triangleq
\tilde{c}_{\phi\phi,ij}^{\mathtt{x}}
-
\frac{
2\tilde{T}_{\theta\phi,ij}\,\tilde{c}_{\theta\phi,ij}^{\mathtt{x}}\,\tilde{T}_{\theta\theta,ij}
-
\tilde{T}_{\theta\phi,ij}^{2}\,\tilde{c}_{\theta\theta,ij}^{\mathtt{x}}
}{
\tilde{T}_{\theta\theta,ij}^{2}
},
\end{align*}
with $\mathtt{x}\in\{\mathtt{s},\mathtt{r}\}$ and $\tilde{c}_{\psi\psi,ij}^{\mathtt{s}}
\triangleq
c_{\psi,ij}
+
\frac{\rho^{2}|\hat{c}_{\psi\tilde{\boldsymbol{\beta_\mathtt{s}}},ij}|^{2}
      c_{\tilde{\boldsymbol{\beta_\mathtt{s}}}\tilde{\boldsymbol{\beta_\mathtt{s}}},ij}}
     {T_{\tilde{\boldsymbol{\beta_\mathtt{s}}}\tilde{\boldsymbol{\beta_\mathtt{s}}},ij}^{2}}$, $
\tilde{c}_{\theta\phi,ij}^{\mathtt{s}}
\triangleq
c_{\theta\phi,ij}
+
\frac{\rho^{2}
      \hat{c}_{\theta\tilde{\boldsymbol{\beta_\mathtt{s}}},ij}
      \hat{c}_{\phi\tilde{\boldsymbol{\beta_\mathtt{s}}},ij}^{*}
      c_{\tilde{\boldsymbol{\beta_\mathtt{s}}}\tilde{\boldsymbol{\beta_\mathtt{s}}},ij}}
     {T_{\tilde{\boldsymbol{\beta_\mathtt{s}}}\tilde{\boldsymbol{\beta_\mathtt{s}}},ij}^{2}}$, 
\begin{align*}
\tilde{c}_{\psi\psi,ij}^{\mathtt{r}}
&\!\!\triangleq\!
\hat{c}_{\psi,ij}
-
\frac{
2\rho|\hat{c}_{\psi\tilde{\boldsymbol{\beta_\mathtt{s}}},ij}|^{2}
T_{\tilde{\boldsymbol{\beta_\mathtt{s}}}\tilde{\boldsymbol{\beta_\mathtt{s}}},ij}
-
\rho^{2}|\hat{c}_{\psi\tilde{\boldsymbol{\beta_\mathtt{s}}},ij}|^{2}
\hat{c}_{\tilde{\boldsymbol{\beta_\mathtt{s}}}\tilde{\boldsymbol{\beta_\mathtt{s}}},ij}
}{T_{\tilde{\boldsymbol{\beta_\mathtt{s}}}\tilde{\boldsymbol{\beta_\mathtt{s}}},ij}^{2}},
\\
\!\!\tilde{c}_{\theta\!\phi\!,ij}^{\mathtt{r}}
&\!\!\triangleq\!
\hat{c}_{\theta\!\phi,i\!j}
\!\!-\!
\frac{
\!2\rho
\hat{c}_{\theta\!\tilde{\boldsymbol{\beta_\mathtt{s}}},ij}
\hat{c}_{\phi\!\tilde{\boldsymbol{\beta_\mathtt{s}}},ij}^{*}
T_{\!\tilde{\boldsymbol{\beta_\mathtt{s}}}\tilde{\boldsymbol{\beta_\mathtt{s}}},ij}
\!-\!\!
\rho^{2}
\hat{c}_{\!\theta\!\tilde{\boldsymbol{\beta_\mathtt{s}}},ij}
\hat{c}_{\!\phi\!\tilde{\boldsymbol{\beta_\mathtt{s}}},ij}^{*}
\hat{c}_{\!\tilde{\boldsymbol{\beta_\mathtt{s}}}\tilde{\boldsymbol{\beta_\mathtt{s}}},ij}
}{T_{\tilde{\boldsymbol{\beta_\mathtt{s}}}\tilde{\boldsymbol{\beta_\mathtt{s}}},ij}^{2}}.
\end{align*}
We note that $\{\tilde{T}_{\theta\theta,ij},\tilde{T}_{\phi\phi,ij},\tilde{T}_{\theta\phi,ij},T_{\tilde{\boldsymbol{\beta}}_{\mathtt{s}}\tilde{\boldsymbol{\beta}}_{\mathtt{s}},ij}\}$ and
$\{c_{\psi,ij},\hat{c}_{\psi,ij},
c_{\theta\phi,ij},$ $\hat{c}_{\theta\phi,ij},
c_{\tilde{\boldsymbol{\beta_\mathtt{s}}}\tilde{\boldsymbol{\beta_\mathtt{s}}},ij} ,
\hat{c}_{\tilde{\boldsymbol{\beta_\mathtt{s}}}\tilde{\boldsymbol{\beta_\mathtt{s}}},ij},
\hat{c}_{\psi\tilde{\boldsymbol{\beta_\mathtt{s}}},ij}\}_{\psi\in\{\theta,\phi\}}$
are evaluated using \eqref{eq:app_Jpp}--\eqref{eq:app_jpa}, \eqref{eq_FIM} and
\eqref{chat1_thm}--\eqref{chat2_thm}, respectively.
\end{theorem}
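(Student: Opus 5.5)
The plan is to apply the chain rule directly to the closed-form expressions already available: the rate in \eqref{eq_SE_theo}, the total power in \eqref{prob:P0_power}, and the sigmoid-quadrature CDF in \eqref{eq:cdf_unified_1overcrlb} combined with \eqref{eq:outage_definition}. The key structural fact I will use is that, by Theorem~\ref{thm:deterministic_crlb}, the CRLBs depend on $(\boldsymbol{\vartheta},\boldsymbol{\gamma})$ only through the scalar $s^{\mathtt{bf}}=\boldsymbol{\xi}_{\mathtt{bf}}^{\T}(\boldsymbol{\vartheta})\boldsymbol{\gamma}$, and on $\rho$ explicitly. Hence every outage-probability derivative factors as a common term times $\partial s^{\mathtt{bf}}/\partial\vartheta_n=\Omega_n^{\mathtt{bf}}$, $\partial s^{\mathtt{bf}}/\partial\gamma_n=[\boldsymbol{\xi}_{\mathtt{bf}}]_n$, or a separate $\rho$-derivative.

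\textbf{Pilot-dependent quantities.} First I differentiate $\xi_k(\boldsymbol{\vartheta})$ in \eqref{eq_xi_k} with the quotient rule. The numerator contributes $\delta_{kn}\tau_{\mathtt p}\beta_k^2$ and the denominator contributes $\tau_{\mathtt p}\beta_n|\vp_n^\H\vp_k|^2$, which yields $\mathcal{D}_{k,n}$. Since $\epsilon_k=\beta_k-\xi_k$, we also have $\partial\epsilon_k/\partial\vartheta_n=-\mathcal{D}_{k,n}$.

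For MRT, $s^{\mathtt{bf}}=\sum_j\gamma_j\xi_j$, so $\Omega_n=\sum_j\gamma_j\mathcal{D}_{j,n}$. For ZF, $s^{\mathtt{bf}}$ contains $1/\xi_j$, and differentiating gives the factor $-\mathcal{D}_{j,n}/\xi_j^2$. Then \eqref{eq:grad_Ptot_vartheta} follows immediately, because $\bm 1_K^\T\boldsymbol{\vartheta}$ contributes $1$.

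For the rate, I write $\mathcal{R}_k=\frac{\tau_0}{\ln2}[\ln(\mathcal S_k+\mathcal I_k)-\ln\mathcal I_k]$. Its derivative is $\frac{\tau_0}{\ln 2}\frac{\mathcal I_k\dot{\mathcal S}_k-\mathcal S_k\dot{\mathcal I}_k}{\mathcal I_k(\mathcal S_k+\mathcal I_k)}$, so it suffices to identify $U_{k,n}=\partial\mathcal S_k/\partial\vartheta_n$ and $V_{k,n}=\partial\mathcal I_k/\partial\vartheta_n$.
\begin{itemize}
\item For MRT, $\lambda_k=N_{\mathtt t}^2\xi_k^2$ gives $U=2N_{\mathtt t}^2\xi_k\gamma_k\mathcal D_{k,n}$, and $\boldsymbol\zeta_k^\T\boldsymbol\gamma=\beta_k s^{\mathtt{MRT}}$ gives $V=N_{\mathtt t}\beta_k\Omega_n^{\mathtt{MRT}}$.
\item For ZF, $\lambda_k=1$ gives $U=0$. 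Differentiating $\epsilon_k/\xi_j$ by the product rule gives $\Xi_{j,k,n}$; with $\partial\epsilon_k=-\mathcal D_{k,n}$, this reproduces the stated two-term expression with the overall minus sign.
\end{itemize}

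\textbf{Outage derivatives.} Here $\mathsf P_{\mathtt{out}}=1-\sum_{ij}w_{\theta_i}w_{\phi_j}\varsigma_{\psi,ij}$, and the sigmoid derivative is $\varsigma'=\varrho\,\varsigma(1-\varsigma)\,\partial(1/\mathtt{CRLB}_\psi)$. The quadrature nodes and weights do not depend on the optimization variables. Since $1/\mathtt{CRLB}_\theta=\widetilde T_{\theta\theta}-\widetilde T_{\theta\phi}^2/\widetilde T_{\phi\phi}$ (the Schur complement of \eqref{eq_FIM}), its derivative is
\[
\dot{\widetilde T}_{\theta\theta}-\bigl(2\widetilde T_{\theta\phi}\dot{\widetilde T}_{\theta\phi}\widetilde T_{\phi\phi}-\widetilde T_{\theta\phi}^2\dot{\widetilde T}_{\phi\phi}\bigr)/\widetilde T_{\phi\phi}^2 .
\]
This is exactly the form of $\Lambda_{\theta}^{\mathtt x}$, and symmetrically $\Lambda_\phi^{\mathtt x}$, once the $\tilde c^{\mathtt x}$ terms are identified with the derivatives of the $\widetilde T$ entries. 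Each entry has the form $s c+\rho\hat c-\rho^2|\hat c_{\cdot\tilde\beta}|^2/T_{\tilde\beta\tilde\beta}$ with $T_{\tilde\beta\tilde\beta}=sc_{\tilde\beta\tilde\beta}+\rho\hat c_{\tilde\beta\tilde\beta}$.
\begin{itemize}
\item Differentiating in $s$ gives $c+\rho^2|\hat c|^2c_{\tilde\beta\tilde\beta}/T^2=\tilde c^{\mathtt s}$.
\item Differentiating in $\rho$ gives $\hat c-(2\rho|\hat c|^2T-\rho^2|\hat c|^2\hat c_{\tilde\beta\tilde\beta})/T^2=\tilde c^{\mathtt r}$.
\end{itemize}
Multiplying $\Lambda^{\mathtt s}$ by $\Omega_n$ or $[\boldsymbol\xi_{\mathtt{bf}}]_n$ then gives \eqref{eq:grad_Pout_vartheta}--\eqref{eq:grad_Pout_gamma}, and $\Lambda^{\mathtt r}$ gives \eqref{eq:grad_Pout_rho}.

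\textbf{Main obstacle.} The main risk is bookkeeping in the $\rho$-derivatives of the $\widetilde T$ entries. The numerator $\rho^2|\hat c|^2$ and the denominator $T_{\tilde\beta\tilde\beta}$ both depend on $\rho$, and the signs must be verified carefully, including the complex cross terms $\hat c_{\theta\tilde\beta}\hat c^*_{\phi\tilde\beta}$ in $\widetilde T_{\theta\phi}$. A second, minor point is confirming that the CRLB depends on $\boldsymbol\vartheta$ only through $s^{\mathtt{bf}}$, so that the angle-error-dependent $\hat c$ coefficients are independent of $\boldsymbol{\vartheta}$ and $\boldsymbol{\gamma}$. This follows from \eqref{chat1_thm}--\eqref{chat2_thm}.
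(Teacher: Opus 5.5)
Your proposal is correct and follows the same route as the paper. The paper gives no separate proof; it only says the gradients follow by applying the chain rule to the closed forms \eqref{eq_SE_theo}, \eqref{eq:cdf_unified_1overcrlb}, and \eqref{prob:P0_power}, using that the CRLBs depend on $(\boldsymbol{\vartheta},\boldsymbol{\gamma})$ only through $s^{\mathtt{bf}}$. You carry out exactly that computation: the quotient rule for $\mathcal{D}_{k,n}$, $\partial\epsilon_k/\partial\vartheta_n=-\mathcal{D}_{k,n}$, the log-difference form of the rate, the sigmoid derivative $\varrho\varsigma(1-\varsigma)$, and differentiation of the Schur complement in $s^{\mathtt{bf}}$ and $\rho$, and your expressions for $U$, $V$, $\Omega$, $\Xi$, $\tilde{c}^{\mathtt{s}}$, $\tilde{c}^{\mathtt{r}}$, and $\Lambda$ all match the statement.
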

\begin{algorithm}[t]
\caption{ AO-SCA Algorithm for Solving \eqref{prob:robust_power_allocation}}
\SetAlgoLined
\label{algo1}
\textbf{Initialization:} Set $t:=0$ and generate an initial feasible point
$\bigl(\boldsymbol{\vartheta}^{[0]},\boldsymbol{\gamma}^{[0]},\rho^{[0]}\bigr)$. \\
\Repeat{Convergence}{
    Solve \eqref{prob:P1_nu} to obtain
$\boldsymbol{\vartheta}^{\star}$ for given
    $\bigl(\boldsymbol{\gamma}^{[t]},\rho^{[t]}\bigr)$\;

    Update $\boldsymbol{\vartheta}^{[t+1]}:=\boldsymbol{\vartheta}^{\star}$\;

    Solve \eqref{prob:p2} to obtain
$\bigl(\boldsymbol{\gamma}^{\star},\rho^{\star}\bigr)$ for given
$\boldsymbol{\vartheta}^{[t+1]}$\;

    Update
    $\bigl(\boldsymbol{\gamma}^{[t+1]},\rho^{[t+1]}\bigr)
    :=
    \bigl(\boldsymbol{\gamma}^{\star},\rho^{\star}\bigr)$\;

    Set $t:=t+1$\;
}
\textbf{Output:}
$\bigl(\boldsymbol{\vartheta}^{\star},\boldsymbol{\gamma}^{\star},\rho^{\star}\bigr)$\;
\end{algorithm}

\subsection{Convergence and Computational Complexity}
\label{subsec:convergence_complexity}
Algorithm~\ref{algo1} summarizes the proposed AO-SCA framework for solving problem~\eqref{prob:robust_power_allocation}. The convex subproblem $\mathcal{P}_{1}$ in \eqref{prob:P1_nu} involves $K$ optimization variables and three convex constraints, while $\mathcal{P}_{2}$ in \eqref{prob:p2} involves $K+1$ optimization variables and three convex constraints. Therefore, the overall computational complexity of Algorithm~\ref{algo1} is
$\mathcal{O}\!\left(
I_{\mathtt{AO}}
\left(
K^{3}
+
(K+1)^{3}
\right)
\right)
$
\cite[Sec.~1.3.1]{BoydBook}, where $I_{\mathtt{AO}}$ denotes the number of
AO iterations.
We next discuss the convergence of Algorithm~\ref{algo1}. 
From
\eqref{eq:rate_lb_nu}, we have
$\widetilde{\mathcal{R}}_{k,\boldsymbol{\vartheta}}^{\mathtt{bf}}
(\boldsymbol{\vartheta};\boldsymbol{\vartheta}^{[t]})
\leq
\mathcal{R}_{k}^{\mathtt{bf}}
(\boldsymbol{\vartheta},\boldsymbol{\gamma}^{[t]},\rho^{[t]})$
and
$\widetilde{\mathcal{R}}_{k,\boldsymbol{\vartheta}}^{\mathtt{bf}}
(\boldsymbol{\vartheta}^{[t]};\boldsymbol{\vartheta}^{[t]})
=
\mathcal{R}_{k}^{\mathtt{bf}}
(\boldsymbol{\vartheta}^{[t]},\boldsymbol{\gamma}^{[t]},\rho^{[t]})$.
Since $\boldsymbol{\vartheta}^{[t+1]}$ is obtained by solving
\eqref{prob:P1_nu}, we have
$\mathcal{R}^{\mathtt{bf}}
(\boldsymbol{\vartheta}^{[t]},\boldsymbol{\gamma}^{[t]},\rho^{[t]})
\leq
\mathcal{R}^{\mathtt{bf}}
(\boldsymbol{\vartheta}^{[t+1]},\boldsymbol{\gamma}^{[t]},\rho^{[t]})$.
Similarly, from \eqref{eq:rate_lb_gamma_rho}, we have
$\widetilde{\mathcal{R}}_{k,\boldsymbol{\gamma},\rho}^{\mathtt{bf}}
(\boldsymbol{\gamma},\rho;\boldsymbol{\gamma}^{[t]},\rho^{[t]})
\leq
\mathcal{R}_{k}^{\mathtt{bf}}
(\boldsymbol{\vartheta}^{[t+1]},\boldsymbol{\gamma},\rho)$,
with equality at $(\boldsymbol{\gamma}^{[t]},\rho^{[t]})$. Since
$(\boldsymbol{\gamma}^{[t+1]},\rho^{[t+1]})$ is obtained by solving
\eqref{prob:p2}, it follows that
$\mathcal{R}^{\mathtt{bf}}
(\boldsymbol{\vartheta}^{[t+1]},\boldsymbol{\gamma}^{[t]},\rho^{[t]})
\leq
\mathcal{R}^{\mathtt{bf}}
(\boldsymbol{\vartheta}^{[t+1]},\boldsymbol{\gamma}^{[t+1]},\rho^{[t+1]})$.
Hence, Algorithm~\ref{algo1} produces a non-decreasing sequence of the objective values. Since the achievable sum rate is upper-bounded under a finite power budget $P_{\mathtt t}$, the generated objective sequence converges.

\section{Numerical Results and Discussions}
\label{section:numerical_results}
In this section, we validate the developed analytical framework and evaluate the performance of the proposed robust power allocation scheme. Unless otherwise specified, users are uniformly distributed within a circular cell of radius $1000$ m, with the BS located at the cell center $(0,0)$~\cite{nhan_isac_power_allocation}. The sensing target is located at
$(\theta,\phi)=\left(\frac{\pi}{8},\frac{\pi}{4}\right)$.
The large-scale fading coefficient of user $k$ is modeled as
$\beta_k=z_k/(r_k/r_{\mathtt{0}})^{\nu_{\mathtt{pl}}}$~\cite{Ngo2013}, where
$r_{\mathtt{0}}=100$~m, $z_k$ follows a log-normal distribution with
standard deviation $7$~dB \cite{liao}, and $r_k$ and $\nu_{\mathtt{pl}}$ denote the
distance between user $k$ and the BS, and the path-loss exponent,
respectively. We set $K=8$,
$N_{\mathtt{t}}=121$, $N_{\mathtt{r}}=25$, $\nu_{\mathtt{pl}}=3.2$,
$M=30$, $\tau_{\mathtt{p}}=K$ \cite{Ngo2013}, $\tau_{\mathtt{c}}=100$,
$\sigma^2=\sigma_{\mathtt{c}}^{2}=\sigma_{\mathtt{s}}^{2}=1$~\cite{Ngo2013}, and $\beta_\mathtt{s}=1$~\cite{li2008target}. The order of the Gaussian quadrature rule is set to
$G_{\psi}\in\{60,80\}$, while the sharpness parameter of the sigmoid
approximation and the convergence tolerance of Algorithm~\ref{algo1} are
set to $\varrho=1$~\cite{shi_sigmoid} and $10^{-3}$, respectively.
The considered benchmark power allocation schemes are
summarized as follows:
\begin{itemize}
\item \emph{Equal power}: We first equally allocate the total power among
$\bm{1}_K^{\T}\boldsymbol{\vartheta}$,
$N_{\mathtt{t}}\boldsymbol{\xi}_{\mathtt{bf}}^{\T}
(\boldsymbol{\vartheta})\boldsymbol{\gamma}$, and $N_{\mathtt{t}}\rho$
in~\eqref{prob:P0_power}. Then, we set
$\vartheta_1=\cdots=\vartheta_k=\cdots=\vartheta_K$ and
$\gamma_1=\cdots=\gamma_k=\cdots=\gamma_K$, to obtain
$(\vartheta_k,\gamma_k,\rho)=
\left(
{\frac{P_{\mathtt{t}}^{\max}}{3K}},
\frac{P_{\mathtt{t}}^{\max}}
{3N_{\mathtt{t}}\sum_{j=1}^{K}
[\boldsymbol{\xi}_{\mathtt{bf}}(\boldsymbol{\vartheta})]_j},
\frac{P_{\mathtt{t}}^{\max}}{3N_{\mathtt{t}}}
\right)$, $\forall k$.

    \item \emph{Equal C \& equal P}: 
    The power allocation
    is performed w.r.t. $\boldsymbol{\vartheta}$,
    $\boldsymbol{\gamma}$, and $\rho$, while the pilot and communications
    powers are equally allocated among the users. This is achieved by
    adding the constraints $\vartheta_1=\cdots=\vartheta_K$ and
    $\gamma_1=\cdots=\gamma_K$ into \eqref{prob:robust_power_allocation}.

    \item \emph{Non-robust}: The power allocation is performed based on
deterministic CRLB constraints evaluated at the estimated angles
$(\hat{\theta},\hat{\phi})$. This is achieved by using
$\mathtt{CRLB}_{\psi}^{\mathtt{bf}}\leq\mathtt{CRLB}_{\psi}^{0}$,
$\psi\in\{\theta,\phi\}$, instead of the outage-probability constraints
in~\eqref{prob:P0_crlb_theta} and~\eqref{prob:P0_crlb_phi}.
\end{itemize}

\begin{figure}[t]
\vspace{-0.07cm}
\small
\hspace{-2.5mm}
\subfigure[Sensing CRLBs]{
\label{fig:crlb}
\includegraphics[width=0.505\linewidth,height=3.5cm]{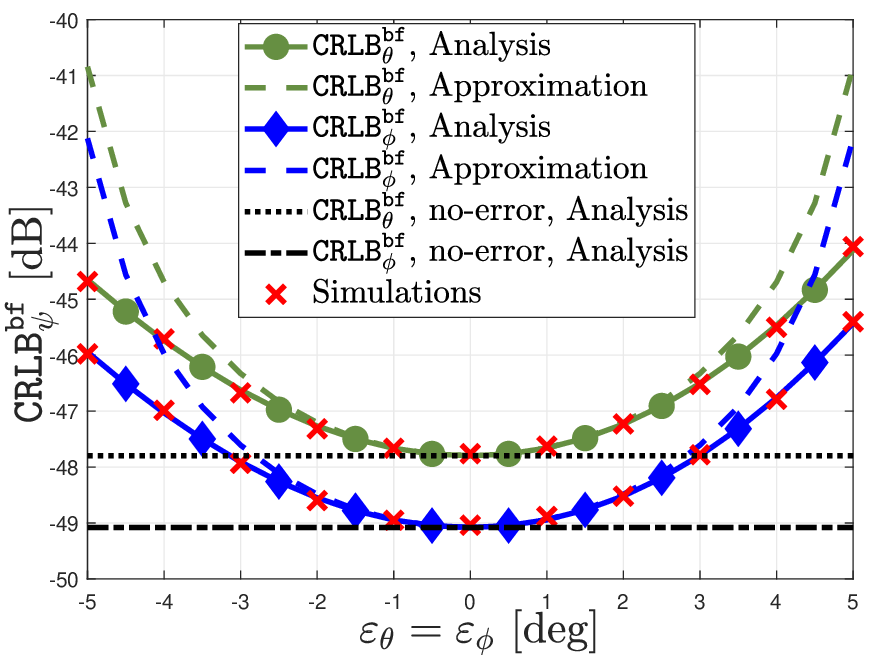}}
\hspace{-3.5mm}
\subfigure[CDFs of the CRLBs]{
\label{fig:cdf_crlb}
\includegraphics[width=0.51\linewidth,height=3.5cm]{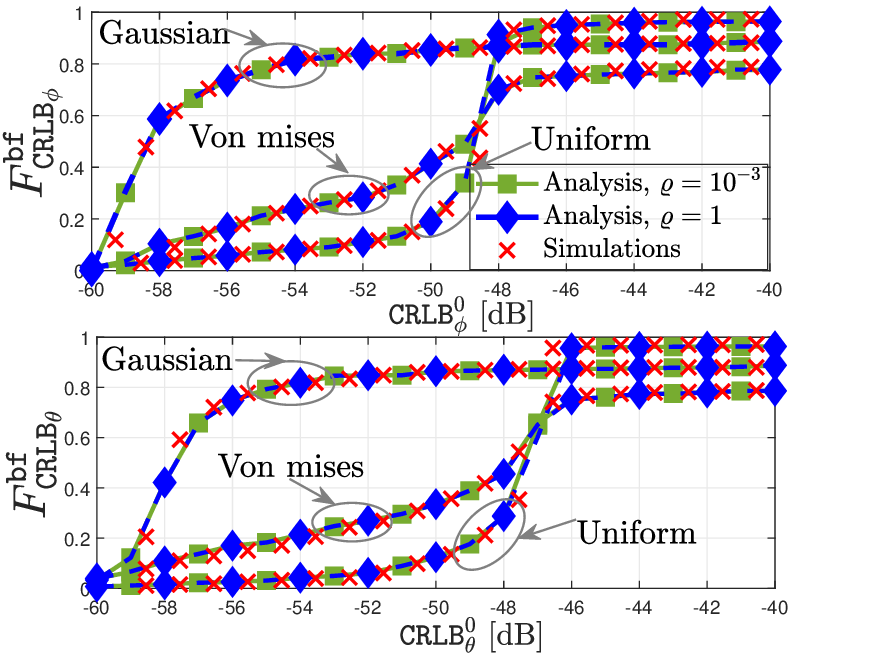}}
\caption{{Validation of (a) $\mathtt{CRLB}_\psi^\mathtt{bf}$ and (b) $F^\mathtt{bf}_{\mathtt{CRLB}_\psi}$, $\psi \in \{\theta,\phi\}$, under equal power allocation for} $\sigma_\psi=4^\circ$, $U_\psi=2$, $\kappa_\psi=10$, and different values of $\varrho$.}
	\label{cdf validation}
\end{figure}

\vspace{-0.1cm}
We first validate the analytical expressions derived for the CRLBs, their asymptotic approximations, and the corresponding CDFs in Theorem~\ref{thm:deterministic_crlb}, Corollary~\ref{rem:asymptotic_crlb}, and Theorem~\ref{thm:random_crlb}, respectively. Fig.~\ref{cdf validation} shows the analytical and Monte Carlo results for
the CRLBs under deterministic error values and for the CDFs under the  Gaussian, generalized uniform, and
von Mises  distributed errors.
 It is observed that the analytical results and the approximations closely align with the simulation ones over the entire considered range of $\varepsilon_{\psi}$ and
$\mathtt{CRLB}_{\psi}^{0}$, confirming the accuracy of the derived expressions. Moreover, Fig.~\ref{cdf validation}(b) shows that the
analysis remains tight for both considered values of the sharpness parameter
$\varrho$, justifying the adopted choice of
$\varrho$. 

\begin{figure}[t]
\centering
\includegraphics[scale=0.4]{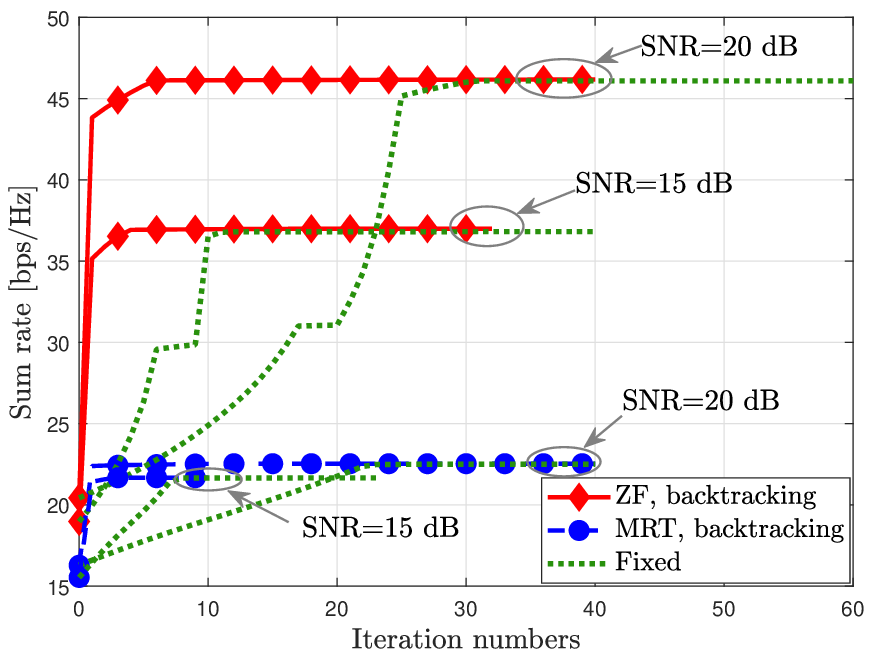}
\caption{Convergence of Algorithm~\ref{algo1} with $\mathsf{P}^{\mathtt{0}}_\theta=\mathsf{P}^{\mathtt{0}}_\phi=0.5$, $\mathtt{CRLB}_\theta^0=-48$ dB, $\mathtt{CRLB}_\phi^0\!=\!\!-40$ dB, $\sigma_\theta\!=\!8^\circ$, $\sigma_\phi\!=\!6^\circ$, $L_{\boldsymbol{\vartheta}}^{[t]}\!=\!20$,
$\ell_{\boldsymbol{\vartheta}}^{[t]}\!=\!20$, and
$\ell_{\boldsymbol{\gamma},\rho}^{[t]}\!=\!100$. }
	\label{convergence}
\end{figure}

Fig.~\ref{convergence} shows the convergence behavior of
Algorithm~\ref{algo1} for different SNR values with both MRT and ZF
precoders. The algorithm is initialized by the equal-power allocation with
$\bigl(\vartheta_k^{[0]},\gamma_k^{[0]},\rho^{[0]}\bigr)
=
\left(
{\frac{P_{\mathtt{t}}^{\max}}{3K}},
\frac{P_{\mathtt{t}}^{\max}}
{3N_{\mathtt{t}}\sum_{j=1}^{K}
[\boldsymbol{\xi}_{\mathtt{bf}}(\boldsymbol{\vartheta}^{[0]})]_j},
\frac{P_{\mathtt{t}}^{\max}}{3N_{\mathtt{t}}}
\right)$, which satisfies the power constraint
in~\eqref{prob:P0_power}.
Without loss of generality, we set
$L_{k,\boldsymbol{\vartheta}}^{\mathtt{bf},[t]}
\triangleq L_{\boldsymbol{\vartheta}}^{[t]}$, $\forall k$,
in~\eqref{eq:rate_lb_nu},
$\ell_{\theta,\boldsymbol{\vartheta}}^{\mathtt{bf},[t]}
=\ell_{\phi,\boldsymbol{\vartheta}}^{\mathtt{bf},[t]}
=\ell_{\mathtt{p},\boldsymbol{\vartheta}}^{\mathtt{bf},[t]}
\triangleq \ell_{\boldsymbol{\vartheta}}^{[t]}$
in~\eqref{eq:pout_ub_nu} and \eqref{eq:ptot_ub_nu},
and
$\ell_{\theta,\boldsymbol{\gamma},\rho}^{\mathtt{bf},[t]}
=\ell_{\phi,\boldsymbol{\gamma},\rho}^{\mathtt{bf},[t]}
\triangleq \ell_{\boldsymbol{\gamma},\rho}^{[t]}$
in~\eqref{eq:pout_ub_gamma_rho}. We employ backtracking search to obtain the parameters
$L_{\boldsymbol{\vartheta}}^{[t]}$,
$\ell_{\boldsymbol{\vartheta}}^{[t]}$, and
$\ell_{\boldsymbol{\gamma},\rho}^{[t]}$ ~\cite{guo2020weighted}.
For comparison, we also illustrate the convergence behavior obtained with fixed parameter values. As shown in Fig.~\ref{convergence}, Algorithm~\ref{algo1} converges within approximately $8$--$10$ AO iterations when backtracking search is employed, whereas $10$--$30$ iterations are required when using fixed parameter values.


\begin{figure*}[t!]
\small
    \centering
    \hspace{-2mm}
    \subfigure[Communications sum rate versus SNR]
{\includegraphics[scale=0.4]{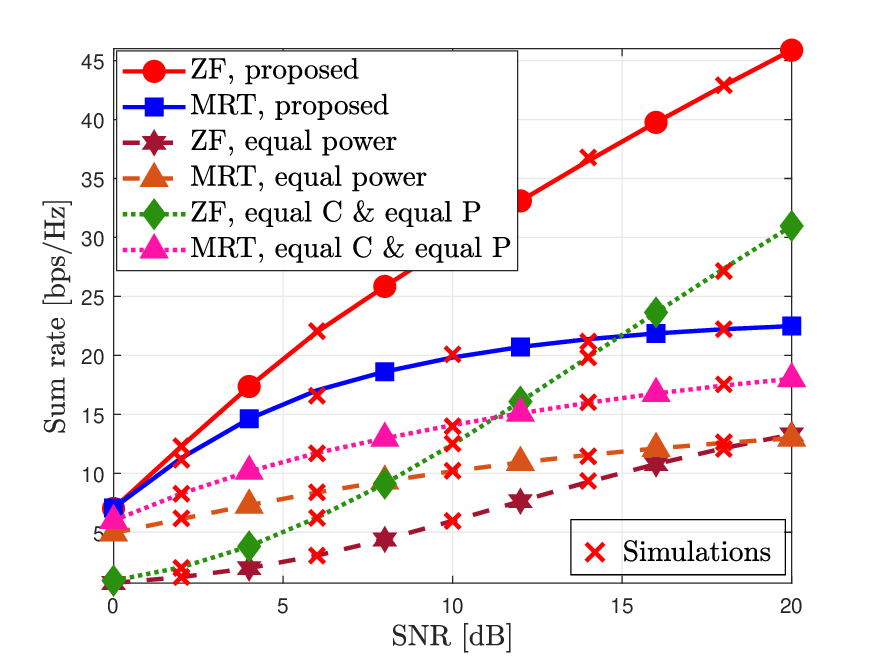}}
     \hspace{-7mm} 
    \subfigure[Sensing outage probability versus SNR]
    { \includegraphics[scale=0.4]{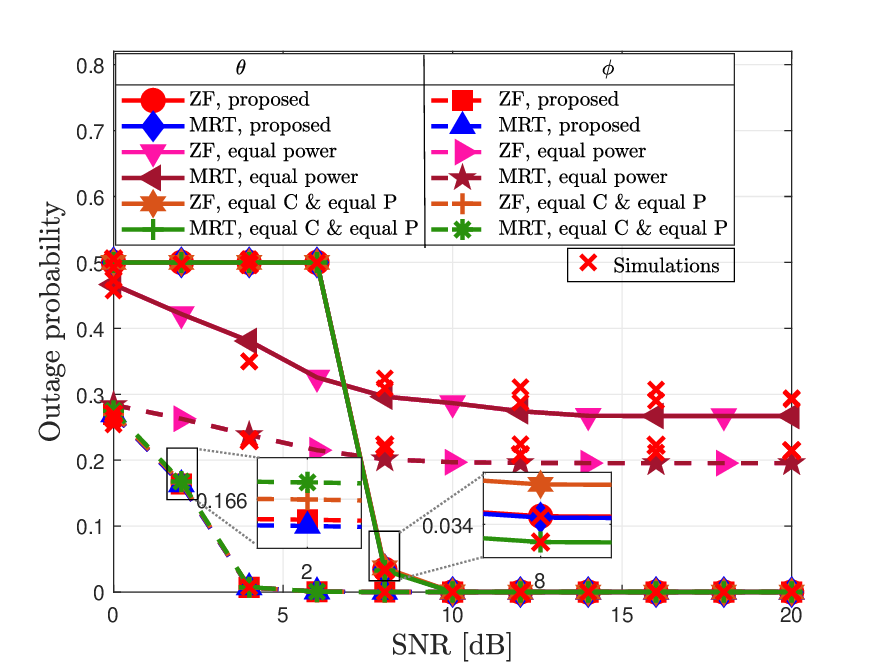}}
       \hspace{-7mm} 
    \subfigure[Power allocations versus SNR]
    { \includegraphics[scale=0.4]{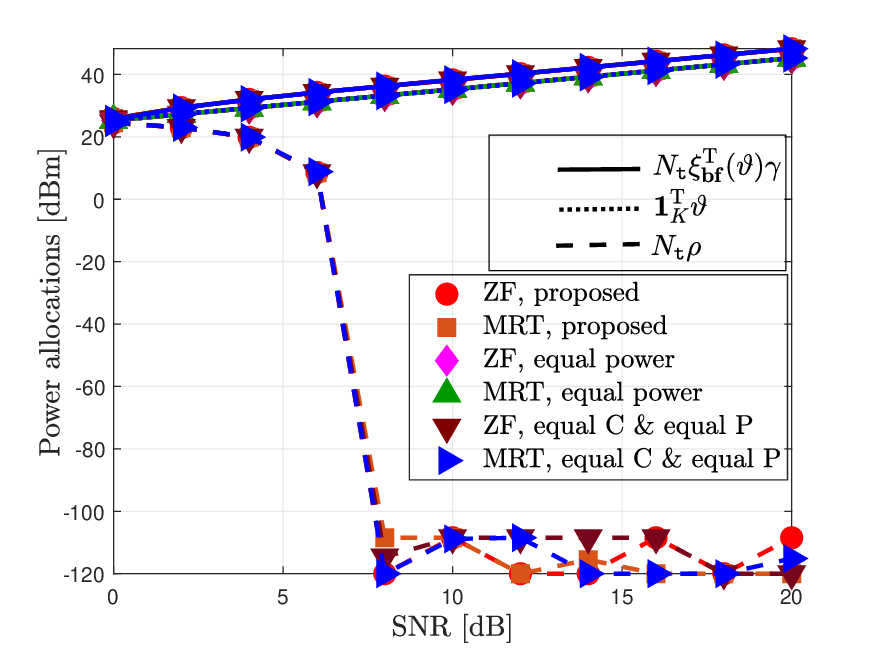}}
\caption{\small{Communications and sensing performances with ${\mathsf{P}^{\mathtt{0}}_\theta=\mathsf{P}^{\mathtt{0}}_\phi}=0.5$, $\mathtt{CRLB}_\theta^0=-48$ dB, $\mathtt{CRLB}_\phi^0=-40$ dB}, $\sigma_\theta=8^\circ$, and $\sigma_\phi=6^\circ$.
}
    \label{fig:snr_vs_metrics} 
\end{figure*}
\begin{figure*}[t!]
\small
    \centering
    \hspace{-4mm}
    \subfigure[Sum rate versus CRLB threshold]
{\includegraphics[scale=0.4]{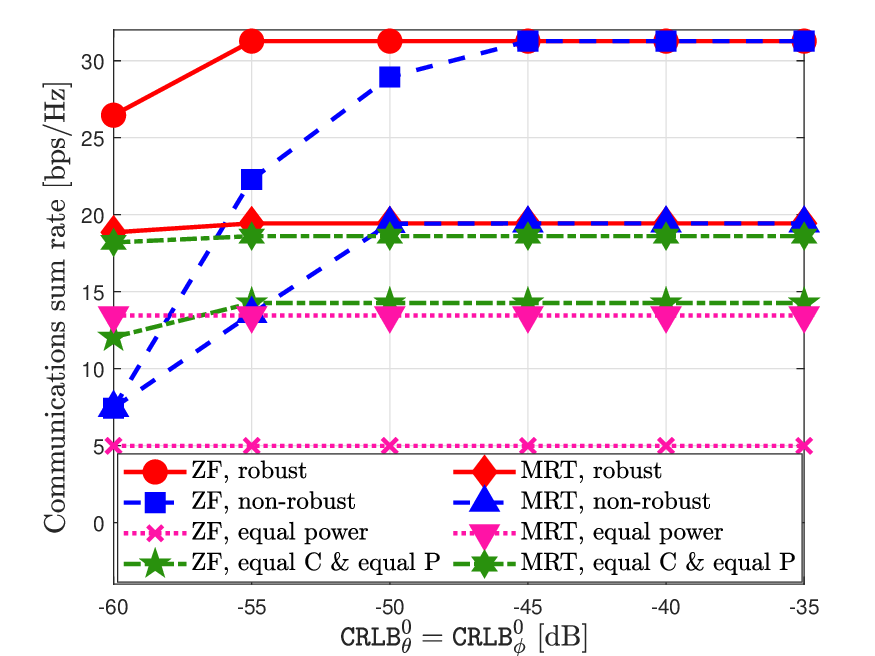}}
     \hspace{-5mm} 
    \subfigure[Sum rate versus outage threshold]
    { \includegraphics[scale=0.4]{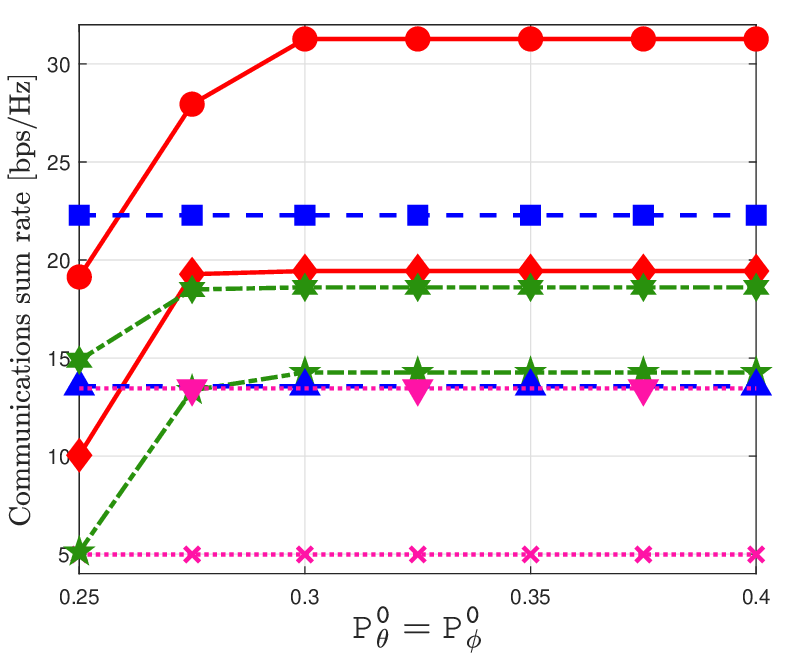}}
       \hspace{-5mm} 
    \subfigure[Sensing outage probability versus outage threshold]
    { \includegraphics[scale=0.4]{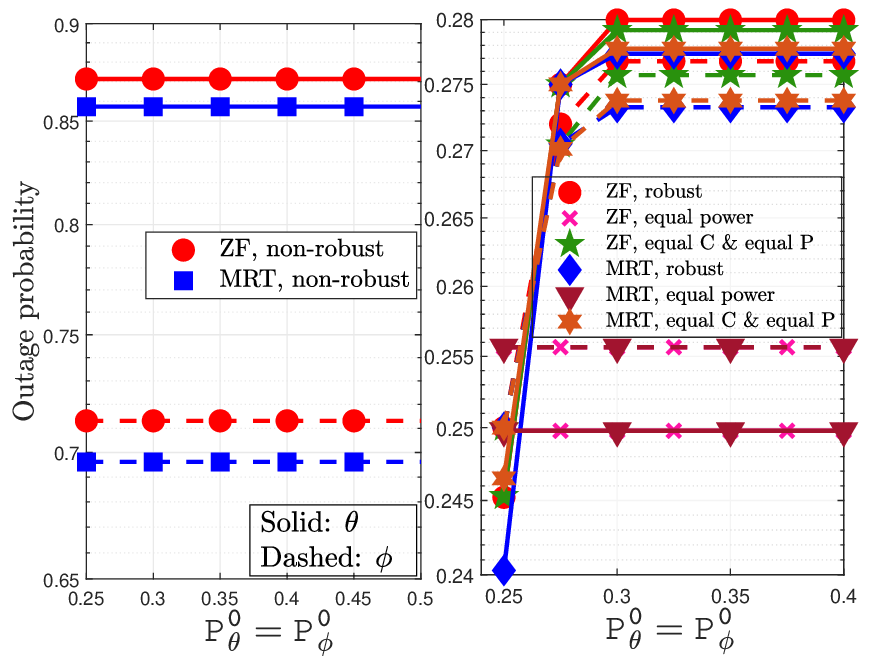}}
\caption{\small{Communications and sensing performance tradeoff, with 
SNR $=15$ dB and $\sigma_\psi=10^\circ$. 
}}
    \label{fig:tradeoff} 
\end{figure*}

Fig.~\ref{fig:snr_vs_metrics} illustrates the communications and sensing performances, together with the corresponding power allocations, achieved by the proposed and benchmark schemes. The main observations are summarized as follows.
\begin{itemize}
\item Fig.~\ref{fig:snr_vs_metrics}(a) shows that the proposed ZF-based scheme consistently achieves the highest sum rate among all ZF-based schemes across the considered SNR range, followed by the \emph{equal C \& equal P} and \emph{equal power} benchmarks. At $\mathrm{SNR}=14$ dB, the proposed ZF scheme improves the sum rate by approximately $295.5 \%$ and $85.8 \%$ compared with the ZF \emph{equal power} and ZF \emph{equal C \& equal P} schemes, respectively. Similar gains are observed for MRT-based schemes, demonstrating the effectiveness of the proposed power allocation design.

\item Fig.~\ref{fig:snr_vs_metrics}(b) shows the CRLB outage probabilities  obtained from the
derived closed-form expression and those from Monte
Carlo simulations. It is observed that the analytical results
closely align with the simulations for all considered
cases, which validates the analysis in \eqref{eq:outage_definition} and \eqref{eq:cdf_unified_1overcrlb}.
\item As observed from Fig.~\ref{fig:snr_vs_metrics}(b), the \emph{equal power} scheme provides a lower outage probability for the angle $\theta$ than the proposed scheme in the low-SNR regime. This is consistent with Fig.~\ref{fig:snr_vs_metrics}(c), where the
\emph{equal power} scheme allocates a larger power fraction to sensing,
which improves the outage probability when the SNR is low. However, as the SNR increases, the proposed scheme provides
lower outage probabilities for both $\theta$ and $\phi$ than the
\emph{equal power} scheme.
Moreover, for both MRT and ZF, the proposed scheme and the
\emph{equal C \& equal P} benchmark have similar outage probabilities
because they yield the same values of $\bm{1}_K^{\T}\boldsymbol{\vartheta}$,
$N_{\mathtt{t}}\boldsymbol{\xi}_{\mathtt{bf}}^{\T}
(\boldsymbol{\vartheta})\boldsymbol{\gamma}$, and $N_{\mathtt{t}}\rho$, as observed from Fig.~\ref{fig:snr_vs_metrics}(c) and
~\eqref{eq_closed_crlb_theta}, \eqref{eq_closed_crlb_phi}, and~\eqref{eq:cdf_unified_1overcrlb}.
\end{itemize}

Fig.~\ref{fig:tradeoff}(a) shows the communications sum rate versus sensing CRLB threshold for the robust and non-robust schemes. For all schemes, the sum rate increases as the CRLB threshold is relaxed from $-60$ dB to $-45$ dB and then gradually saturates. This is because larger CRLB thresholds relax the sensing requirements, allowing more power to be allocated to communications and pilot training. Moreover, the proposed robust ZF scheme consistently achieves a higher sum rate than the ZF non-robust scheme, with the gain being particularly pronounced in the stringent sensing regime.
As the CRLB threshold increases, the sum rate gap between the two schemes gradually decreases. This is because a less stringent sensing requirement has a minor impact on power allocation, so both schemes allocate power mainly to maximize the communications sum rate, causing their sum rates to become identical.
 
We also show the communications sum rate versus the outage threshold
$\mathsf{P}^{\mathtt{0}}_{\psi}$, for $\psi\in\{\theta,\phi\}$ in Fig.~\ref{fig:tradeoff}(b).
Similar to Fig.~\ref{fig:tradeoff}(a), the sum rate of the proposed robust
scheme initially increases with $\mathsf{P}^{\mathtt{0}}_{\psi}$
and then saturates. In contrast, the sum rate of the non-robust scheme remains constant with
respect to $\mathsf{P}^{\mathtt{0}}_{\psi}$, since its design does not account for
target-angle uncertainty and the associated outage probability constraints. It is also observed that, for comparatively stricter outage
threshold, for example at $\mathsf{P}^{\mathtt{0}}_\psi=0.25$, the robust scheme achieves
a slightly lower sum rate compared with the non-robust scheme. This is because the robust design needs to allocate more power to sensing to ensure sensing reliability under a strict outage constraint, reducing the power available for communications and pilot training. Although the non-robust design achieves a slightly higher sum rate at $\mathsf{P}^{\mathtt{0}}_\psi=0.25$, it does so at the cost of a much higher sensing outage probability. This leads to an overall poorer communications--sensing tradeoff for the non-robust scheme.  
This observation is confirmed in Fig.~\ref{fig:tradeoff}(c), where the robust schemes achieve lower outage probabilities than the non-robust schemes over the entire range of $\mathsf{P}_{\psi}^{0}$. Similar trends are observed for the MRT-based schemes.

\begin{figure}[t]
\small
\centering
\subfigure[Gaussian distribution]{
\label{fig:gaussian_sumrate}
\includegraphics[width=0.505\linewidth,height=3.5cm]{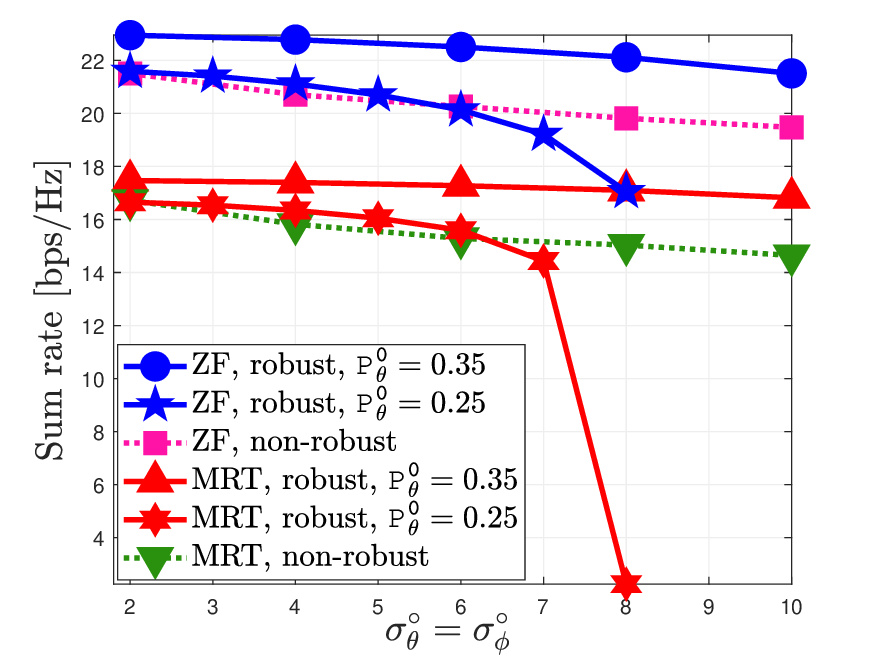}}
\hspace{-7mm}
\subfigure[Gaussian distribution]{
\label{fig:gaussian_outage}
\includegraphics[width=0.51\linewidth,height=3.5cm]{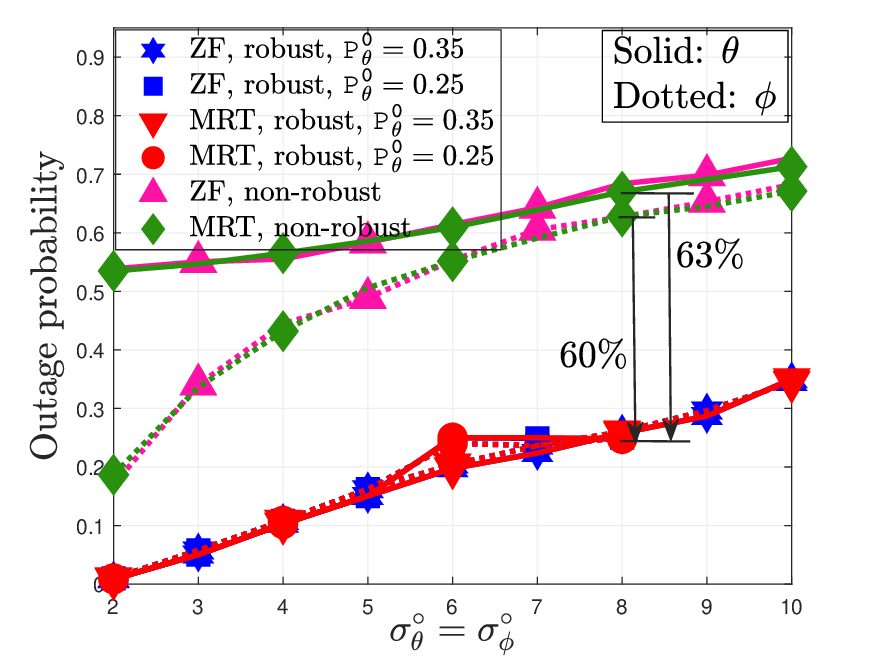}}
\\ \vspace{-0.35cm}
\subfigure[Uniform distribution]{
\label{fig:uniform_sumrate}
\includegraphics[width=0.51\linewidth,height=3.5cm]{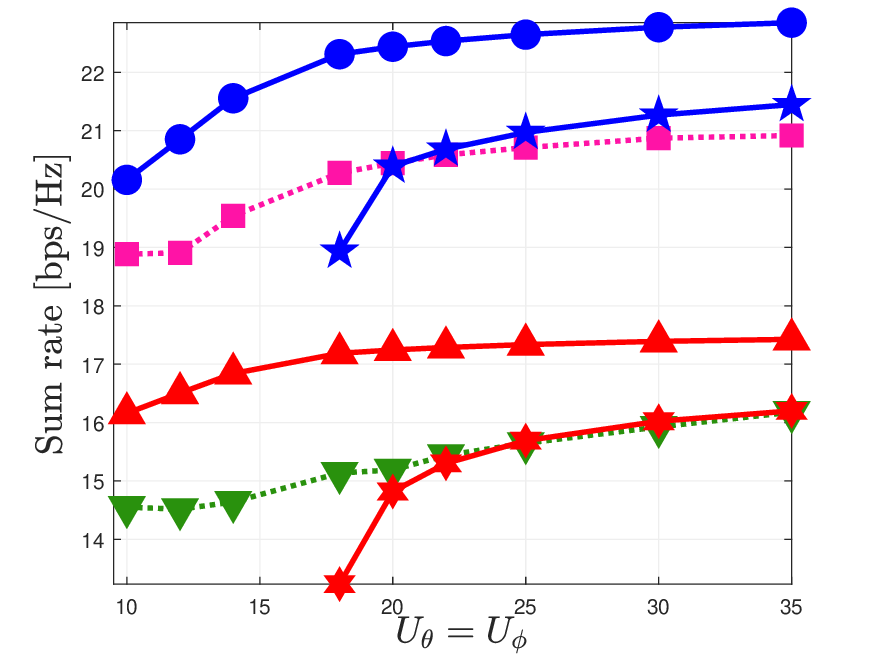}}
\hspace{-7mm}
\subfigure[Uniform distribution]{
\label{fig:uniform_outage}
\includegraphics[width=0.51\linewidth,height=3.5cm]{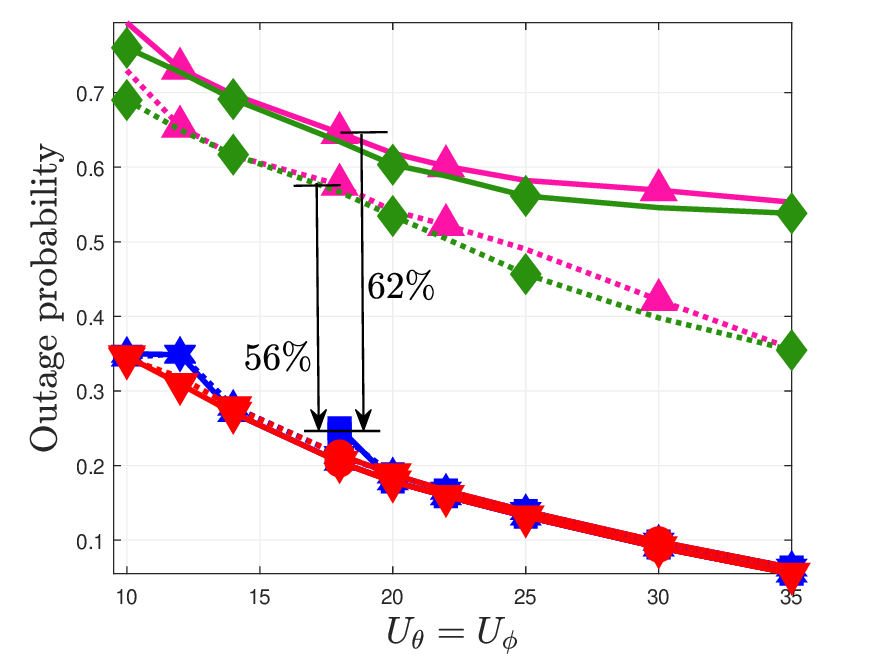}}
\\ \vspace{-0.35cm}
\subfigure[von Mises distribution]{
\label{fig:von_sumrate}
\includegraphics[width=0.51\linewidth,height=3.5cm]{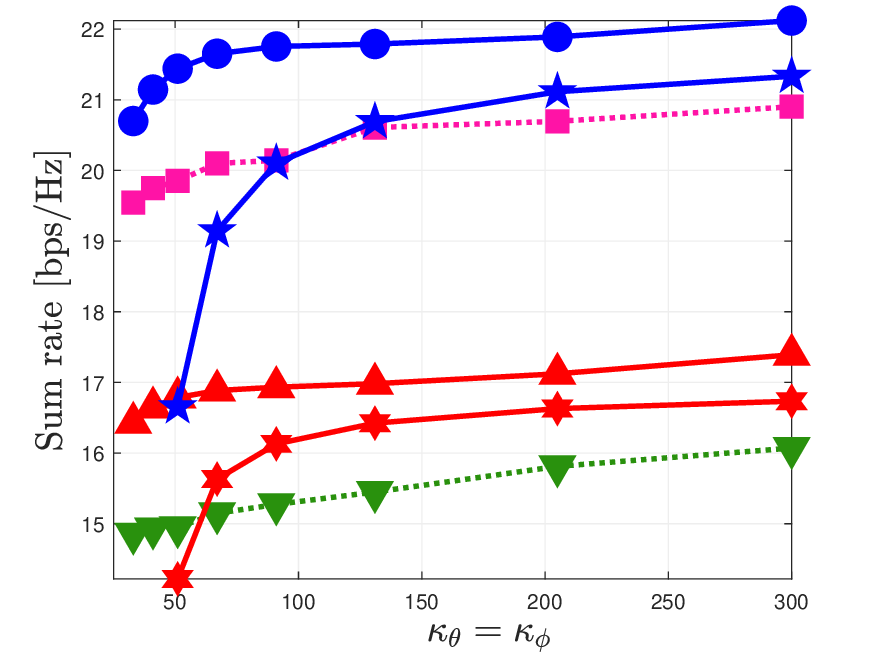}}
\hspace{-6.5mm}
\subfigure[von Mises distribution]{
\label{fig:von_outage}
\includegraphics[width=0.51\linewidth,height=3.5cm]{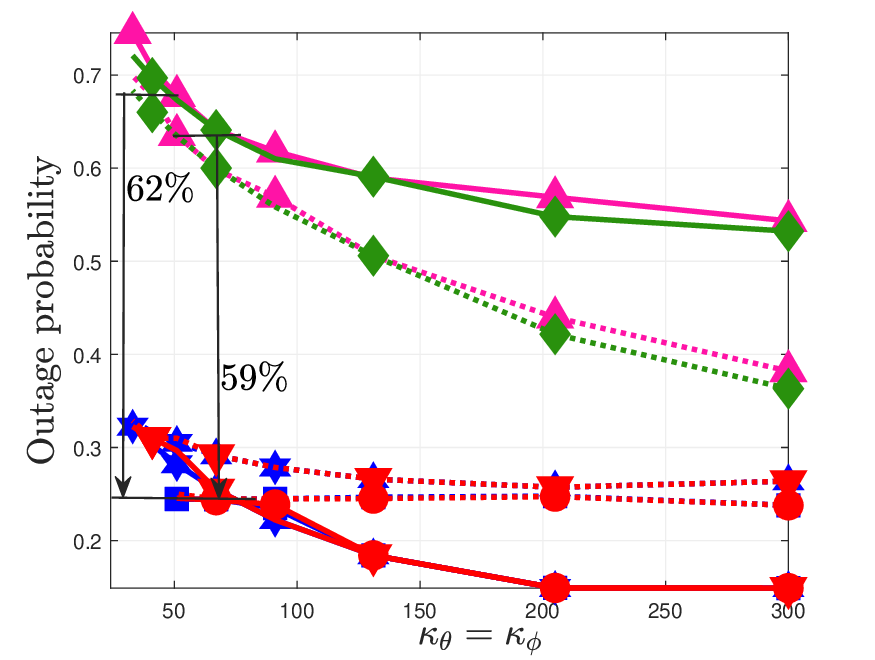}}
\caption{Impact of angle errors with different distributions on the sum rate and sensing outage probability, with $\mathtt{CRLB}_\theta^0=\mathtt{CRLB}_\phi^0=-55$ dB and $\mathsf{P}^\mathtt{0}_\theta=\mathsf{P}^\mathtt{0}_\phi$. 
}
\label{fig:error_impact}
\end{figure}

Fig.~\ref{fig:error_impact} shows the sum rate and CRLB outage
probabilities under different target-angle error models. Recall from Section~\ref{section:angleerror_outage} that, for $\psi\in\{\theta,\phi\}$, a larger
$\sigma_\psi$ corresponds to larger angle errors, whereas larger
$U_\psi$ and $\kappa_\psi$ correspond to smaller angle errors. We see from Fig.~\ref{fig:error_impact}(b) that the outage
probability increases with $\sigma_\psi$, while the corresponding sum rate
in Fig.~\ref{fig:error_impact}(a) decreases. This is because, when the
outage probability increases, the robust design needs to allocate more
power to sensing to satisfy the outage constraints. Thus, less power
is available for communications and pilot training, which reduces the sum
rate. In Figs.~\ref{fig:error_impact}(d) and~\ref{fig:error_impact}(f),
the outage probability decreases with $U_\psi$ and $\kappa_\psi$,
respectively, while the corresponding sum rates in
Figs.~\ref{fig:error_impact}(c) and~\ref{fig:error_impact}(e) increase,
following the same communications--sensing power tradeoff. These observations align with Remark \ref{rem:tradeoff} and Corollary \ref{rem:impact}.

Furthermore, for a comparatively strict sensing constraint, \textit{i.e.},
$\mathsf{P}^{\mathtt{0}}_\theta=\mathsf{P}^{\mathtt{0}}_\phi=0.25$, and severe angle errors, such as
$(\sigma_\psi,U_\psi,\kappa_\psi)=(8^\circ,18,51)$, the non-robust design
achieves a slightly higher sum rate than the robust design in Figs.~\ref{fig:error_impact}(a),~\ref{fig:error_impact}(c), and~\ref{fig:error_impact}(e). However, this comes at the cost of
sensing reliability, as also observed from Fig.~\ref{fig:tradeoff}(b). At
$(\sigma_\psi,U_\psi,\kappa_\psi)=(8^\circ,18,51)$, the robust design reduces the CRLB outage
probabilities for the azimuth and elevation angles by $(63\%,60\%)$,
$(62\%,56\%)$, and $(62\%,59\%)$ compared with the non-robust scheme under
Gaussian, uniform, and von Mises angle-error distributions, respectively. This demonstrates that the proposed robust design achieves a better rate-reliability tradeoff for the considered angle-error distributions.


\begin{figure}[t]
\small
\hspace{-2.5mm}
\subfigure[Sum rate versus $N_\mathtt{t}$]{
\label{fig:convergence_outer1}
\includegraphics[scale=0.3]{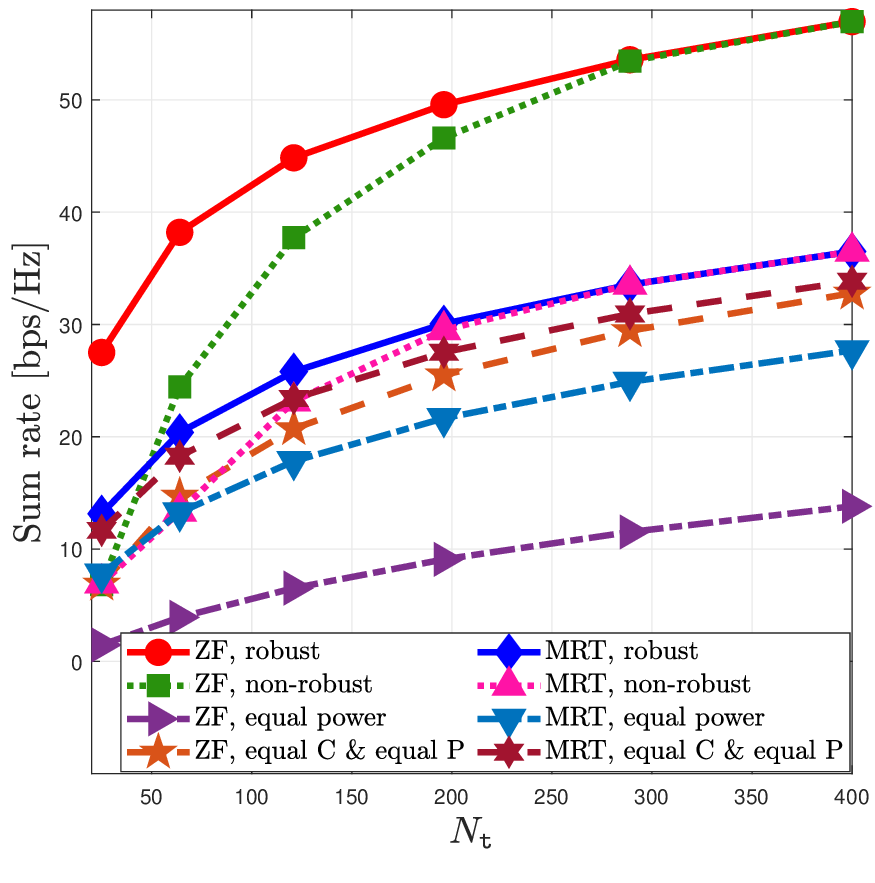}}
\hspace{-2.8mm}
\subfigure[Outage probability versus $N_\mathtt{t}$]{
\label{fig:convergence_inner1}
\includegraphics[scale=0.3]{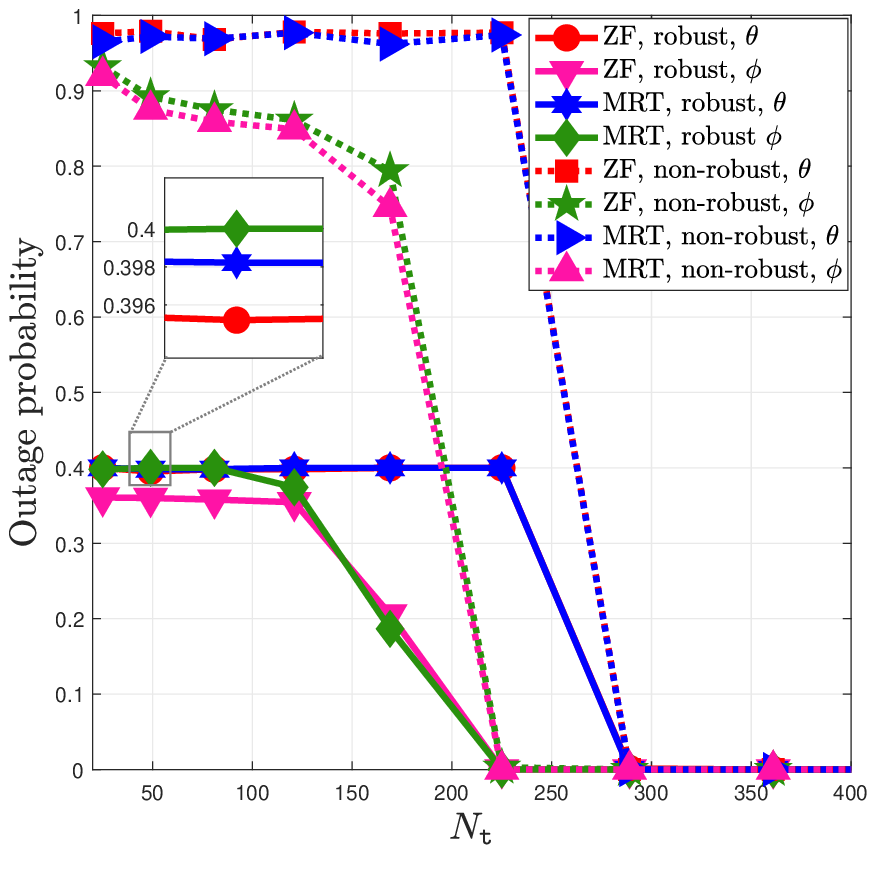}}
\caption{Communications and sensing performances, with $\mathsf{P}^{\mathtt{0}}_\psi=0.4$, for $\psi\in\{\theta,\phi\}$, $\sigma_\psi=10^\circ$, and SNR $=10$ dB.}
\label{fig:Nt}
\end{figure}

Fig.~\ref{fig:Nt} shows the impact of the number of transmit antennas on the communications and sensing performances. As shown in Fig.~\ref{fig:Nt}(a), increasing $N_{\mathtt t}$ improves the communications sum rate for all considered schemes due to the enhanced beamforming gain provided by larger antenna arrays. Moreover, the sum-rate gap between the robust and non-robust schemes gradually narrows as $N_{\mathtt t}$ increases. This decrease is consistent with Fig.~\ref{fig:Nt}(b), where the
sensing outage probabilities of both the robust and non-robust schemes
become nearly zero at high $N_{\mathtt{t}}$. Hence, the sensing outage
constraints have only a minor impact on the power allocation, and both
schemes tend to allocate power primarily to improve the communications sum
rate. Finally, the reduction in sensing outage probability with increasing $N_{\mathtt t}$ corroborates the asymptotic result in Remark~\ref{rem:crlb antennas}. 

\section{Conclusion}
\label{sec:conclusion}
This paper investigated monostatic mMIMO-ISAC systems under target-angle uncertainty. We derived closed-form expressions for the sensing CRLBs, outage probabilities, and the communications sum rate under Gaussian, generalized-uniform, and von Mises angle-error models. Building upon these results, a robust power allocation framework was developed to maximize the communications sum rate while satisfying sensing outage and total-power constraints. An efficient AO-SCA algorithm was proposed to solve the resulting nonconvex optimization problem. The analytical and simulation results demonstrated the accuracy of the developed framework and revealed the impact of target-angle errors on sensing reliability and the communications--sensing tradeoff. In particular, the proposed robust design significantly reduced sensing outage probabilities compared with a non-robust benchmark while maintaining competitive communications performance, thereby achieving a superior communications-sensing tradeoff.

\appendices

\section{Proof of Theorem~\ref{thm:deterministic_crlb}}
\label{app:crlb_error}
Applying steps similar to those in~\cite[Appendix C]{nhan_isac_power_allocation} to \eqref{eq_approx_cov}, 
we first obtain
$\mathbf{R}_{\mathtt{bf}}
 =\boldsymbol{\xi}_{\mathtt{bf}}^{\T}\boldsymbol{\gamma}\,\mathbf{I}_{N_{\mathtt{t}}}
 +\rho\mathbf{u}\mathbf{u}^{\H}$.
Then, using the fact that
$\mathbf{a}^{\H}\dot{\mathbf{a}}_{\theta}
 =\mathbf{a}^{\H}\dot{\mathbf{a}}_{\phi}
 =\mathbf{b}^{\H}\dot{\mathbf{b}}_{\theta}
 =\mathbf{b}^{\H}\dot{\mathbf{b}}_{\phi}=0$ into \eqref{eq_def_Jtt}--\eqref{eq_def_Jaa}, 
 for any sensing beamformer
 $\mathbf{u}$, {we have}
\begin{align}
\!\!\!T_{\theta\theta}
&\triangleq
\boldsymbol{\xi}_{\mathtt{bf}}^{\T}(\boldsymbol{\vartheta})
\boldsymbol{\gamma}c_{\theta}
+\rho\hat{c}_{\theta},
\hspace{0.2cm}T_{\phi\phi}
\triangleq
\boldsymbol{\xi}_{\mathtt{bf}}^{\T}(\boldsymbol{\vartheta})
\boldsymbol{\gamma}c_{\phi}
+\rho\hat{c}_{\phi},
\label{eq:app_Jpp}
\\
\!\!\!T_{\theta\phi}
&\triangleq
\boldsymbol{\xi}_{\mathtt{bf}}^{\T}(\boldsymbol{\vartheta})
\boldsymbol{\gamma}c_{\theta\phi}
\!+\!\rho\hat{c}_{\theta\phi}, \mathbf{t}_{\psi\tilde{\boldsymbol{\beta_\mathtt{s}}}}
\!\triangleq\!
\rho\,\mathtt{Re}
\!\left(\!
\hat{c}_{\psi\tilde{\boldsymbol{\beta_\mathtt{s}}}}\!(1,\jmath)\!
\right)
,
\label{eq:app_Jaa}
\\
\mathbf{T}_{\tilde{\boldsymbol{\beta_\mathtt{s}}}\tilde{\boldsymbol{\beta_\mathtt{s}}}}
&\triangleq
\Big(
\boldsymbol{\xi}_{\mathtt{bf}}^{\T}(\boldsymbol{\vartheta})
\boldsymbol{\gamma}
c_{\tilde{\boldsymbol{\beta_\mathtt{s}}}\tilde{\boldsymbol{\beta_\mathtt{s}}}}
+\rho
\hat{c}_{\tilde{\boldsymbol{\beta_\mathtt{s}}}\tilde{\boldsymbol{\beta_\mathtt{s}}}}
\Big)\mathbf{I}_{2}=T_{\tilde{\boldsymbol{\beta}}_{\mathtt{s}}
\tilde{\boldsymbol{\beta}}_{\mathtt{s}}}\mathbf{I}_{2}.
\label{eq:app_jpa}
\end{align}
Here, $\psi\in\{\theta,\phi\}$,
$\{c_{\theta},c_{\phi},c_{\theta\phi},
c_{\tilde{\boldsymbol{\beta_\mathtt{s}}}\tilde{\boldsymbol{\beta_\mathtt{s}}}},
\hat{c}_{\theta},\hat{c}_{\phi},\hat{c}_{\theta\phi},
\hat{c}_{\tilde{\boldsymbol{\beta_\mathtt{s}}}\tilde{\boldsymbol{\beta_\mathtt{s}}}},
\hat{c}_{\psi\tilde{\boldsymbol{\beta_\mathtt{s}}}}\}$ are defined in
\eqref{chat1_thm}--\eqref{chat2_thm}, 
where
$g_{0}\triangleq\mathbf{u}^{\H}\mathbf{a}$,
$g_{\theta}\triangleq\mathbf{u}^{\H}\dot{\mathbf{a}}_{\theta}$,
and $g_{\phi}\triangleq\mathbf{u}^{\H}\dot{\mathbf{a}}_{\phi}$.
From \eqref{eq:app_Jpp}--\eqref{eq:app_jpa} and \eqref{eq_FIM} we define
\begin{align}
\mathtt{CRLB}_{\theta}
&\triangleq\!\Bigl(\widetilde{T}_{\theta\theta}
        \!-\!\tfrac{\widetilde{T}_{\theta\phi}^{2}}{\widetilde{T}_{\phi\phi}}\Bigr)^{-1}\!\!,
\mathtt{CRLB}_{\phi}
\triangleq\!\Bigl(\widetilde{T}_{\phi\phi}
     \!  -\!\tfrac{\widetilde{T}_{\theta\phi}^{2}}{\widetilde{T}_{\theta\theta}}\Bigr)^{-1}\!\!.
\label{eq:app_CRLBs}
\end{align}
For $\mathbf{u}=\mathbf{a}(\hat{\theta},\hat{\phi})$, we define
$\Delta_{\mathtt{y}}
 \triangleq\sin(\theta)\sin(\phi)-\sin(\hat{\theta})\sin(\hat{\phi})$
and
$\Delta_{\mathtt{z}}\triangleq\cos(\phi)-\cos(\hat{\phi})$, for
$\mathtt{x}\in\{\mathtt{y},\mathtt{z}\}$, and
\begin{align}
\!\!\!\!D_{N_{\mathtt{t}\mathtt{x}}}(\Delta_{\mathtt{x}})
\!\triangleq\!\!\!\!\!\!\!\!
\!\!\sum_{m=-(N_{\mathtt{t}\mathtt{x}}-1)/2}^{(N_{\mathtt{t}\mathtt{x}}-1)/2}
\!\!\!\!\!\!\!\!\!\!\!e^{j\pi m\Delta_{\mathtt{x}}},
Q_{N_{\mathtt{t}\mathtt{x}}}(\Delta_{\mathtt{x}})
&\triangleq
\!\!\!\!\!\!\!\!\!\!\!\sum_{m=-(N_{\mathtt{t}\mathtt{x}}-1)/2}^{(N_{\mathtt{t}\mathtt{x}}-1)/2}
\!\!\!\!\!\!\!\!\!\!m\,e^{j\pi m\Delta_{\mathtt{x}}},
\label{eq:app_QN}
\end{align}
whose closed forms are given in~\eqref{Sn} and~\eqref{Qn}.
For brevity, let
$D_{\mathtt{x}}\triangleq D_{N_{\mathtt{t}\mathtt{x}}}(\Delta_{\mathtt{x}})$
and
$Q_{\mathtt{x}}\triangleq Q_{N_{\mathtt{t}\mathtt{x}}}(\Delta_{\mathtt{x}})$
for $\mathtt{x}\in\{\mathtt{y},\mathtt{z}\}$.
We next compute $\{g_{0},g_{\theta},g_{\phi}\}$
required in \eqref{chattheta}--\eqref{chat2_thm} as follows:
\subsubsection{\textbf{Compute} $g_{0}$}
First, we use the identity
$(\mathbf{p}\otimes\mathbf{q})^{\H}(\mathbf{r}\otimes\mathbf{s})
=(\mathbf{p}^{\H}\mathbf{r})(\mathbf{q}^{\H}\mathbf{s})$ and the results in \eqref{eq_ath} and \eqref{eq_atv} to write
\begin{align}
g_{0}
=\mathbf{u}^{\H}\mathbf{a}
=\bigl(\mathbf{a}_{\mathtt{y}}^{\H}(\hat{\theta},\hat{\phi})\,
        \mathbf{a}_{\mathtt{y}}(\theta,\phi)\bigr)
 \bigl(\mathbf{a}_{\mathtt{z}}^{\H}(\hat{\phi})\,
        \mathbf{a}_{\mathtt{z}}(\phi)\bigr).    
\end{align}
Here, based on \eqref{eq:app_QN}, we can write
$\mathbf{a}_{\mathtt{y}}^{\H}(\hat{\theta},\hat{\phi})\,\mathbf{a}_{\mathtt{y}}(\theta,\phi)=D_{\mathtt{y}}$
and
$\mathbf{a}_{\mathtt{z}}^{\H}(\hat{\phi})\,\mathbf{a}_{\mathtt{z}}(\phi)=D_{\mathtt{z}}$, which yields
$g_{0} = D_{\mathtt{y}}D_{\mathtt{z}}$ in \eqref{g0}.
\subsubsection{\textbf{Compute} $g_{\theta}$}
We next use~\cite[(88),(90)]{nhan_isac_power_allocation} to write
$\dot{\mathbf{a}}_{\theta}=\dot{\mathbf{a}}_{\mathtt{y}\theta}\otimes\mathbf{a}_{\mathtt{z}}$. Here,
$\dot{\mathbf{a}}_{\mathtt{y}\theta}
 =j\pi\cos(\theta)\sin(\phi)\,\mathbf{v}_{\mathtt{ty}}\odot\mathbf{a}_{\mathtt{y}}$ with $\mathbf{v}_{\mathtt{tx}}=[{-(N_{\mathtt{tx}}-1)}/{2},\ldots,{(N_{\mathtt{tx}}-1)}/{2}]^{\T}$, for $\mathtt{x}\in\{\mathtt{y},\mathtt{z}\}$. Based on this and \eqref{eq:app_QN}, we have
\begin{align}
   \mathbf{u}^{\H}\dot{\mathbf{a}}_{\theta}
&= \bigl(\mathbf{a}_{\mathtt{y}}^{\H}(\hat{\theta},\hat{\phi})\,
          \dot{\mathbf{a}}_{\mathtt{y}\theta}\bigr)
   \bigl(\mathbf{a}_{\mathtt{z}}^{\H}(\hat{\phi})\,
          \mathbf{a}_{\mathtt{z}}\bigr)\\ &= j\pi\cos(\theta)\sin(\phi)\,
   \mathbf{a}_{\mathtt{y}}^{\H}(\hat{\theta},\hat{\phi})
   \bigl(\mathbf{v}_{\mathtt{ty}}\odot\mathbf{a}_{\mathtt{y}}\bigr)\,D_{\mathtt{z}}. 
\end{align}
Here,
$\mathbf{a}_{\mathtt{y}}^{\H}(\hat{\theta},\hat{\phi})
 (\mathbf{v}_{\mathtt{ty}}\odot\mathbf{a}_{\mathtt{y}})
 =Q_{N_{\mathtt{ty}}}(\Delta_{\mathtt{y}})=Q_{\mathtt{y}}$,
which yields $g_{\theta} = j\pi\cos(\theta)\sin(\phi)\,Q_{\mathtt{y}}D_{\mathtt{z}}$ in \eqref{gtheta}.
\subsubsection{\textbf{Compute} $g_{\phi}$}
By using~\cite[(89)]{nhan_isac_power_allocation}, we write
$\dot{\mathbf{a}}_{\phi}
 =\dot{\mathbf{a}}_{\mathtt{y}\phi}\otimes\mathbf{a}_{\mathtt{z}}
 +\mathbf{a}_{\mathtt{y}}\otimes\dot{\mathbf{a}}_{\mathtt{z}\phi}$.
Here,
$\dot{\mathbf{a}}_{\mathtt{y}\phi}
 =j\pi\sin(\theta)\cos(\phi)\,(\mathbf{v}_{\mathtt{ty}}\odot\mathbf{a}_{\mathtt{y}})$
and
$\dot{\mathbf{a}}_{\mathtt{z}\phi}
 =-j\pi\sin(\phi)\,(\mathbf{v}_{\mathtt{tz}}{\odot}\mathbf{a}_{\mathtt{z}})$. Using this and \eqref{eq:app_QN}, we have
\begin{align}
  \nonumber \mathbf{u}^{\H}\dot{\mathbf{a}}_{\phi}
&= \bigl(\mathbf{a}_{\mathtt{y}}^{\H}(\hat{\theta},\hat{\phi})\,
          \dot{\mathbf{a}}_{\mathtt{y}\phi}\bigr)
   \bigl(\mathbf{a}_{\mathtt{z}}^{\H}(\hat{\phi})\,\mathbf{a}_{\mathtt{z}}\bigr)
  +\bigl(\mathbf{a}_{\mathtt{y}}^{\H}(\hat{\theta},\hat{\phi})\,\mathbf{a}_{\mathtt{y}}\bigr)
   \bigl(\mathbf{a}_{\mathtt{z}}^{\H}(\hat{\phi})\,\dot{\mathbf{a}}_{\mathtt{z}\phi}\bigr)\\ \nonumber &= j\pi\sin(\theta)\cos(\phi)\,
   \mathbf{a}_{\mathtt{y}}^{\H}(\hat{\theta},\hat{\phi})
   (\mathbf{v}_{\mathtt{ty}}\odot\mathbf{a}_{\mathtt{y}})\,D_{\mathtt{z}} \\ 
  &\hspace{2cm}-j\pi\sin(\phi)\,D_{\mathtt{y}}\,
   \mathbf{a}_{\mathtt{z}}^{\H}(\hat{\phi})
   (\mathbf{v}_{\mathtt{tz}}\odot\mathbf{a}_{\mathtt{z}}).  
\end{align}
By applying \eqref{eq:app_QN} to $\mathbf{u}^{\H}\dot{\mathbf{a}}_{\phi}$, we get $g_{\phi}$
in \eqref{gphi}.
Finally, substituting $\{g_0,g_\theta,g_\phi\}$ 
into \eqref{eq:app_Jpp}--\eqref{eq:app_jpa} and then using \eqref{eq:app_CRLBs} and \eqref{eq_FIM} yields
\eqref{eq_closed_crlb_theta}--\eqref{eq_closed_crlb_phi} with
\eqref{chattheta}--\eqref{chat2_thm}. Additionally, in \eqref{chat1_thm}--\eqref{chat2_thm}, we obtain the terms $\|\dot{\mathbf{a}}_{\theta}\|^{2}$,
$\|\dot{\mathbf{a}}_{\phi}\|^{2}$, and
$\dot{\mathbf{a}}_{\theta}^{\H}\dot{\mathbf{a}}_{\phi}$ using
the property
{$\mathbf a_{\mathtt y}^{\H}\dot{\mathbf a}_{\mathtt y\theta}= \mathbf a_{\mathtt y}^{\H}\dot{\mathbf a}_{\mathtt y\phi}=  \mathbf a_{\mathtt z}^{\H}\dot{\mathbf a}_{\mathtt z\phi}= \dot{\mathbf a}_{\mathtt z\theta}= 0$}
and similar properties for $\mathbf{b}(\theta,\phi)$, along with the results
in~\cite[(88)--(93)]{nhan_isac_power_allocation}. 

\section{Proof of Corollary~\ref{rem:asymptotic_Nt}}
\label{app:asymptotic_Nt_Nr}
From~\eqref{eq_power}, under the equal-power setting, we have
$\boldsymbol{\xi}_{\mathtt{bf}}^{\T}(\boldsymbol{\vartheta})\boldsymbol{\gamma}
=\rho=\mathcal{O}(N_{\mathtt{t}}^{-1})$. We first consider
$N_{\mathtt{t}}\gg1$ with fixed $N_{\mathtt{r}}$ and fixed nonzero
$\boldsymbol{\varepsilon}$. For
$N_{\mathtt{t}\mathtt{y}}=N_{\mathtt{t}\mathtt{z}}=\sqrt{N_{\mathtt{t}}}$,
\eqref{eq:norm_adottheta}--\eqref{eq:inner_adot} yield
$\|\dot{\mathbf{a}}_{\theta}\|^{2},
\|\dot{\mathbf{a}}_{\phi}\|^{2},
\dot{\mathbf{a}}_{\theta}^{\H}\dot{\mathbf{a}}_{\phi}
=\mathcal{O}(N_{\mathtt{t}}^{2})$. 
Thus, from
\eqref{chat1_thm}--\eqref{cbeta}, we obtain
$c_{\theta},c_{\phi},c_{\theta\phi}=\mathcal{O}(N_{\mathtt{t}}^{2})$
and
$c_{\tilde{\boldsymbol{\beta}}_{\mathtt{s}}
\tilde{\boldsymbol{\beta}}_{\mathtt{s}}}
=\mathcal{O}(N_{\mathtt{t}})$.
Furthermore, by using~\eqref{g0}--\eqref{gphi}, we obtain $g_0=\mathcal{O}(1)$ and
$g_{\theta},g_{\phi}=\mathcal{O}(\sqrt{N_{\mathtt{t}}})$. Therefore,
from~\eqref{chattheta}--\eqref{chat2_thm}, we have
$\hat{c}_{\theta},\hat{c}_{\phi},\hat{c}_{\theta\phi}
=\mathcal{O}(N_{\mathtt{t}})$,
$\hat{c}_{\tilde{\boldsymbol{\beta}}_{\mathtt{s}}
\tilde{\boldsymbol{\beta}}_{\mathtt{s}}}
=\mathcal{O}(1)$, and
$\hat{c}_{\theta\tilde{\boldsymbol{\beta}}_{\mathtt{s}}},
\hat{c}_{\phi\tilde{\boldsymbol{\beta}}_{\mathtt{s}}}
=\mathcal{O}(\sqrt{N_{\mathtt{t}}})$.
Substituting these asymptotic results into
\eqref{eq_closed_crlb_theta} and~\eqref{eq_closed_crlb_phi}, the
denominator of $\mathtt{CRLB}_{\psi}^{\mathtt{bf}}$ scales as
$\mathcal{O}(N_{\mathtt{t}})$. Thus,
$\mathtt{CRLB}_{\psi}^{\mathtt{bf}}\propto 1/N_{\mathtt{t}}$ for
$N_{\mathtt{t}}\gg1$ and fixed $N_{\mathtt{r}}$.
Next, we consider $N_{\mathtt{r}}\gg1$ with fixed $N_{\mathtt{t}}$.
For
$N_{\mathtt{r}\mathtt{y}}=N_{\mathtt{r}\mathtt{z}}=\sqrt{N_{\mathtt{r}}}$,
\eqref{eq:norm_bdottheta}--\eqref{eq:inner_bdot} yield
$\|\dot{\mathbf{b}}_{\theta}\|^{2},
\|\dot{\mathbf{b}}_{\phi}\|^{2},
\dot{\mathbf{b}}_{\theta}^{\H}\dot{\mathbf{b}}_{\phi}
=\mathcal{O}(N_{\mathtt{r}}^{2})$. Consequently, from
\eqref{chat1_thm}--\eqref{cbeta} and
\eqref{chattheta}--\eqref{chat2_thm}, we obtain
$c_{\theta},c_{\phi},c_{\theta\phi},
\hat{c}_{\theta},\hat{c}_{\phi},\hat{c}_{\theta\phi}
=\mathcal{O}(N_{\mathtt{r}}^{2})$,
$c_{\tilde{\boldsymbol{\beta}}_{\mathtt{s}}
\tilde{\boldsymbol{\beta}}_{\mathtt{s}}},
\hat{c}_{\tilde{\boldsymbol{\beta}}_{\mathtt{s}}
\tilde{\boldsymbol{\beta}}_{\mathtt{s}}},
\hat{c}_{\theta\tilde{\boldsymbol{\beta}}_{\mathtt{s}}},
\hat{c}_{\phi\tilde{\boldsymbol{\beta}}_{\mathtt{s}}}
=\mathcal{O}(N_{\mathtt{r}})$. Substituting these results into
\eqref{eq_closed_crlb_theta} and~\eqref{eq_closed_crlb_phi}, yields $\mathtt{CRLB}_{\psi}^{\mathtt{bf}}\propto 1/N_{\mathtt{r}}^{2}$ for
$N_{\mathtt{r}}\gg1$.


\section{Proof of Corollary \ref{rem:asymptotic_crlb}}
\label{app:asymptotic_crlb}
We consider fixed $N_{\mathtt{t}}$ and $N_{\mathtt{r}}$, and assume
$\boldsymbol{\varepsilon}\to\mathbf{0}$.
For
$\hat{\theta}=\theta+\varepsilon_{\theta}$ and
$\hat{\phi}=\phi+\varepsilon_{\phi}$, the first-order Taylor expansions yield
$\sin\hat{\theta}\sin\hat{\phi}
=\sin\theta\sin\phi+\varepsilon_{\theta}\cos\theta\sin\phi
+\varepsilon_{\phi}\sin\theta\cos\phi
+\mathcal{O}(\|\boldsymbol{\varepsilon}\|^2)$ and
$\cos\hat{\phi}=\cos\phi-\varepsilon_{\phi}\sin\phi
+\mathcal{O}(\|\boldsymbol{\varepsilon}\|^2)$. By using these approximations in 
\eqref{deltax}, we obtain
\begin{align}
\Delta_{\mathtt{y}}
&=-(\varepsilon_{\theta}\cos\theta\sin\phi
+\varepsilon_{\phi}\sin\theta\cos\phi)
+\mathcal{O}(\|\boldsymbol{\varepsilon}\|^2), \\ 
\Delta_{\mathtt{z}}&=\varepsilon_{\phi}\sin\phi
+\mathcal{O}(\|\boldsymbol{\varepsilon}\|^2),
\end{align}
which implies that
$\Delta_{\mathtt{x}}=\mathcal{O}(\|\boldsymbol{\varepsilon}\|)$,
$\mathtt{x}\in\{\mathtt{y},\mathtt{z}\}$. Furthermore, from \eqref{Sn} and
\eqref{Qn}, for fixed $N$, we have
\begin{align}
\label{dnapprox}
 D_N(\Delta)&=N-\frac{\pi^2}{24}N(N^2-1)\Delta^2
+\mathcal{O}(\Delta^4), \\
Q_N(\Delta)&=-j\frac{\pi}{12}N(N^2-1)\Delta
+\mathcal{O}(\Delta^3).
\label{qnapprox}
\end{align}
Substituting \eqref{dnapprox}--\eqref{qnapprox} into
\eqref{g0}--\eqref{gphi} and rearranging gives
\begin{align}
  |g_0|^2\approx
N_{\mathtt{t}}^2
-N_{\mathtt{t}}\delta_{\mathbf{a}}(\boldsymbol{\varepsilon}), \ g_{\psi}=q_{\psi}(\boldsymbol{\varepsilon})
+\mathcal{O}(\|\boldsymbol{\varepsilon}\|^2),
\label{go_gpsi}
\end{align}
where
$\delta_{\mathbf{a}}(\boldsymbol{\varepsilon})
\triangleq
\|\mathbf{a}(\theta,\phi)-\mathbf{a}(\hat{\theta},\hat{\phi})\|^2$ and
$q_{\psi}(\boldsymbol{\varepsilon})
\triangleq
\dot{\mathbf{a}}_{\psi}^{\H}
(\mathbf{a}(\hat{\theta},\hat{\phi})-\mathbf{a}(\theta,\phi))$.
The $|g_\psi|^2$ contribution is largely canceled by the term involving
$\hat{c}_{\psi\tilde{\boldsymbol{\beta}}_{\mathtt{s}}}$ in
\eqref{eq_closed_crlb_theta}--\eqref{eq_closed_crlb_phi}, its residual is
of relative order $\mathcal{O}(N_{\mathtt{r}}^{-1})$ and is thus omitted.
Using \eqref{go_gpsi} in
\eqref{chattheta}--\eqref{chat2_thm}, we get
\begin{align}
   \hat{c}_{\psi}(\boldsymbol{\varepsilon})
\approx
\hat{c}_{\psi}(\mathbf{0})
-\chi_{\mathtt{s}}|\beta_{\mathtt{s}}|^2
N_{\mathtt{t}}\delta_{\mathbf{a}}(\boldsymbol{\varepsilon})
\|\dot{\mathbf{b}}_{\psi}\|^2 \\
\hat{c}_{\theta\phi}(\boldsymbol{\varepsilon})
\approx
\hat{c}_{\theta\phi}(\mathbf{0})
-\chi_{\mathtt{s}}|\beta_{\mathtt{s}}|^2
N_{\mathtt{t}}\delta_{\mathbf{a}}(\boldsymbol{\varepsilon})
\dot{\mathbf{b}}_{\theta}^{\H}\dot{\mathbf{b}}_{\phi}.
\end{align}
From \eqref{eq_closed_crlb_theta} and \eqref{eq_closed_crlb_phi}, define
$I_{\psi}(\boldsymbol{\varepsilon})
\triangleq(\mathtt{CRLB}_{\psi}^{\mathtt{bf}}(\boldsymbol{\varepsilon}))^{-1}
=T_{\psi}(\boldsymbol{\varepsilon})
-\frac{S^2(\boldsymbol{\varepsilon})}
{T_{\psi_{\mathtt{c}}}(\boldsymbol{\varepsilon})}$, where
$\psi_{\mathtt{c}}=\phi$ for $\psi=\theta$ and
$\psi_{\mathtt{c}}=\theta$ for $\psi=\phi$. Here,
$T_{\psi}(\boldsymbol{\varepsilon})$ and
$S(\boldsymbol{\varepsilon})$ follow  from
\eqref{eq_closed_crlb_theta} and \eqref{eq_closed_crlb_phi} after some algebraic manipulation, with
$r_{\mathtt{den},\psi}\triangleq T_{\psi}(\mathbf{0})$ and
$r_{\mathtt{num}}\triangleq S(\mathbf{0})$. In particular, we have
\begin{align}
  T_{\psi}(\boldsymbol{\varepsilon})
&\approx r_{\mathtt{den},\psi}
-\rho\chi_{\mathtt{s}}|\beta_{\mathtt{s}}|^2
N_{\mathtt{t}}\delta_{\mathbf{a}}(\boldsymbol{\varepsilon})
\|\dot{\mathbf{b}}_{\psi}\|^2,
\label{tepsilon}\\
S(\boldsymbol{\varepsilon})
&\approx r_{\mathtt{num}}
-\rho\chi_{\mathtt{s}}|\beta_{\mathtt{s}}|^2
N_{\mathtt{t}}\delta_{\mathbf{a}}(\boldsymbol{\varepsilon})
\dot{\mathbf{b}}_{\theta}^{\H}\dot{\mathbf{b}}_{\phi}.
\label{sepsilon}
\end{align}
Substituting \eqref{tepsilon}--\eqref{sepsilon} into $I_{\psi}(\boldsymbol{\varepsilon})$ yields
$I_{\psi}(\boldsymbol{\varepsilon})
\approx I_{\psi}(\mathbf{0})
-\rho\chi_{\mathtt{s}}|\beta_{\mathtt{s}}|^2
N_{\mathtt{t}}\delta_{\mathbf{a}}(\boldsymbol{\varepsilon})
\|\dot{\mathbf{b}}_{\psi}
-r_{\psi}\dot{\mathbf{b}}_{\psi_{\mathtt{c}}}\|^2$, where
$r_{\psi}=\frac{r_{\mathtt{num}}}{r_{\mathtt{den},{\psi}_\mathtt{c}}}$.
Finally, since $\mathtt{CRLB}_{\psi}^{\mathtt{bf}}(\boldsymbol{\varepsilon})=(I_{\psi}(\boldsymbol{\varepsilon}))^{-1}$, we use the first-order Taylor expansion $(x-y)^{-1}\approx\frac{1}{x}+\frac{y}{x^2}$ for small $y$, and obtain \eqref{eq:crlb_small_error_mismatch}. 

\section{{Proof of Theorem~\ref{thm:random_crlb}}}
\label{app:random_crlb}
For $\psi\in\{\theta, \phi\}$ and $x>0$, the CDF of $\mathtt{CRLB}_\psi^\mathtt{bf}$ is given as
\begin{align}
{F_{\mathtt{CRLB}_\psi}^\mathtt{bf}}(x)
&\triangleq \mathrm{P}\!\left(\mathtt{CRLB}_\psi^\mathtt{bf} \leq x\right)
= \mathrm{P}\!\left(\frac{1}{\mathtt{CRLB}_\psi^\mathtt{bf}} \geq \frac{1}{x}\right) \nonumber\\
&= \!\mathbb{E}_{\boldsymbol{\varepsilon}}\!\!\left\{
\mathbbm{1}\!\left\{
\tfrac{1}{\mathtt{CRLB}_\psi^\mathtt{bf}(\boldsymbol{\vartheta},
\boldsymbol{\gamma},\rho,\boldsymbol{\varepsilon})}
-\tfrac{1}{x}\geq 0
\right\}\!\right\},
\label{eq:app_CDF_indicator}
\end{align}
where $\mathbbm{1}\{\cdot\}$ denotes the indicator function, \textit{i.e.},
$\mathbbm{1}\{u\geq0\}=1$ if $u\geq0$ and $\mathbbm{1}\{u\geq0\}=0$
otherwise. For notational brevity, we use
$\mathtt{CRLB}_\psi^\mathtt{bf}(\boldsymbol{\varepsilon})
\triangleq
\mathtt{CRLB}_\psi^\mathtt{bf}(\boldsymbol{\vartheta},\boldsymbol{\gamma},\rho,
\boldsymbol{\varepsilon})$. Then, assuming that $\varepsilon_\theta$ and $\varepsilon_\phi$ are independent variables, \eqref{eq:app_CDF_indicator} can be written as
\begin{align}
F_{\mathtt{CRLB}_\psi}^\mathtt{bf}(x)
&\!=\!\!\!\int_{A_\theta}\!\int_{A_\phi}\!\!\!\!\!
  \mathbbm{1}\!\left\{
  \tfrac{1}{\mathtt{CRLB}_\psi^\mathtt{bf}(\boldsymbol{\varepsilon})}
  \geq\tfrac{1}{x}\right\}
  \!\!f_\theta(\varepsilon_\theta)f_\phi(\varepsilon_\phi)
\,\mathrm{d}\varepsilon_\theta\,\mathrm{d}\varepsilon_\phi \nonumber \\
&\hspace{-1cm}\approx\!\!\!
\int_{A_\theta}\!\int_{A_\phi}
\!\!\!\!\varsigma_\psi^\mathtt{bf}\!\left(
\tfrac{1}{\mathtt{CRLB}_\psi^\mathtt{bf}(\boldsymbol{\varepsilon})}
-\tfrac{1}{x}\right)
\!\!f_\theta(\varepsilon_\theta)f_\phi(\varepsilon_\phi)
\,\mathrm{d}\varepsilon_\theta\,\mathrm{d}\varepsilon_\phi,
\label{eq:app_sigmoid_approx}
\end{align}
where $A_\psi$ and $f_\psi$ denote the integration range and probability density function (PDF) of
$\varepsilon_\psi$, respectively. The approximation in \eqref{eq:app_sigmoid_approx} follows from
$\mathbbm{1}\{u\!\geq\!0\}\!\approx\!\varsigma(u)\!\triangleq\!(1+e^{-\varrho u})^{-1}$,
$\varrho>0$~\cite{shi_sigmoid}, and the integral is solved by Gaussian quadrature as 
\subsubsection{\textbf{Gaussian}}
For $\varepsilon_\psi\sim\mathcal{N}(0,\sigma_\psi^2)$, using the change of
variables $\varepsilon_\psi=\sqrt{2}\sigma_\psi u_\psi$, we have
$f_\psi(\varepsilon_\psi)\mathrm{d}\varepsilon_\psi
= \frac{1}{\sqrt{\pi}}e^{-u_\psi^2}\mathrm{d}u_\psi$.
Substituting this PDF into \eqref{eq:app_sigmoid_approx} yields
\begin{align}
\hspace{-0.25cm}F_{\mathtt{CRLB}_\psi}^\mathtt{bf}\!(x)\!
\approx \!\frac{1}{\pi}\!\!\int_{-\infty}^{\infty}\!\int_{-\infty}^{\infty}
\hspace{-0.4cm}\varsigma^\mathtt{bf}_\psi\!\left(\!
\tfrac{1}{\mathtt{CRLB}_\psi^\mathtt{bf}(\boldsymbol{\varepsilon}_{\mathtt{G}})}
\!-\!\tfrac{1}{x}\!\right)
\!e^{-u_\theta^{2}-u_\phi^{2}}
\mathrm{d}u_\theta\mathrm{d}u_\phi,\!
\label{eq:app_gauss_int}
\end{align}
where
$\boldsymbol{\varepsilon}_{\mathtt{G}}
=[\sqrt{2}\sigma_\theta u_\theta,
\sqrt{2}\sigma_\phi u_\phi]^{\T}$.
Applying the Gauss--Hermite (GH) quadrature rule
$\int_{-\infty}^{\infty}e^{-u^2}g(u)\,\mathrm{d}u
\approx\sum_i w_i^{\mathtt{GH}}g(z_i^{\mathtt{GH}})$~\cite{abram_gauss_quadrature}
to \eqref{eq:app_gauss_int} yields \eqref{eq:cdf_unified_1overcrlb} for Gaussian case. 
\subsubsection{\textbf{Generalized uniform}}
By using
$\varepsilon_\psi=({\pi}/U_\psi)u_\psi$, the PDF of uniformly distributed error is 
$f_\psi(\varepsilon_\psi)\mathrm{d}\varepsilon_\psi
=\frac{1}{2}\mathrm{d}u_\psi$, $u_\psi\in[-1,1]$. Substituting this PDF
into \eqref{eq:app_sigmoid_approx} gives
\begin{align}
F_{\mathtt{CRLB}_\psi}^\mathtt{bf}\!(x)\!
\approx\!\!
\tfrac{1}{4}\!\int_{-1}^{1}\!\int_{-1}^{1}\!\!
\varsigma^\mathtt{bf}_\psi\!\left(
\tfrac{1}{\mathtt{CRLB}_\psi^\mathtt{bf}(\boldsymbol{\varepsilon}_{\mathtt{U}})}
-\tfrac{1}{x}\right)
\mathrm{d}u_\theta\,\mathrm{d}u_\phi,
\label{eq:app_unif_int}
\end{align}
where
$\boldsymbol{\varepsilon}_{\mathtt{U}}
=[(\pi/U_\theta)u_\theta,(\pi/U_\phi)u_\phi]^{\T}$.
Applying the Gauss--Legendre (GL) quadrature rule
$\int_{-1}^{1}g(u)\,\mathrm{d}u
\approx\sum_i w_i^{\mathtt{GL}}g(z_i^{\mathtt{GL}})$~\cite{abram_gauss_quadrature}
to \eqref{eq:app_unif_int} yields \eqref{eq:cdf_unified_1overcrlb} for the
Uniform case.
\subsubsection{\textbf{von Mises}}
For
$f_\psi(\varepsilon_\psi)=\frac{e^{\kappa_\psi\cos(\varepsilon_\psi)}}{2\pi I_0(\kappa_\psi)}$,
$\varepsilon_\psi\in[-\pi,\pi]$, using $\varepsilon_\psi=\pi u_\psi$, we have
$f_\psi(\varepsilon_\psi)\mathrm{d}\varepsilon_\psi
= e^{\kappa_\psi\cos(\pi u_\psi)}(2I_0(\kappa_\psi))^{-1}
\mathrm{d}u_\psi$, $u_\psi\in[-1,1]$. Substituting this PDF into
\eqref{eq:app_sigmoid_approx} gives
\begin{align}
\!\!F_{\mathtt{CRLB}_\psi}^\mathtt{bf}(x)
&\approx
\frac{1}{4I_0(\kappa_\theta)I_0(\kappa_\phi)}
\int_{-1}^{1}\!\int_{-1}^{1}
\!\!\varsigma_\psi^\mathtt{bf}\!\left(
\tfrac{1}{\mathtt{CRLB}_\psi^\mathtt{bf}(\boldsymbol{\varepsilon}_{\mathtt{VM}})}
-\tfrac{1}{x}\right) \nonumber \\
& \hspace{1cm}
\times e^{\kappa_\theta\cos(\pi u_\theta)}
e^{\kappa_\phi\cos(\pi u_\phi)}
\mathrm{d}u_\theta\,\mathrm{d}u_\phi,
\label{eq:app_vm_int}
\end{align}
where
$\boldsymbol{\varepsilon}_{\mathtt{VM}}=[\pi u_\theta,\pi u_\phi]^{\T}$.
Applying GL quadrature to \eqref{eq:app_vm_int} yields \eqref{eq:cdf_unified_1overcrlb}
for the von Mises case.  

\bibliographystyle{IEEEtran}
\bibliography{IEEEabrv,Bibliography}

 \end{document}